\def\isarxiv{1} 
\definecolor{mydarkblue}{rgb}{0,0.08,0.45}
\theoremstyle{plain}
\newtheorem{theorem}{Theorem}[section]
\newtheorem{lemma}[theorem]{Lemma}
\newtheorem{definition}[theorem]{Definition}
\newtheorem{proposition}[theorem]{Proposition}
\newtheorem{fact}[theorem]{Fact}
\newcommand{\wh}{\widehat}
\newcommand{\wt}{\widetilde}
\newcommand{\R}{\mathbb{R}}
\renewcommand{\hat}{\wh}
\DeclareMathOperator{\poly}{poly}
\DeclareMathOperator{\polylog}{polylog}
\DeclareMathOperator{\nnz}{nnz}
\DeclareMathOperator{\rank}{rank}
\DeclareMathOperator{\diag}{diag}
\DeclareMathOperator{\vol}{vol}
\newcommand*{\RN}[1]{\expandafter\@slowromancap\romannumeral #1@}
\begin{document}

\ifdefined\isarxiv

\date{}

\title{Quantum Speedups for Approximating the John Ellipsoid}
\author{
Xiaoyu Li\thanks{\texttt{
xli216@stevens.edu}. Stevens Institute of Technology.}
\and 
Zhao Song\thanks{\texttt{ zsong@adobe.com, magic.linuxkde@gmail.com}. Adobe Research.}
\and 
Junwei Yu\thanks{\texttt{ yujunwei04@berkeley.edu}. University of California, Berkeley.}
}

\else

\title{Intern Project} 
\maketitle 
\fi

\ifdefined\isarxiv
\begin{titlepage}
  \maketitle
  \begin{abstract}
In 1948, Fritz John proposed a theorem stating that every convex body has a unique maximal volume inscribed ellipsoid, known as the John ellipsoid. The John ellipsoid has become fundamental in mathematics, with extensive applications in high-dimensional sampling, linear programming, and machine learning. Designing faster algorithms to compute the John ellipsoid is therefore an important and emerging problem. In [Cohen, Cousins, Lee, Yang COLT 2019], they established an algorithm for approximating the John ellipsoid for a symmetric convex polytope defined by a matrix $A \in \mathbb{R}^{n \times d}$ with a time complexity of $O(nd^2)$. This was later improved to $O(\text{nnz}(A) + d^\omega)$ by [Song, Yang, Yang, Zhou 2022], where $\nnz(A)$ is the number of nonzero entries of $A$ and $\omega$ is the matrix multiplication exponent. Currently $\omega \approx 2.371$ [Alman, Duan, Williams, Xu, Xu, Zhou 2024]. In this work, we present the first quantum algorithm that computes the John ellipsoid utilizing recent advances in quantum algorithms for spectral approximation and leverage score approximation, running in $O(\sqrt{n}d^{1.5} + d^\omega)$ time. In the tall matrix regime, our algorithm achieves quadratic speedup, resulting in a sublinear running time and significantly outperforming the current best classical algorithms. 

  \end{abstract}
  \thispagestyle{empty}
\end{titlepage}

{\hypersetup{linkcolor=black}
\tableofcontents
}
\newpage

\else

\begin{abstract}

\end{abstract}

\fi

\section{Introduction}\label{sec:intro}

The concept of the John ellipsoid, introduced in Fritz John's seminal work~\cite{john1948extremum}, plays a fundamental role in convex geometry and functional analysis~\cite{ball1991volume, lutwak1993brunn, barthe1998extremal, ball2001convex}. It states that any convex body (a compact convex set $K$ in $\R^d$ with a nonempty interior) in $\R^d$ contains a unique inscribed ellipsoid $E$ with the maximum volume, known as the John ellipsoid of $K$. Mathematically, the John ellipsoid of $K$ is the unique solution to the following optimization problem:
\begin{align*}
    \max_{E \in \mathcal E^d} &~ \vol(E)\ \mathrm{s.t.}\ E \subseteq K
\end{align*}
where $\vol(E)$ is the volume of the ellipsoid $E$ and $\mathcal E^d$ is the set of all ellipsoid in $\R^d$. Furthermore, $K$ is contained within a dilation of $E$ by a factor of $d$, i.e. $K \subseteq d \cdot E$. If $K$ is symmetric (i.e., $K = -K$), this can be improved to $K \subseteq \sqrt{d} \cdot E$. 

Due to numerous useful properties of the John ellipsoid, it has extensive applications in various fields such as high-dimensional sampling~\cite{vempala2005geometric, chen2018fast}, linear programming~\cite{ls14}, bandit algorithms~\cite{bubeck2012towards, hazan2016volumetric}, differential privacy~\cite{nikolov2013geometry}, inverse problems~\cite{lin2020nonnegative} and control theory~\cite{halder2020smallest, tang2024uncertainty}. Furthermore, \cite{todd2016minimum} demonstrated that determining the John Ellipsoid of a symmetric polytope is equivalent to solving the well-known D-optimal experiment design problem~\cite{atwood1969optimal}. This problem has numerous applications in machine learning~\cite{allen2017near, wang2017computationally, lu2018relatively}.

In this paper, we study the case of finding John ellipsoid $E$ of a symmetric convex polytope $P := \{x \in \R^d: -\mathbf{1}_n \leq Ax \leq \mathbf{1}_n\}$ where $A \in \R^{n\times d}$ is a tall matrix, i.e., $n \gg d$, and $\mathbf{1}_n$ is an all-ones vector in $\R^d$. In this setting, \cite{cohen2015lp} showed that finding the John ellipsoid is equivalent to computing $\ell_\infty$ Lewis weights of $A$. The $\ell_\infty$ Lewis weight of $A$ is the unique vector $w = (w_1, \ldots, w_n) \in \R^n$ satisfying $w_i = w_ia_i^\top (A^\top W A)a_i$ for each $i \in [n]$, where $a_i^\top$ is the $i$-th row of $A$ and $W = \diag(w)$. Since $\ell_\infty$ Lewis weight naturally satisfies a fixed point equation, \cite{cohen2019near} developed an algorithm using a fixed point iteration in time $\wt{O}(\epsilon^{-1} nd^2)$ where $\epsilon$ is the approximation error. This was latter improved by~\cite{song2022faster} using leverage score sampling and achieved running time $\wt{O}(\epsilon^{-2}\nnz(A) + \epsilon^{-3}d^\omega)$ where $\nnz(A)$ is the number of nonzero entries of $A$ and $\omega$ is the exponent matrix multiplication~\cite{williams2012multiplying, le2014powers, alman2021refined, williams2024new, alman2024asymmetryyieldsfastermatrix}. Currently, $\omega \approx 2.372$~\cite{alman2024asymmetryyieldsfastermatrix}. Since $\nnz(A) = \Omega(n)$, the current state-of-the-art requires $\Omega(n)$ time, and this is the best possible for classical algorithms since computing the weighted leverage scores of $A$ takes at least $\Omega(\nnz(A) + d^\omega)$ time~\cite{nelson2013osnap, clarkson2017low}.

Recently, it has been demonstrated that quantum algorithms can achieve provable speedups for solving many well-known foundational problems such as linear regression~\cite{wang2017quantum, gilyen2022improved, song2023revisitingquantumalgorithmslinear}, linear programming~\cite{casares2020quantum, apers2023quantum}, semi-definite programming~\cite{brandao2017quantum, brandao2019quantum, mohammadisiahroudi2023quantum}, graph algorithms~\cite{apers2022quantum, ben2020symmetries}, high-dimensional sampling~\cite{wocjan2008speedup, montanaro2015quantum, wild2021quantum} and attention computation~\cite{gao2023fast}. In this work, we aim to leverage the advantages of quantum algorithms to accelerate the computation of the John ellipsoid. This leads us to the following question:
\begin{center}
    \textit{Is it possible to find the John ellipsoid of a symmetric convex polytope in $O_{d, \epsilon}(\sqrt{n})$\footnote{We use $O_g(f(n))$ to denote $f(n) \cdot g^{O(1)}$.} time?}
\end{center}

In this work, we provide a positive answer to this question. 
Our approach draws inspiration from recent research conducted by \cite{apers2023quantum}, which introduced quantum algorithms for spectral approximation. These algorithms were effectively utilized to enhance the performance of interior point methods, providing a significant acceleration in solving linear programming problems. We extend their methods to compute the $\ell_\infty$ Lewis weights by exploiting its fixed-point property. Additionally, in the regime where the matrix $A$ is tall, our algorithm achieves a quadratic speedup, significantly outperforming the current best classical state-of-the-art.

\subsection{Our Results}\label{sub:our_res}

Our main contribution is the first quantum algorithm that can compute the John ellipsoid of a symmetric convex polytope that achieves a quadratic speedup and leads to a sublinear running time in $n$. We state our main result informally as follows.

\begin{theorem}[Main result, informal version of Theorem~\ref{thm:main}]\label{thm:informal}
    Given $A \in \R^{n \times d}$, let $P := \{x \in \R^d : -\mathbf{1}_n \leq Ax \leq \mathbf{1}_n\}$ be a symmetric convex polytope.
    For all $\epsilon \in (0, 0.5)$, there is a quantum algorithm that outputs an ellipsoid $E$ satisfies 
    \begin{itemize}
        \item $ \frac{1}{\sqrt{1+\epsilon}} \cdot E \subseteq P \subseteq \sqrt{(1+\epsilon)d} \cdot E. $
        \item $\vol(\frac{1}{\sqrt{1+\epsilon}} \cdot E) \geq e^{-\epsilon d} \cdot \vol(E^*)$ where $E^*$ is the exact John ellipsoid of $P$.
        \item It runs in $\Tilde{O}(\epsilon^{-2}\sqrt{n}d^{1.5}+\epsilon^{-3}d^\omega)$ time.
        \item It makes $\Tilde{O}(\epsilon^{-2}\sqrt{nd})$ queries to $A$.
        \item The success probability is $1 - 1/\poly(n)$.
    \end{itemize}
\end{theorem}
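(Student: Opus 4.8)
The plan is to recast the task as the \emph{implicit} computation of approximate $\ell_\infty$ Lewis weights of $A$, and to replace the exact linear algebra in the classical fixed-point iteration of~\cite{cohen2019near} by a quantum spectral-approximation primitive, so that each iteration inspects only $O_{d,\epsilon}(\sqrt n)$ of the $n$ rows. \emph{Step 1 (reduction).} I would first invoke the equivalence of~\cite{cohen2015lp,cohen2019near}: if $w\ge 0$ is an $\epsilon$-approximate solution of the $\ell_\infty$ Lewis-weight problem for $A$ (so $\sum_i w_i = d$ and $a_i^\top (A^\top W A)^{-1} a_i \le 1+\epsilon$ for all $i$, with $W=\diag(w)$), then the ellipsoid $E$ whose shape matrix is a suitably scaled copy of $(A^\top W A)^{-1}$ already satisfies $\frac{1}{\sqrt{1+\epsilon}}\, E\subseteq P\subseteq \sqrt{(1+\epsilon)d}\, E$ and $\vol(\frac{1}{\sqrt{1+\epsilon}}E)\ge e^{-\epsilon d}\vol(E^*)$, by the analysis in~\cite{cohen2019near}. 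Hence it suffices to output a $d\times d$ spectral approximation of $A^\top W A$ for such a $w$; no vector of length $n$ is ever written down.

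\emph{Step 2 (a robust fixed-point iteration).} Recall the iteration $w^{(0)}=\frac{d}{n}\mathbf{1}_n$ and $w^{(k+1)}_i = w^{(k)}_i\, a_i^\top (A^\top W^{(k)} A)^{-1} a_i$; one always has $\sum_i w^{(k+1)}_i = d$, and~\cite{cohen2019near} shows via a log-barrier/potential argument that $w^{(T)}$ is $\epsilon$-approximate for $T=\wt{O}(\epsilon^{-1})$. Two structural facts make a sublinear implementation possible: (i) the iteration only ever needs $A^\top W^{(k)} A = \sum_i w^{(k)}_i a_i a_i^\top$, which up to a $(1\pm\delta)$ spectral factor equals a reweighted sum of $\wt{O}(d\delta^{-2})$ rows sampled by leverage scores of $(W^{(k)})^{1/2}A$; and (ii) each weight is a telescoping product $w^{(k)}_i = \prod_{j<k} a_i^\top \wt{M}_j a_i$ of the previously produced $d\times d$ (approximate-inverse) matrices $\wt{M}_0,\dots,\wt{M}_{k-1}$, so any single $w^{(k)}_i$ can be evaluated on demand from $O(d)$ queries to $A$ and $\poly(d,\epsilon^{-1})$ arithmetic. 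I would then prove a robustness statement: if at step $k$ one uses $\wt{M}_k$, the inverse of a matrix $\wt{M}_k^{-1}$ with $\wt{M}_k^{-1}\approx_{1\pm\delta} A^\top W^{(k)} A$ for $\delta=\wt{\Theta}(\epsilon)$, then the perturbed iteration still reaches an $O(\epsilon)$-approximate fixed point within $\wt{O}(\epsilon^{-1})$ steps, by re-running the potential analysis of~\cite{cohen2019near} while tracking the (bounded, multiplicative) per-step errors, so that the potential still improves by $\Omega(\delta)$ per step until it is near optimal.

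\emph{Step 3 (quantum implementation and cost).} Given on-demand access to the rows $\sqrt{w^{(k)}_i}\, a_i$ of $(W^{(k)})^{1/2}A$ as in (ii), apply the quantum spectral-approximation / leverage-score-sampling routine of~\cite{apers2023quantum} (building on the quantum graph-sparsification techniques of~\cite{apers2022quantum}) to output $\wt{O}(d\delta^{-2})$ reweighted sampled rows, hence the $d\times d$ matrix $\wt{M}_k^{-1}$ and, by one inversion, $\wt{M}_k$. The leverage-score overestimates needed to seed the sampler may be taken to be the previous iterate $w^{(k-1)}$ (valid since the iterates stay essentially feasible, $a_i^\top(A^\top W^{(k)} A)^{-1}a_i\le 1+o(1)$, so $\sum_i w^{(k-1)}_i=d$ is an admissible overestimate budget), or obtained recursively from the sparsifier; the success probability is boosted to $1-1/\poly(n)$ by a median over $O(\log n)$ independent repetitions of each quantum call. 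Accounting for the cost, each iteration uses $\wt{O}(\delta^{-1}\sqrt{nd})$ queries and $\wt{O}(\delta^{-1}\sqrt{n}\,d^{1.5}+\delta^{-2}d^\omega)$ time (the $\sqrt{n}$-type savings of the quantum sparsifier, $\poly(d)$ classical work per sampled or evaluated row, and one $d\times d$ matrix formation via fast matrix multiplication and inversion). Taking $\delta=\wt{\Theta}(\epsilon)$, multiplying by $T=\wt{O}(\epsilon^{-1})$ iterations, and finally outputting the ellipsoid determined by $\wt{M}_T$ gives all the stated guarantees: the containment and volume bounds follow from Step 1 applied to the (implicit) $w^{(T)}$ together with the $(1\pm\delta)$ accuracy of $\wt{M}_T$; the time is $\wt{O}(\epsilon^{-2}\sqrt{n}d^{1.5}+\epsilon^{-3}d^\omega)$ and the number of queries to $A$ is $\wt{O}(\epsilon^{-2}\sqrt{nd})$.

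\textbf{Main obstacle.} I expect the delicate part to be the robustness analysis in Step 2: proving that running the iteration of~\cite{cohen2019near} with a \emph{fresh} $(1\pm\delta)$ spectral approximation at every step, rather than an exact solve, neither inflates the iteration count beyond $\wt{O}(\epsilon^{-1})$ nor degrades the final approximation below $\epsilon$, even though the per-step errors could a priori compound across the $\wt{O}(\epsilon^{-1})$ rounds. Tightly coupled to this are two further points: ensuring a valid family of leverage-score overestimates of total mass $\wt{O}(d)$ is available to seed the quantum sampler at every step (the chicken-and-egg between needing approximate leverage scores to sample rows that are then used to estimate them), and controlling the cost of the implicit telescoping-product weight representation so that it does not erode the $\sqrt{n}$ speedup.
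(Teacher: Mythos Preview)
Your overall framework is essentially that of the paper: reduce to approximate $\ell_\infty$ Lewis weights, run the fixed-point iteration of~\cite{cohen2019near} with $w^{(1)}=\tfrac{d}{n}\mathbf{1}_n$ and $T=\wt{O}(\epsilon^{-1})$ rounds, and replace the exact weighted-leverage-score computation in each round by the quantum leverage-score primitive of~\cite{apers2023quantum}, using the implicit (telescoping-product) representation of $w^{(k)}$ to simulate row queries to $(W^{(k)})^{1/2}A$. The cost accounting you give matches the paper's.

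The one substantive divergence is precisely your ``main obstacle.'' You propose to output (the ellipsoid determined by) the \emph{final} iterate $w^{(T)}$ and then try to prove that $(1\pm\delta)$ spectral errors at each step do not compound over $\wt{O}(\epsilon^{-1})$ rounds. The paper sidesteps this entirely by outputting the \emph{average} $w=\tfrac{1}{T}\sum_{k=1}^T w^{(k)}$. The point is that $\phi_i(v):=\log\big(a_i^\top(A^\top\diag(v)A)^{-1}a_i\big)$ is convex (Lemma~\ref{lem:convex}), so Jensen gives the telescoping bound $\phi_i(w)\le \tfrac{1}{T}\log\tfrac{n}{d}+\tfrac{1}{T}\sum_k \log(\hat w_i^{(k)}/w_i^{(k)})$; the per-step multiplicative errors then contribute only their \emph{average} $\tfrac{1}{T}\sum_k\log\tfrac{1}{1-\epsilon}\le\log(1+2\epsilon)$, not a product. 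Thus the robustness you worry about is a one-line consequence of averaging plus convexity, and no separate potential-tracking argument is needed. Your route via the last iterate may be salvageable, but it is strictly harder and is not what~\cite{cohen2019near} actually proves in the noisy regime.

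Two smaller points. First, you assume $\sum_i w_i=d$ exactly; the paper deliberately does \emph{not} normalize (that sum costs $\Omega(n)$), and instead works with $\sum_i w_i=(1\pm\epsilon)d$, which is why Definition~\ref{def:approx_sol} and Lemma~\ref{lem:approx_john_ell} carry the extra $\sqrt{1+\epsilon}$ factor. Second, your ``chicken-and-egg'' worry about seeding the quantum sampler with leverage-score overestimates is not an issue here: the paper simply invokes the black-box quantum leverage-score approximation of~\cite{apers2023quantum} (Lemma~\ref{lem:qua_lev_score}) on $(W^{(k)})^{1/2}A$, which internally handles its own overestimates via the recursive scheme of~\cite{cohen2015uniform}; you do not need to supply them from $w^{(k-1)}$.
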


Instead of using traditional convex optimization algorithms, our algorithm (Algorithm~\ref{alg:main}) follows the fixed point iteration framework developed in~\cite{cohen2019near}, which provides a simple and clean convergence analysis with very fast convergence rate. We developed the quantum fixed point iteration for computing $\ell_\infty$ Lewis weight and use it as a subroutine in our algorithm. 

Given our interest in the tall matrix regime, i.e., $n \gg d$, our primary objective is to reduce the algorithm's running time dependence on $n$. We remark that our algorithm can be easily extended to the sparse setting and become more efficient. When the matrix $A$ is sparse, we can further improve the running time dependence on $d$.

We also explored a tensor generalization of John ellipsoid computation, where the polytope is defined by the Kronecker product. In this case, we develop an algorithm (Algorithm~\ref{alg:main_tensor}) can use Algorithm~\ref{alg:main} as a subroutine to solve this problem. We state our results informally as follows.
\begin{theorem}[Main result, informal version of Theorem~\ref{thm:main_tensor}]\label{thm:main_tensor_informal}
Given $A_1, A_2 \in \R^{n \times d}$, let $P := \{x \in \R^{d^2} : -\mathbf{1}_{n^2} \leq (A_1 \otimes A_2) x \leq \mathbf{1}_{n^2}\}$ be a symmetric convex polytope.
    For all $\epsilon \in (0, 0.5)$, there is a quantum algorithm that outputs an ellipsoid $E$ satisfies  
    \begin{itemize}
        \item $ \frac{1}{1+\epsilon} \cdot E \subseteq P \subseteq (1+\epsilon)d \cdot E. $
        \item $\vol(\frac{1}{1+\epsilon}E) \geq e^{-\epsilon^2 d^2} \cdot \vol(E^*)$ where $E^*$ is the exact John ellipsoid of $P$.
        \item It runs in $\wt{O}(\epsilon^{-2}\sqrt{n}d^{1.5}+\epsilon^{-3}d^\omega)$ time.
        \item It makes $\Tilde{O}(\epsilon^{-2}\sqrt{nd})$ queries to $A_1$ and $A_2$.
        \item The success probability is $1 - 1/\poly(n)$.
    \end{itemize}
\end{theorem}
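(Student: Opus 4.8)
The plan is to avoid touching the $n^2 \times d^2$ matrix $A_1 \otimes A_2$ altogether and instead reduce the problem to two independent calls of Algorithm~\ref{alg:main}, one on $A_1$ and one on $A_2$, by exploiting the multiplicativity of $\ell_\infty$ Lewis weights under Kronecker products. By \cite{cohen2015lp}, the John ellipsoid of $P$ is determined by the $\ell_\infty$ Lewis weight $w^*$ of $M := A_1 \otimes A_2$, so it suffices to produce a good approximate fixed point for $M$.

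First I would prove the factorization $w^* = w^{(1)} \otimes w^{(2)}$, where $w^{(i)}$ is the $\ell_\infty$ Lewis weight of $A_i$. This follows from the elementary identities $(A_1 \otimes A_2)^\top (W^{(1)} \otimes W^{(2)})(A_1 \otimes A_2) = (A_1^\top W^{(1)} A_1) \otimes (A_2^\top W^{(2)} A_2)$, $(X \otimes Y)^{-1} = X^{-1} \otimes Y^{-1}$, and $(a \otimes b)^\top (X \otimes Y)(a \otimes b) = (a^\top X a)(b^\top Y b)$: the $(W^{(1)}\otimes W^{(2)})$-weighted leverage score of row $(i,j)$ of $M$ is the product of the weighted leverage scores of row $i$ of $A_1$ and row $j$ of $A_2$, which both equal $1$ on the support by the fixed-point equations for $A_1$ and $A_2$; uniqueness of the John ellipsoid then pins down $w^*$. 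The same computation gives the robust version I actually need: if $\wt w^{(i)}$ is an $O(\epsilon)$-approximate fixed point for $A_i$ (weighted leverage scores within $1 \pm O(\epsilon)$), then $\wt w^{(1)} \otimes \wt w^{(2)}$ is an $O(\epsilon)$-approximate fixed point for $M$, since the per-row leverage scores again multiply and the overestimate factor merely squares.

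The algorithm is then: run Algorithm~\ref{alg:main} on $A_1$ and on $A_2$ with accuracy $\Theta(\epsilon)$, obtaining (compact representations of) spectral approximations $\wt M_i \approx A_i^\top \wt W^{(i)} A_i$, and return the ellipsoid $E = \{x \in \R^{d^2} : x^\top (\wt M_1 \otimes \wt M_2) x \le \gamma\}$ with the appropriate normalization $\gamma$, kept in the factored form $(\wt M_1, \wt M_2, \gamma)$ --- we must not expand the $d^2 \times d^2$ matrix, since that alone costs $d^4$ and breaks the running time. For the guarantees, I would feed the approximate weight $\wt w^{(1)} \otimes \wt w^{(2)}$ into the non-tensor ``approximate weights give a rounding ellipsoid'' lemma underlying Theorem~\ref{thm:informal}, applied to $M$. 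Because $M$ has $d^2$ columns, the symmetric John dilation becomes $\sqrt{d^2} = d$, and the squaring of the per-factor $\sqrt{1+\epsilon}$ leverage-score overestimate turns it into $(1+\epsilon)$, giving $\frac{1}{1+\epsilon} E \subseteq P \subseteq (1+\epsilon) d \cdot E$. For the volume, the Kronecker determinant identity $\det(X \otimes Y) = \det(X)^d \det(Y)^d$ (both $d \times d$) makes the log-volume of the product ellipsoid equal to $d$ times the sum of the two factor log-volumes, so the two per-factor volume guarantees multiply; combined with the loss incurred by rescaling by $\frac{1}{1+\epsilon}$ to restore exact containment, this yields $\vol(\frac{1}{1+\epsilon} E) \ge e^{-\epsilon^2 d^2} \vol(E^*)$. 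The running time and query bounds are immediate: the two calls to Algorithm~\ref{alg:main} dominate, each costing $\wt{O}(\epsilon^{-2}\sqrt n d^{1.5} + \epsilon^{-3} d^\omega)$ and $\wt{O}(\epsilon^{-2}\sqrt{nd})$ queries, while forming $(\wt M_1, \wt M_2, \gamma)$ is $\poly(d/\epsilon)$, absorbed by the $d^\omega$ term; a union bound over the two calls keeps the failure probability $1/\poly(n)$.

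\textbf{Main obstacle.} Once the factorization of the first step is in hand the content is concentrated in the analysis of the guarantees: one must carefully track how the per-factor leverage-score overestimates and the spectral-approximation errors produced by Algorithm~\ref{alg:main} compound through the Kronecker product --- they square rather than add --- and verify that this still lands exactly on the claimed $(1+\epsilon)d$ dilation and $e^{-\epsilon^2 d^2}$ volume bound while only calling the subroutine with accuracy $\Theta(\epsilon)$. A secondary subtlety is purely bookkeeping but easy to get wrong: $E$ must be produced and reasoned about in the factored representation $(\wt M_1, \wt M_2, \gamma)$ throughout, because materializing or even inverting the $d^2 \times d^2$ ellipsoid matrix would already exceed the target complexity.
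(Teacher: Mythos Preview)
Your approach is essentially the same as the paper's: prove that the $\ell_\infty$ Lewis weight of $A_1\otimes A_2$ factors as $w^{(1)}\otimes w^{(2)}$ via the Kronecker identities, run Algorithm~\ref{alg:main} independently on $A_1$ and $A_2$, take the tensor product of the outputs, and observe that the per-row leverage-score overestimates multiply so that a $(1+\epsilon)$-approximate solution on each factor yields a $(1+\epsilon)^2$-approximate solution for $M$, after which the tensor analogue of Lemma~\ref{lem:approx_john_ell} gives the stated dilation and volume bounds. Your additional remarks on keeping the ellipsoid in factored form and invoking the Kronecker determinant identity are sound elaborations the paper leaves implicit.
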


\paragraph{Roadmap.} 
The rest of the paper is organized as follows. In Section~\ref{sec:rela}, we present related work. In Section~\ref{sec:prel}, we outline the preliminary notations and mathematical tools used in our analysis. In Section~\ref{sec:prob}, we provide the problem formulation, including the background knowledge.
In Section~\ref{sec:main_alg}, we describe our main algorithm and performance guarantees. In Section~\ref{sec:tensor},  we generalize the John ellipsoid computation to a tensor version, where the polytope is defined by the Kronecker product. In Section~\ref{sec:conc}, we conclude with a summary of our contributions and suggestions for future research directions.
\section{Related Work}\label{sec:rela}

\subsection{Faster John Ellipsoid Computation}
The properties and structures of the John ellipsoid have been extensively studied in the mathematical community~\cite{ball1991volume, lutwak1993brunn, barthe1998extremal, ball2001convex, lutwak2005john, zou2014orlicz, todd2016minimum}. However, the computationally efficient methods for computing the John Ellipsoid remain largely unexplored.
Since the computation of the John ellipsoid can be formulated as a convex optimization problem, the most straightforward approach is to apply convex optimization algorithms. Historically, before the advances made in \cite{cohen2019near}, the main algorithms for computing the John ellipsoid were predominantly based on first-order methods \cite{khachiyan1996rounding, kumar2005minimum, damla2008linear} and second-order interior point methods \cite{nesterov1994interior, sun2004computation}. These methods leveraged the mathematical properties of convex optimization to iteratively approach the solution. \cite{cohen2019near} proposed an innovative fixed-point iteration framework that utilized the structure of the optimization problem. In addition, they developed a faster algorithm that incorporated sketching techniques, which provided an efficient way to handle large-scale data by reducing dimensionality while preserving essential geometric properties. Building upon this work, \cite{song2022faster} introduced further improvements by integrating leverage score sampling with sketching techniques and also provided a analysis for scenarios where the matrix $A$ exhibited a low tree-width structure. Recently, \cite{lls+24} studied fast John ellipsoid computation under the differential privacy setting.

\subsection{Leverage Scores and Lewis Weights} 
The computation of the John ellipsoid is closely related to leverage scores and Lewis weights, which quantify the statistical significance of a matrix's rows and facilitate the selection of critical data points. Both leverage scores and Lewis weights have various applications in many different areas. One major application is in graph theory and combinatorial optimization. For example, \cite{blss20, liu2020breaking} used leverage scores to calculate the maximum match. \cite{madry2013navigating, madry2016computing, liu2020faster} explored the application of leverage score in the max-flow problems. Other applications include graph sparsification~\cite{spielman2008graph, gsy23_hyper}, and generating random spanning trees~\cite{schild2018almost}. In optimization, they are utilized to design faster algorithms for solving linear programming~\cite{ls14, blss20} and semi-definite programming~\cite{jkl+20}. In matrix analysis, they are explored in tensor decomposition~\cite{song2019relative} and matrix decomposition~\cite{boutsidis2014optimal, song2019relative}. Additional works address the approximation of leverage scores~\cite{spielman2008graph, drineas2012fast}, while the generalized concept of Lewis weights is examined by~\cite{bourgain1989approximation, cohen2015lp}.

\subsection{Linear Programming and Semi-Definite Programming}
Linear programming constitutes a fundamental area of study within computer science and optimization theory. The Simplex algorithm, introduced in \cite{dan51}, is a pivotal technique in this domain, despite its exponential runtime. \cite{k79} introduced the Ellipsoid method, which offers a polynomial-time solution to linear programming. The ellipsoid method is theoretically significant but often outperformed by the Simplex method in practical applications.  A noteworthy advancement in the field is the interior-point method \cite{k84}, which combines polynomial-time efficiency with robust practical performance, particularly in real-world scenarios. This methodological innovation has catalyzed extensive research efforts aimed at accelerating the interior-point method for solving a broad spectrum of classical optimization problems. Interior point method has a wide impact on linear programming as well as other complex tasks, studied in \cite{v87, r88, v89, ds08, ls14, cls19, lsz19, b20, ls20a, blss20, jswz21, sy21, gs22}. The contributions of John Ellipsoid have been especially significant in interior point method. For example,  John Ellipsoid is utilized to find path to solutions in \cite{ls14}. Furthermore, the interior point method, along with the theoretical underpinnings provided by John Ellipsoid, plays a crucial role in addressing semidefinite programming challenges, such as \cite{jkl+20, syz23, gs22, huang2022faster, huang2022solving}. Linear programming and semidefinite programming are extensively utilized in the field of machine learning theory, particularly in areas such as empirical risk minimization \cite{lsz19, sxz23, qszz23} and support vector machines \cite{gsz23, gswy23}.

\subsection{Quantum Sketching Algorithms}  
Sketching, a powerful technique for dimensionality reduction, has been extensively studied over the last decades~\cite{w14, nelson2013osnap, clarkson2017low, dssw18, swyz21, sy21, swyz23}. In quantum computing, the block-encoding framework often renders sketching unnecessary due to its ability to exponentially reduce dimensions. For example, it has been show that a quantum computer can solve linear regression very quickly without using any sketching technique~\cite{wang2017quantum, cgj18, cd21, sha23}. However, there is some intrinsic limitations of block-encoding-based methods, so the exploration of quantum sketching algorithms becomes a popular topic. \cite{apers2022quantum} proposed efficient quantum algorithms for solving graph sparsification and Laplacian systems with polynomial speedups. Additionally, quantum algorithms have been used for speeding up leverage score sampling, and quantum algorithms that  are dependent on data-specific parameters like the matrix's condition number, have been developed~\cite{prakash2014quantum, liu2017fast, sha23}. Further advancements include a quantum algorithm for solving linear programming with a quantum matrix spectral approximation~\cite{apers2023quantum} and its extension to Kronecker spectral approximation~\cite{gao2024quantumspeedupspectralapproximation}. 
\section{Preliminary}\label{sec:prel}

\subsection{Notations}\label{sub:nota}
Throughout the paper, we assume $A \in \R^{n \times d}$ with $n \gg d$ and that $A$ is of full rank. We use $\nnz(A)$ to denote the number of nonzero entries in $A$. For a vector $v \in \R^n$, we denote its $\ell_2$ norm by $\| v \|_2$. For a vector $v \in \R^n$, we use $\diag(u)$ to denote the diagonal matrix such that $\diag(u)_{ii} = u_i$. For two matrices $A, B$ of the same size, the Hadamard product of $A$ and $B$ is denoted by $A \circ B$ where $(A \circ B)_{i,j} = A_{i,j} B_{i,j}$. We say that a vector $w \in \R^n_{\geq 0}$ if $w$ is an $n$-dimensional vector with nonnegative entries. We use $\mathbf{1}_n$ to denote an all-one vector in $\R^n$. For a function $f$, we use $\widetilde{O}(f)$ to hide polylogarithmic factors, represented as $O(f) \cdot \polylog(f)$. For an integer $n$, $[n]$ denotes the set $\{1, 2, \ldots, n\}$. For two scalars $a$ and $b$, we use $a = (1\pm \epsilon) b$ to denote $(1-\epsilon)b \leq a \leq (1+\epsilon)b$. For two vectors $u, v \in \R^n$, we use $u = (1\pm \epsilon) v$ to denote $(1-\epsilon)v_i \leq u_i \leq (1+\epsilon)v_i$ for every $i \in [n]$. We use $\omega$ to denote the matrix multiplication exponent, i.e., multiplying two $n\times n$ matrices takes time $O(n^\omega)$. Currently $\omega \approx 2.371$~\cite{alman2024asymmetryyieldsfastermatrix}.

\subsection{Basic Tools}
We state a useful tool from linear algebra. This decomposition is particularly useful in various optimization and machine learning algorithms, where it enables efficient computation and analysis of weighted sums of outer products.

\begin{fact}[Rank-1 Decomposition]\label{fact:rank1}
Given matrix $A \in \R^{n \times d}$ and a vector $w \in \R^n$, we can decompose
\begin{align*}
    A^\top W A = \sum_{i=1}^n w_i a_i a_i^\top
\end{align*}
where $W = \diag(w)$ and $a_i^\top$ is the $i$-th row of $A$.
\end{fact}

\subsection{Leverage Score}\label{sub:lev_score}
The leverage scores can be defined using several equivalently ways as follows:
\begin{definition}[Leverage score]\label{def:lev_score}
    Given a matrix $A \in \R^{n\times d}$, let $U \in \R^{n\times d}$ be an orthonormal basis for the column space of $A$. For any $i \in [n]$, the leverage score of $i$-th row of $A$ can be equivalently defined as:
    \begin{itemize}
        \item $\sigma_i(A) = \| u_i \|_2$.
        \item $\sigma_i(A) = a_i^\top (A^\top A)^{-1} a_i$. 
        \item $\sigma_i(A) = \max_{x \in \R^d} (a_i^\top x)^2 / \|Ax\|_2^2$.
    \end{itemize}
\end{definition}

The last definition provides an intuitive explanation of leverage scores. The leverage score for row $a_i$ is higher if it is more significant, indicating that we can identify a vector $\mathbf{x}$ that has a large inner product with $a_i$ compared to its average inner product (i.e., $\|A\mathbf{x}\|_2^2$) with all other rows in the matrix. This insight underlies a common technique known as leverage score sampling, where rows with higher leverage scores are sampled with higher probability. It is used by~\cite{song2022faster} to speed up the computation of John ellipsoid.

Next, we summarized some well-known properties of leverage scores and state the classical leverage score approximation lemma from~\cite{deng2022discrepancy}.

\begin{proposition}[Forklore]\label{prop:lev_fork}
Let $A \in \R^{n \times d}$ be a matrix.
\begin{itemize}
    \item For every $i \in [n]$, we have $\sigma_i(A) \in [0,1]$.
    \item $\sum_{i=1}^n \sigma_i(A) = d$.
\end{itemize}
\end{proposition}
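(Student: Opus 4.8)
The plan is to prove both bullets directly from the orthonormal-basis characterization of the leverage score in Definition~\ref{def:lev_score}, together with the rank-1 decomposition in Fact~\ref{fact:rank1}. Let $U \in \R^{n \times d}$ be an orthonormal basis for the column space of $A$, so that $U^\top U = I_d$, and set $\Pi := U U^\top \in \R^{n \times n}$.

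First I would record that $\Pi$ is the orthogonal projection onto $\mathrm{col}(A)$: it is symmetric and idempotent, $\Pi^2 = \Pi$, and its $(i,i)$ entry is $\Pi_{ii} = u_i^\top u_i = \|u_i\|_2^2 = \sigma_i(A)$. Comparing the $(i,i)$ entries of $\Pi^2 = \Pi$ and using symmetry gives $\Pi_{ii} = \sum_{j=1}^n \Pi_{ij}^2 \geq \Pi_{ii}^2$; since also $\Pi_{ii} = \|u_i\|_2^2 \geq 0$, this forces $0 \leq \sigma_i(A) \leq 1$, which is the first claim. I would also note in passing that $\Pi$ — hence each $\sigma_i(A)$ — does not depend on the particular choice of orthonormal basis $U$, because $UU^\top$ is the unique projector onto $\mathrm{col}(A)$; this is what makes the three characterizations in Definition~\ref{def:lev_score} consistent.

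Second, for the summation I would compute $\sum_{i=1}^n \sigma_i(A) = \sum_{i=1}^n \|u_i\|_2^2 = \|U\|_F^2 = \tr(U^\top U) = \tr(I_d) = d$. Equivalently, starting from the alternative form $\sigma_i(A) = a_i^\top (A^\top A)^{-1} a_i$ (valid since $A$ has full rank, so $A^\top A$ is invertible), cyclicity of the trace together with Fact~\ref{fact:rank1} applied to $w = \mathbf{1}_n$ yields
\[
\sum_{i=1}^n \sigma_i(A) = \sum_{i=1}^n \tr\big((A^\top A)^{-1} a_i a_i^\top\big) = \tr\Big((A^\top A)^{-1} \sum_{i=1}^n a_i a_i^\top\Big) = \tr\big((A^\top A)^{-1} A^\top A\big) = \tr(I_d) = d.
\]

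There is no genuine obstacle here — the statement is folklore — and the only point deserving a line of care is the equivalence (and basis-independence) of the definitions of $\sigma_i(A)$ in Definition~\ref{def:lev_score}; once that is granted, each bullet is a two-line computation as above. I would present the projection-matrix argument for the bound in $[0,1]$ and the trace argument for the sum, since together they reuse only $U^\top U = I_d$ and Fact~\ref{fact:rank1}.
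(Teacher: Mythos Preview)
Your argument is correct and is exactly the standard folklore proof. The paper itself does not supply a proof of Proposition~\ref{prop:lev_fork}; it simply records the statement as folklore, so there is nothing to compare against, and your projection/trace computations are precisely what one would expect here. One small remark: Definition~\ref{def:lev_score} in the paper writes $\sigma_i(A) = \|u_i\|_2$, which is a typo for $\|u_i\|_2^2$; you used the correct squared version, which is consistent with the other two characterizations and with the bound $\sigma_i(A)\in[0,1]$.
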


\begin{lemma}[Classical leverage score approximation, Lemma~4.3 of \cite{deng2022discrepancy}]\label{lem:cla_lev_score}
Given a matrix $A \in \R^{n \times d}$, for any $\epsilon, \delta \in (0,1)$, there is a classical algorithm that outputs $\Tilde{\sigma}$ satisfying
\begin{itemize}
    \item $\Tilde{\sigma}$ is a $(1+\epsilon)$-approximation of $\sigma_i(A)$, i.e., $\Tilde{\sigma} = (1 \pm \epsilon)\sigma_i(A)$.
    \item It takes $\Tilde{O}(\epsilon^{-2}(\nnz(A)+d^\omega))$ time where $\Tilde{O}$ hides $\log(n/\delta)$ factor.
    \item The success probability is $1-\delta$.
\end{itemize}

\end{lemma}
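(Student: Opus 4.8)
The approach is the classical ``sketch-then-sample'' template for fast leverage-score estimation, executed with $(1\pm\epsilon)$ distortion throughout so that the output is $(1\pm\epsilon)$-accurate rather than merely accurate up to a constant. The two ingredients are: (i) a sparse subspace embedding, used to build in near-input-sparsity time a $d\times d$ whitening matrix that approximately orthogonalizes the columns of $A$; and (ii) a Johnson--Lindenstrauss sketch, used to read off all $n$ squared row norms of the (implicitly defined) whitened matrix in a single matrix product. A sketch is needed because forming $A^\top A$ exactly already costs $\Omega(\nnz(A)\cdot d)$ in the worst case (or $\Omega(n d^{\omega-1})$ for dense $A$), which is over budget when $n\gg d$.

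\textbf{Step 1: a cheap preconditioner.}
Sample a sparse oblivious subspace embedding $S\in\R^{m\times n}$~\cite{nelson2013osnap} with $m=\wt{O}(d/\epsilon^2)$ that, with probability $\ge 1-\delta/2$, satisfies $(1-\epsilon)\|Ax\|_2\le\|SAx\|_2\le(1+\epsilon)\|Ax\|_2$ for all $x\in\R^d$, equivalently $(1-\epsilon)^2\, A^\top A\preceq (SA)^\top SA\preceq(1+\epsilon)^2\, A^\top A$; it can be applied so that $SA$ is computed in $\wt{O}(\epsilon^{-2}\nnz(A))$ time. Form the Gram matrix $(SA)^\top SA$ in $\wt{O}(m d^{\omega-1})=\wt{O}(\epsilon^{-2}d^\omega)$ time and compute its Cholesky factorization $(SA)^\top SA = R^\top R$ with $R\in\R^{d\times d}$ upper triangular in $O(d^\omega)$ time; $R$ is invertible since $A$ has full rank and $\epsilon<1$ force $(SA)^\top SA\succ0$. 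Define $g_i:=\|e_i^\top A R^{-1}\|_2^2 = a_i^\top\big((SA)^\top SA\big)^{-1}a_i$ for $i\in[n]$. Inverting the Loewner bounds gives $(1+\epsilon)^{-2}(A^\top A)^{-1}\preceq\big((SA)^\top SA\big)^{-1}\preceq(1-\epsilon)^{-2}(A^\top A)^{-1}$, so evaluating the quadratic form at $a_i$ and using the second characterization in Definition~\ref{def:lev_score} yields $g_i=(1\pm O(\epsilon))\,\sigma_i(A)$ for every $i$.

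\textbf{Step 2: JL dimension reduction and cost.}
Sample $G\in\R^{d\times k}$ with i.i.d.\ $\N(0,1/k)$ entries, $k=O(\epsilon^{-2}\log(n/\delta))$. By the distributional Johnson--Lindenstrauss lemma and a union bound over the $n$ fixed vectors $R^{-\top}a_1,\dots,R^{-\top}a_n$, with probability $\ge 1-\delta/2$ we have $\|G^\top R^{-\top}a_i\|_2^2=(1\pm\epsilon)\|R^{-\top}a_i\|_2^2=(1\pm\epsilon)g_i$ for all $i\in[n]$. Compute $\Phi:=R^{-1}G\in\R^{d\times k}$ by triangular solves in $O(d^2k)=\wt{O}(\epsilon^{-2}d^\omega)$ time, then $A\Phi\in\R^{n\times k}$ in $O(k\cdot\nnz(A))=\wt{O}(\epsilon^{-2}\nnz(A))$ time, and output $\wt\sigma_i:=\|(A\Phi)_{i,:}\|_2^2$ for each $i$ (entirely zero rows of $A$ have $\sigma_i=\wt\sigma_i=0$ and are skipped, so this last step is $\wt{O}(\epsilon^{-2}\nnz(A))$ as well). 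Since $(A\Phi)_{i,:}^\top=G^\top R^{-\top}a_i$, combining the two guarantees gives $\wt\sigma_i=(1\pm\epsilon)(1\pm O(\epsilon))\sigma_i(A)=(1\pm O(\epsilon))\sigma_i(A)$; rescaling $\epsilon$ by a constant yields the stated $(1\pm\epsilon)$-approximation, the total time is $\wt{O}(\epsilon^{-2}(\nnz(A)+d^\omega))$, and a union bound over the two failure events gives success probability $\ge 1-\delta$.

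\textbf{Where the difficulty lies.}
The proof is essentially a bookkeeping argument, and the one place needing care is that \emph{both} error sources must be driven down to $\Theta(\epsilon)$: the subspace embedding cannot have constant distortion, so its row count must be $\wt{O}(d/\epsilon^2)$, and the JL target dimension must be $\wt{O}(1/\epsilon^2)$. One then has to confirm that these $\epsilon^{-2}$-inflated dimensions still fit the $\wt{O}(\epsilon^{-2}(\nnz(A)+d^\omega))$ budget — in particular, that the dense $n\times d$ matrix $AR^{-1}$ is never materialized (only the thin $d\times k$ matrix $R^{-1}G$ is), and that all the $d\times d$ linear algebra on the sketched Gram matrix stays at $\wt{O}(\epsilon^{-2}d^\omega)$.
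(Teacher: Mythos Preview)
The paper does not prove this lemma at all: it is stated in the Preliminary section purely as a citation (``Lemma~4.3 of \cite{deng2022discrepancy}'') and used as a black box, so there is no in-paper proof to compare against. Your sketch-then-JL argument is the standard, correct derivation of this result and would be an appropriate proof; the only minor bookkeeping to double-check is that the OSNAP embedding you invoke with $m=\wt O(d/\epsilon^2)$ rows has per-column sparsity $\wt O(1/\epsilon)$, so the cost of forming $SA$ is $\wt O(\epsilon^{-1}\nnz(A))$ rather than exactly $\nnz(A)$, but this is of course still within the stated $\wt O(\epsilon^{-2}(\nnz(A)+d^\omega))$ budget.
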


We can observe that leverage scores can be approximated significantly faster when the matrix \( A \) is sufficiently sparse. However, when $ A $ is a tall matrix, since $\text{nnz}(A) = \Omega(n)$, it still requires $\widetilde{O}(\epsilon^{-2}n)$ time to compute a $(1+\epsilon)$-approximation of the leverage scores. To over the linear dependence on $n$, we need some tools from quantum computing.

\subsection{Tools from Quantum Computing}\label{sub:tool_quan}

In this paper, we assume the standard quantum computational model, where the system can execute quantum subroutines on $O(\log n)$ qubits, performs quantum queries to the input, and accesses a quantum-read/classical-write RAM with at most $\poly(n)$ bits. For our problem, the quantum query is a row query to the given matrix $A$.

We begin by outlining the well-known Grover's search algorithm~\cite{grover1996fast}, which serves as a fundamental building block for numerous modern quantum algorithms. Grover's algorithm is designed for unstructured search problems and efficiently locates, with high probability, the unique input to a black box function that yields a specified output value. Remarkably, it achieves quadratic speedups.

\begin{lemma}[Grover's search, \cite{grover1996fast}]
If the following conditions hold:
\begin{itemize}
    \item There is an oracle $\mathcal O$ evaluating the function $g : [n] \to \{0,1\}$.
    \item Let $f^{-1}(1) = \{ i \in [n] | f(i) = 1\}$
    \item $|f^{-1}(1)| = k$ for some unknown integer $k \leq n$.
\end{itemize}
Then there is a quantum algorithm such that
\begin{itemize}
    \item It finds all $i$'s in $f^{-1}(i)$ by making $\wt{O}(\sqrt{nk})$ queries to $\mathcal O$.
    \item Moreover, if each $f(i)$ requires $\mathcal T$ to compute, we find all $i$'s in $f^{-1}(i)$ in $\wt{O}(\sqrt{nk}\mathcal T)$ time.
    \item The success probability is $1/2$.
\end{itemize}
    
\end{lemma}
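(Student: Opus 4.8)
The plan is to bootstrap a single-target Grover search into a routine that enumerates \emph{all} marked indices, by repeatedly extracting one fresh marked index and then excluding it from subsequent searches. The key tool is the standard variant of Grover's algorithm due to Boyer, Brassard, H{\o}yer and Tapp: given oracle access to a Boolean function on a universe of size $N$ with an \emph{unknown} number $m \geq 1$ of marked elements, it returns a uniformly random marked element using $O(\sqrt{N/m})$ expected queries, and when run with a hard budget of $\Theta(\sqrt{N})$ queries it reports ``no marked element'' with high probability in the case $m = 0$. This is exactly what lets us deal with the fact that $k$ is not known in advance.

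Concretely, I would maintain a set $S \subseteq [n]$ of already-discovered indices, stored in the classical-write RAM, and repeatedly run the search above against the modified oracle $g_S$ with $g_S(i) = g(i)\cdot\mathbbm{1}[\,i \notin S\,]$; one query to $g_S$ costs one query to $\mathcal{O}$ plus an $O(1)$ RAM lookup, i.e. $O(\mathcal{T})$ time. After $j$ indices have been found, $g_S$ has exactly $k-j$ marked elements, so the $(j+1)$-st search costs $O(\sqrt{n/(k-j)})$ queries. Summing over $j = 0, \ldots, k-1$, the total is $O\big(\sum_{t=1}^{k}\sqrt{n/t}\big) = O(\sqrt{nk})$ since $\sum_{t=1}^k t^{-1/2} \leq 2\sqrt{k}$. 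Once $|S| = k$, the next search runs against a function with no marked element and, after exhausting its $\Theta(\sqrt{n})$-query budget, returns ``none''; this is the termination signal and adds only $O(\sqrt{n})$ queries. Multiplying by $\mathcal{T}$ yields the claimed $\wt{O}(\sqrt{nk})$ query bound and $\wt{O}(\sqrt{nk}\,\mathcal{T})$ time bound.

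For the success probability, each of the at most $k+1$ individual searches succeeds with only constant probability, so I would amplify each one to failure probability at most $1/\poly(n)$ by $O(\log n)$ independent repetitions — returning any marked index that appears, and declaring ``none'' only if every repetition does — and then take a union bound over all $\le k+1 \le n+1$ searches. The $O(\log n)$ overhead per search is precisely the polylogarithmic factor absorbed into $\wt{O}$, and the resulting overall success probability is $1 - 1/\poly(n)$, comfortably above the $1/2$ required by the statement.

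The steps themselves are individually routine, so the main obstacle is organizational: one must arrange the analysis so that the unknown $k$ never enters the algorithm explicitly. This means charging the cost of the $(j+1)$-st round to the current count $k-j$ of surviving marked elements rather than to $k$, and making termination depend solely on a search that fails to return anything rather than on a pre-set loop counter. A secondary point to check is that neither the per-round amplification (contributing a $\polylog$ factor) nor the RAM bookkeeping (contributing $O(1)$ per query) disturbs the $\sqrt{nk}$ scaling, which they do not.
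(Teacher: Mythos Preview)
The paper does not supply a proof of this lemma at all: it is stated as a cited black-box result from \cite{grover1996fast} (with the multi-target refinement implicit) and is never argued. So there is no ``paper's own proof'' to compare against.

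That said, your argument is correct and is the standard derivation of the find-all-$k$-marked-items bound from single-item Grover search. The key ingredients --- the Boyer--Brassard--H{\o}yer--Tapp variant for unknown $m$, the iterative exclusion via a modified oracle $g_S$, the telescoping sum $\sum_{t=1}^{k}\sqrt{n/t}=O(\sqrt{nk})$, termination via a failing $\Theta(\sqrt n)$-budget search, and logarithmic amplification absorbed into the $\wt O$ --- are exactly what one would write if asked to reconstruct this folklore result. Nothing is missing and the handling of the unknown $k$ is done correctly by charging each round to the surviving count rather than to $k$ itself.
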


Next, we state quantum tools for fast leverage score computation from \cite{apers2023quantum}. The quantum spectral approximation algorithm proposed by~\cite{apers2023quantum} is based on a simple recursive sparsification algorithm in~\cite{cohen2015uniform}. Given a matrix $A \in \mathbb{R}^{n \times d}$ where $n \gg d$, there is a quantum algorithm that efficiently approximates $A^\top A$. The state-of-the-art classical approach has $O(\nnz(A) + d^\omega)$ running time~\cite{nelson2013osnap, clarkson2017low}. Essentially, this quantum speedup of $n \to \sqrt{nd}$ is achieved by utilizing Grover's search algorithm to locate the $O(d)$ significant rows among the $n$ rows of $B$ with $O(\sqrt{nd})$ row queries to $B$.

\begin{lemma}[Quantum spectral approximation, Theorem~3.1 of \cite{apers2023quantum}]\label{lem:qua_spe_score}
Suppose that we have the query access to a matrix $A \in \R^{n \times d}$. Then for any $\epsilon \in (0,1)$, there exists a quantum algorithm that outputs a matrix $B \in \R^{\Tilde{O}(d/\epsilon^2)\times d}$ satisfying

\begin{itemize}
    \item $(1-\epsilon)B^\top B \preceq A^\top A \preceq(1+\epsilon)B^\top B$.
    \item It makes $\Tilde{O}(\epsilon^{-1}\sqrt{nd})$ queries to $A$
    \item It takes $\Tilde{O}(\epsilon^{-1}\sqrt{n}d^{1.5} + d^\omega)$ time. 
    \item The success probability $1 - 1/\poly(n)$. 
\end{itemize}
\end{lemma}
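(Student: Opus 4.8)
The statement is Theorem~3.1 of \cite{apers2023quantum}; here is the route I would take. The classical backbone is the ``repeated halving'' spectral sparsification scheme of \cite{cohen2015uniform}. Set $A^{(0)} := A$ and, for $\ell = 1, \ldots, L$ with $L = O(\log(n/d))$, let $A^{(\ell)}$ be obtained from $A^{(\ell-1)}$ by keeping each row independently with probability $1/2$; then $A^{(L)}$ has $\wt O(d)$ rows with high probability, and we take $B^{(L)} := A^{(L)}$ and form $(B^{(L)})^\top B^{(L)}$ and its inverse directly. Walking back up, suppose $B^{(\ell)}$ is a $2$-spectral approximation of $A^{(\ell)}$; using it we estimate, for each row $a_i$ of $A^{(\ell-1)}$, the generalized leverage score $\tau_i := a_i^\top ((B^{(\ell)})^\top B^{(\ell)})^{-1} a_i$, and we sample the rows of $A^{(\ell-1)}$ with probability $p_i := \min\{1, \alpha \cdot \tau_i\}$, reweighting a kept row by $1/\sqrt{p_i}$, where $\alpha = \Theta(\log d)$ for intermediate levels $\ell \ge 2$ and $\alpha = \Theta(\epsilon^{-2}\log d)$ for the outermost step $\ell = 1$. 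Two standard facts from \cite{cohen2015uniform} close the correctness argument: (i) since $A^{(\ell)}$ is a uniform sub-multiset, $\E[\sum_i \tau_i] = O(d)$, so the sample has $\wt O(d)$ rows (resp.\ $\wt O(d/\epsilon^2)$ at the top); and (ii) since $(B^{(\ell)})^\top B^{(\ell)} \approx (A^{(\ell)})^\top A^{(\ell)} \preceq (A^{(\ell-1)})^\top A^{(\ell-1)}$, the $\tau_i$ dominate the true leverage scores of $A^{(\ell-1)}$ up to a constant, so a matrix Chernoff bound certifies $B^{(\ell-1)}$ as a $2$-approximation (resp.\ a $(1\pm\epsilon)$-approximation at the top) of $A^{(\ell-1)}$. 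Chaining through the $L$ levels yields $(1-\epsilon)B^\top B \preceq A^\top A \preceq (1+\epsilon)B^\top B$ with $B := B^{(0)} \in \R^{\wt O(d/\epsilon^2)\times d}$.

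The quantum ingredient replaces the $\Omega(n_\ell)$-time scan over the rows of $A^{(\ell-1)}$ at each level by Grover's search. The trick is to make the sampled set a well-defined subset: attach to each row an independent uniform threshold $u_i \in [0,1]$, and declare row $i$ \emph{marked} iff $u_i < p_i$. The marked set $S_\ell$ then has size $k_\ell = \wt O(d)$ at intermediate levels and $k_\ell = \wt O(d/\epsilon^2)$ at the top, with high probability. The oracle deciding whether row $i$ is marked reads $a_i$ (an $O(d)$-time row query) and evaluates $\tau_i$; to make the evaluation cheap we precompute, once per level, the $O(\log n)\times d$ matrix $G^{(\ell)} := \Pi\,((B^{(\ell)})^\top B^{(\ell)})^{-1/2}$ for a Johnson--Lindenstrauss matrix $\Pi$ (cost $O(d^\omega)$), after which $\tau_i \approx \|G^{(\ell)} a_i\|_2^2$ is obtained in $\wt O(d)$ time. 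Applying the ``find all marked items'' form of Grover's search stated above (with $k = k_\ell$, unknown, which an exponential-search wrapper absorbs into the $\wt O(\cdot)$) extracts $S_\ell$ using $\wt O(\sqrt{n_\ell k_\ell})$ row queries, each costing $\wt O(d)$ time, and the kept rows (suitably reweighted) form $B^{(\ell-1)}$.

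It remains to account for the costs. The query count is dominated by the outermost level, $\wt O(\sqrt{n\cdot d/\epsilon^2}) = \wt O(\epsilon^{-1}\sqrt{nd})$, while the geometrically shrinking lower levels contribute only an additional $\wt O(\sqrt{nd})$; hence $\wt O(\epsilon^{-1}\sqrt{nd})$ queries in total. Multiplying by the $\wt O(d)$ per-query time gives $\wt O(\epsilon^{-1}\sqrt{n}\,d^{1.5})$, and the $O(\log n)$ per-level precomputations (forming the $\wt O(d)$-row Gram matrices, inverting them, and building the sketches $G^{(\ell)}$) add $\wt O(d^\omega)$, for a running time of $\wt O(\epsilon^{-1}\sqrt n d^{1.5} + d^\omega)$. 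For the success probability, amplify each Grover call to failure probability $1/\poly(n)$ by $O(\log n)$ repetitions, and union-bound over the $O(\log n)$ levels together with the $O(\log n)$ instances of the sum-of-generalized-leverage-scores bound and of matrix Chernoff.

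\textbf{Main obstacle.} The delicate points are (a) establishing, uniformly over the $O(\log n)$ recursive rounds, that the JL-estimated quantities are genuine constant-factor over-estimates of the true leverage scores of $A^{(\ell-1)}$ — so that matrix Chernoff legitimately certifies each $B^{(\ell-1)}$ — and that the generalized leverage scores sum to $O(d)$, keeping the sample sizes and hence the Grover costs controlled; and (b) verifying that ``find all marked rows'' can be carried out within the claimed query budget when $|S_\ell|$ is not known in advance. Both are supplied by the arguments of \cite{cohen2015uniform} and \cite{apers2023quantum} respectively, but they are where the real work lies.
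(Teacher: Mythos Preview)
The paper does not prove this lemma at all: it is stated as a black-box tool imported from \cite{apers2023quantum} (Theorem~3.1 there), with only a one-sentence informal description preceding it (``based on a simple recursive sparsification algorithm in \cite{cohen2015uniform} \ldots\ this quantum speedup of $n\to\sqrt{nd}$ is achieved by utilizing Grover's search algorithm to locate the $O(d)$ significant rows''). Your proof sketch is a faithful and accurate expansion of exactly that route --- repeated halving plus generalized-leverage-score sampling from \cite{cohen2015uniform}, with Grover search replacing the linear scan at each level and a JL sketch to make the per-row oracle cheap --- so it matches both the paper's informal summary and the original argument in \cite{apers2023quantum}.
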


Based on Lemma~\ref{lem:qua_spe_score}, \cite{apers2023quantum} also proposed an algorithm that efficiently approximate the leverage scores. This is the main tool we will use later to achieve quantum speedup for approximating the John ellipsoid because computing the $\ell_\infty$ Lewis weight is roughly just the weighted version of leverage score. Similar to Lemma~\ref{lem:qua_spe_score}, the quantum leverage score approximation algorithm achieved a speedup of $n \to \sqrt{nd}$ for tall dense matrices.
\begin{lemma}[Quantum leverage score approximation, Theorem~3.2 of \cite{apers2023quantum}]\label{lem:qua_lev_score}
Suppose that we have the query access to a matrix $A \in \R^{n \times d}$. Then for any $\epsilon \in (0,1)$, there exists a quantum algorithm that outputs an approximation $\Tilde{\sigma}$ of leverage score $\sigma(A)$ satisfying
\begin{itemize}
    \item $\Tilde{\sigma}$ is a $(1+\epsilon)$-approximation of $\sigma(A)$, i.e., $\Tilde{\sigma} = (1 \pm \epsilon)\sigma(A)$.
    \item It makes $\Tilde{O}(\epsilon^{-1}\sqrt{nd})$ queries to $A$.
    \item It takes $\Tilde{O}(\epsilon^{-1}\sqrt{n}d^{1.5} + \epsilon^{-2} d^\omega)$ time.
    \item The success probability $1 - 1/\poly(n)$.
\end{itemize}
\end{lemma}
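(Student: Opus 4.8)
The plan is to reduce leverage-score estimation to the quantum spectral approximation of Lemma~\ref{lem:qua_spe_score}. Recall from Definition~\ref{def:lev_score} that $\sigma_i(A) = a_i^\top (A^\top A)^{-1} a_i$, so it suffices to produce a $d \times d$ matrix that is spectrally close to $A^\top A$ and then read off its inverse quadratic form against the rows of $A$. (I note in passing that the spectral approximation of \cite{apers2023quantum} is itself built on the recursive sparsification of \cite{cohen2015uniform}, which already manipulates leverage-score estimates internally, so a more direct extraction is also conceivable; but the black-box reduction below is cleanest and is all that is needed here.)

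First I would invoke Lemma~\ref{lem:qua_spe_score} with accuracy parameter $\epsilon/4$, obtaining with probability $1 - 1/\poly(n)$ a matrix $B \in \R^{m \times d}$ with $m = \wt{O}(d/\epsilon^2)$ rows and $(1-\epsilon/4) B^\top B \preceq A^\top A \preceq (1+\epsilon/4) B^\top B$; this costs $\wt{O}(\epsilon^{-1}\sqrt{nd})$ row queries to $A$ and $\wt{O}(\epsilon^{-1}\sqrt{n}\,d^{1.5} + d^\omega)$ time. Next I would form the Gram matrix $M := B^\top B \in \R^{d\times d}$ and compute $M^{-1}$: since $B$ has $m = \wt{O}(d/\epsilon^2) \ge d$ rows, block matrix multiplication forms $B^\top B$ in $\wt{O}(m d^{\omega-1}) = \wt{O}(\epsilon^{-2} d^\omega)$ time and the $d\times d$ inversion costs $O(d^\omega)$ — this is exactly where the extra $\epsilon^{-2}$ factor on the $d^\omega$ term, relative to Lemma~\ref{lem:qua_spe_score}, enters. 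Finally I would define $\wt{\sigma}_i := a_i^\top M^{-1} a_i$ for each $i \in [n]$. Since inversion reverses the PSD order, the spectral sandwich gives $(1+\epsilon/4)^{-1} M^{-1} \preceq (A^\top A)^{-1} \preceq (1-\epsilon/4)^{-1} M^{-1}$, hence $(1+\epsilon/4)^{-1}\wt{\sigma}_i \le \sigma_i(A) \le (1-\epsilon/4)^{-1}\wt{\sigma}_i$, which for $\epsilon \in (0,1)$ yields $\wt{\sigma}_i = (1\pm\epsilon)\sigma_i(A)$. The query and time costs are dominated by these two steps, giving $\wt{O}(\epsilon^{-1}\sqrt{nd})$ queries and $\wt{O}(\epsilon^{-1}\sqrt{n}\,d^{1.5} + \epsilon^{-2}d^\omega)$ time, and the $1-1/\poly(n)$ success probability is inherited verbatim from Lemma~\ref{lem:qua_spe_score}.

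I expect the only genuinely delicate point to be the output model, not the estimates themselves: because the target running time is sublinear in $n$, the algorithm cannot write out an explicit vector in $\R^n$, so ``outputs an approximation $\wt{\sigma}$'' must be read as producing the compact certificate $M^{-1}$ together with the standing query access to $A$, which evaluates any requested coordinate $\wt{\sigma}_i$ in $O(d^2)$ time — and this is precisely the interface that the fixed-point iteration and leverage-score-sampling subroutines of our John-ellipsoid algorithm consume. If at some later stage an explicit sampling distribution over the $n$ rows is needed, one layers a quantum sampling primitive on top of this access. Everything else — transferring a spectral guarantee into a multiplicative per-row guarantee, and the arithmetic of the cost bounds — is routine, so the bookkeeping around this implicit representation is the main obstacle to state carefully.
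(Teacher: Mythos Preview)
The paper does not supply its own proof of this lemma: it is stated in Section~\ref{sub:tool_quan} as a black-box import from \cite{apers2023quantum} (their Theorem~3.2), with no argument given. So there is no in-paper proof to compare against.

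That said, your reduction from Lemma~\ref{lem:qua_spe_score} is correct and is essentially the derivation one finds in \cite{apers2023quantum}: obtain a spectral sketch $B$, form and invert $M = B^\top B$, and read off $\wt\sigma_i = a_i^\top M^{-1} a_i$. Your cost accounting is right --- in particular the observation that forming $B^\top B$ with $m = \wt O(\epsilon^{-2}d)$ rows is what introduces the $\epsilon^{-2}d^\omega$ term beyond the $d^\omega$ already present in Lemma~\ref{lem:qua_spe_score}. Your remark about the output model is also on point and matches exactly how the present paper consumes the lemma: Lemma~\ref{lem:qua_fix_point_iter} explicitly says the algorithm ``provides query access'' to $\wt\sigma_i$ rather than materializing the full vector, and the fixed-point iteration of Algorithm~\ref{alg:main} only ever composes this access with query access to $w^{(k)}$ and the rows of $A$.
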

\section{Problem Formulation}\label{sec:prob}

In this section, we formally formulate the problem of the John Ellipsoid computation as a convex optimization problem. We first define several main concepts.

\subsection{Background on Polytope and Ellipsoid}

\begin{definition}[Symmetric convex polytope]\label{def:sym_covx_polt}
We define the symmetric convex polytope as a convex set $P \subseteq \R^d$ that takes the form of
\begin{align*}
    P := &~ \{ x \in \R^d : |\langle a_i, x \rangle| \leq 1, \forall i \in [n]\}.
\end{align*}
where $a_i^\top$ is the $i$-th row of a matrix $A \in \R^{n\times d}$.
\end{definition}

We assume that $A$ is full rank. The symmetry of $P$ implies that any maximum volume ellipsoid inside $P$ must be centered at origin. 

\begin{definition}[Origin-centered ellipsoid]\label{def:orig_cen_ell}
An origin-centered ellipsoid is a set $E \subseteq \R^d$ takes the form of
\begin{align*}
    E := \{ x \in \R^d : x^\top G^{-2} x \leq 1\}
\end{align*}
where $G \in \R^{d \times d}$ is a positive semi-definite matrix.
\end{definition}

\subsection{Primal and Dual Formulation} \label{sub:primal_dual}

By simple linear algebra, we know that $\vol(E)$ is proportional to $\det G$. In \cite{cohen2019near}, it is shown that an ellipsoid $E$ is a subset of a polytope $P$ if and only if for every $i \in [n]$, $\|G a_i\|_2 \leq 1$. Therefore, we formulate the John ellipsoid computation problem as the following optimization problem.

\begin{definition} [Primal program, implicitly in Section~2 of \cite{cohen2019near}]
\label{def:primal_classic}
Let $A$ be defined in Definition~\ref{def:sym_covx_polt}.
Let $G$ be defined in Definition~\ref{def:orig_cen_ell}. The optimization problem of John Ellipsoid computation could be formulated as
\begin{align}\label{eq:john_ell_opt}
    \max_{G} &~ \log \det G^2, \\
    \text{s.t.} &~ \| G a_i \|_2 \leq 1, \forall i \in [n], \notag \\
    &~ G \succeq 0. \notag
\end{align}
\end{definition}

\begin{lemma}[Dual program, implicitly in Section~2 of \cite{cohen2019near}]
\label{lem:dual_classics}
The dual program of program~\eqref{eq:john_ell_opt} in Definition~\ref{def:primal_classic} is
\begin{align}\label{eq:dual_opt}
    \min_{w} &~ \sum_{i=1}^n w_i - \log \det(\sum_{j=1}^n w_j a_j a_j^\top) - d, \\
    \text{s.t.} &~ w_i \geq 0, \forall i \in [n]. \notag
\end{align}
\end{lemma}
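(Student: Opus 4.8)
The plan is to derive the dual program~\eqref{eq:dual_opt} from the primal~\eqref{eq:john_ell_opt} via Lagrangian duality, following the standard recipe for convex programs with a log-determinant objective. First I would rewrite the primal in a cleaner variable. Since $G \succeq 0$, set $M := G^2 \succeq 0$, so that $\log\det G^2 = \log\det M$ and the constraint $\|Ga_i\|_2 \le 1$ becomes $a_i^\top M a_i \le 1$. Thus the primal is equivalent to $\max_{M \succeq 0}\ \log\det M$ subject to $a_i^\top M a_i \le 1$ for all $i \in [n]$; by Fact~\ref{fact:rank1} the constraint reads $\langle a_i a_i^\top, M\rangle \le 1$, which is linear in $M$.

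Next I would form the Lagrangian by introducing a multiplier $w_i \ge 0$ for each constraint $\langle a_i a_i^\top, M\rangle \le 1$:
\begin{align*}
    \mathcal{L}(M, w) = \log\det M - \sum_{i=1}^n w_i\,(\langle a_i a_i^\top, M\rangle - 1) = \log\det M - \langle \textstyle\sum_{i=1}^n w_i a_i a_i^\top,\, M\rangle + \sum_{i=1}^n w_i.
\end{align*}
The dual function is $g(w) = \max_{M \succeq 0} \mathcal{L}(M, w)$. To evaluate this inner maximization I would use the fact that for a positive definite $S$, the function $M \mapsto \log\det M - \langle S, M\rangle$ is concave and maximized at $M^\star = S^{-1}$ (set the gradient $M^{-1} - S$ to zero), with optimal value $-\log\det S - d$. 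Applying this with $S = \sum_{j=1}^n w_j a_j a_j^\top = A^\top W A$ gives
\begin{align*}
    g(w) = \sum_{i=1}^n w_i - \log\det\Big(\sum_{j=1}^n w_j a_j a_j^\top\Big) - d,
\end{align*}
which is exactly the objective of~\eqref{eq:dual_opt}; minimizing $g(w)$ over $w_i \ge 0$ yields the dual program. The only subtlety to address is \emph{well-definedness}: the inner maximum is finite only when $S = A^\top W A \succ 0$, i.e.\ $W$ places positive weight on a spanning set of rows; for other $w \ge 0$ we have $g(w) = +\infty$, which is harmless since we are minimizing, and the implicit feasible region of the dual is $\{w \ge 0 : A^\top W A \succ 0\}$ (nonempty since $A$ is full rank).

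Finally, I would invoke strong duality to confirm there is no gap: the primal is a convex program (maximizing a concave objective over a convex feasible set) and it is strictly feasible — e.g.\ $M = cI$ for small enough $c > 0$ satisfies $a_i^\top M a_i = c\|a_i\|_2^2 < 1$ strictly — so Slater's condition holds and the optimal values of~\eqref{eq:john_ell_opt} and~\eqref{eq:dual_opt} coincide. The main obstacle, such as it is, is not any single hard step but rather being careful about the domain where the log-determinant and its conjugate are finite; handling the boundary case $A^\top W A$ singular cleanly (rather than just assuming positive definiteness throughout) is where the argument needs the most attention. The computation of the inner maximizer $M^\star = (A^\top W A)^{-1}$ also makes transparent the connection, used later in the paper, between the optimal ellipsoid and the $\ell_\infty$ Lewis weights: complementary slackness forces $w_i\,(a_i^\top M^\star a_i - 1) = 0$, i.e.\ $w_i = w_i\, a_i^\top (A^\top W A)^{-1} a_i$, recovering the fixed-point characterization.
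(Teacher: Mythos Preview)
Your proposal is correct and follows the same Lagrangian-duality template as the paper, but with one genuine simplification: you first substitute $M := G^2$, which linearizes the constraints ($\langle a_i a_i^\top, M\rangle \le 1$) and lets you read off the dual function directly from the standard conjugate of $\log\det$, namely $\max_{M\succ 0}\{\log\det M - \langle S,M\rangle\} = -\log\det S - d$ at $M^\star = S^{-1}$. The paper instead keeps the variable $G$, differentiates the Lagrangian in $G$ to obtain $G^{-2} = \sum_i w_i a_i a_i^\top$, and then separately establishes $\sum_i w_i\|Ga_i\|_2^2 = d$ by recognizing the terms as leverage scores of $\sqrt{W}A$ and invoking Proposition~\ref{prop:lev_fork}. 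Both routes yield the same $-d$ term (yours via $\tr(S S^{-1}) = d$, the paper's via the leverage-score sum), so the difference is cosmetic; your version is a bit more streamlined and also makes the well-posedness caveat ($A^\top W A \succ 0$) and strong duality explicit, which the paper leaves implicit.
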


\begin{proof}
Firstly, we derive the Lagrangian
\begin{align}
\mathcal{L}(G, w) = \log \det G^2 + \sum_{i=1}^n w_i (1 - \|G a_i\|_2^2). \label{eq:lag_classic}
\end{align}
By setting the gradient with respect to $G$ to zero, we have
\begin{align*}
\nabla_G \mathcal{L} = 2G^{-1} -2\sum_{i=1}^n w_i G a_i a_i^\top = 0.
\end{align*}
Hence, we have
\begin{align}
G^{-2} = \sum_{i=1}^n w_i a_i a_i^\top. \label{eq:G_-2_optimal}
\end{align}

Let $B \in \R^{n \times d}$ be a matrix such that $i$-th row of $B$ is $b_i^\top = \sqrt{w_i}a_i^\top$ for $i \in [n]$. Then we have
\begin{align}
    \sum_{i=1}^n w_i \|Ga_i\|_2^2 
    = & ~ \sum_{i=1}^n w_i a_i^\top G^2 a_i \notag \\
    = & ~ \sum_{i=1}^n w_i a_i^\top(\sum_{j=1}^n w_j a_j a_j^\top)^{-1} a_i \notag \\
    = & ~ \sum_{i=1}^n (\sqrt{w_i} a_i)^\top(\sum_{j=1}^n (\sqrt{w_j} a_j)(\sqrt{w_j} a_j)^\top)^{-1} (\sqrt{w_i} a_i) \notag \\
    = & ~ \sum_{i=1}^n b_i^\top(B^\top B)^{-1} b_i \notag \\
    = & ~ d \label{eq:wi_G_ai}
\end{align}
where the first step comes from the definition of norm, the second step comes from Eq.~\eqref{eq:G_-2_optimal}, the third step follows by basic algebra, the fourth step uses the definition of $B$, and the last step comes from Proposition~\ref{prop:lev_fork}.

By plugging Eq.~\eqref{eq:G_-2_optimal} into Eq.~\eqref{eq:lag_classic}, we have
\begin{align*}
\mathcal{L}(G, w) = & ~ \log \det G^2 + \sum_{i=1}^n w_i (1 - \|G a_i\|_2^2) \\
= & ~ \log \det G^2 + \sum_{i=1}^n w_i - \sum_{i=1}^n w_i \|Ga_i\|_2^2 \\
= & ~ - \log \det G^{-2} + \sum_{i=1}^n w_i - \sum_{i=1}^n w_i \|Ga_i\|_2^2 \\
= & ~ - \log \det(\sum_{j=1}^n w_j a_j a_j^\top) + \sum_{i=1}^n w_i - \sum_{i=1}^n w_i \|Ga_i\|_2^2 \\
= & ~ \sum_{i=1}^n w_i - \log \det(\sum_{j=1}^n w_j a_j a_j^\top) - d
\end{align*}
where the first step comes from Eq.~\eqref{eq:lag_classic}, the second step follows from basic algebra, the third step comes from logarithm arithmetic, the fourth step comes from Eq.~\eqref{eq:G_-2_optimal}, and the last step comes from Eq.~\eqref{eq:wi_G_ai}.
\end{proof}

\subsection{Optimality Condition and Approximate Solution}
\label{sec:optimality}

The optimality condition of program~\eqref{eq:dual_opt} is given by \cite{todd2016minimum}:
\begin{lemma}[Optimality condition of program~\eqref{eq:dual_opt}, Proposition~2.5 of \cite{todd2016minimum}]\label{lem:opt_cond} 
A vector $w \in \R^n_{\geq 0}$ is an optimal solution to program~\eqref{eq:dual_opt} if and only if the following holds:
\begin{align*}
    \sum_{i=1}^n w_i = d, \\
    a_i^\top (\sum_{j=1}^n w_j a_j a_j^\top)^{-1} a_i = 1, \mathrm{if}~ w_i \neq 0, \\
    a_i^\top (\sum_{j=1}^n w_j a_j a_j^\top)^{-1} a_i < 1, \mathrm{if}~ w_i = 0.
\end{align*}
\end{lemma}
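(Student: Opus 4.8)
The plan is to treat program~\eqref{eq:dual_opt} as a convex program whose only constraints are the affine inequalities $w_i \ge 0$, so that the Karush--Kuhn--Tucker (KKT) conditions are simultaneously necessary and sufficient for global optimality, and then to check that the three displayed conditions are exactly a restatement of KKT together with one derived identity. Throughout I would write $f(w) := \mathbf{1}_n^\top w - \log\det(A^\top W A) - d$ for the objective, with $W := \diag(w)$, so that $A^\top W A = \sum_{j=1}^n w_j a_j a_j^\top$ by Fact~\ref{fact:rank1}.

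First I would record the convexity and smoothness that make the KKT machinery apply. The term $\mathbf{1}_n^\top w$ is linear, and $w \mapsto -\log\det(A^\top W A)$ is the composition of the convex function $X \mapsto -\log\det X$ on the positive-definite cone with the linear map $w \mapsto A^\top W A$, hence convex; the feasible set $\R^n_{\ge 0}$ is polyhedral. At any feasible $w$ with finite objective one has $A^\top W A \succ 0$ (otherwise the objective is $+\infty$), so on that region $f$ is differentiable.

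Next I would compute the gradient. From $\partial\log\det X = \tr(X^{-1}\,\partial X)$ and $\partial_{w_i}(A^\top W A) = a_i a_i^\top$ we get $\partial_{w_i}\log\det(A^\top W A) = \tr((A^\top W A)^{-1} a_i a_i^\top) = a_i^\top (A^\top W A)^{-1} a_i$, so $\partial_{w_i} f(w) = 1 - a_i^\top (A^\top W A)^{-1} a_i$. Attaching a multiplier $\mu_i \ge 0$ to each constraint $-w_i \le 0$, the KKT system reads: stationarity $\mu_i = 1 - a_i^\top (A^\top W A)^{-1} a_i$; dual feasibility $\mu_i \ge 0$; complementary slackness $\mu_i w_i = 0$. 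Dual feasibility is exactly $a_i^\top(A^\top W A)^{-1}a_i \le 1$ for every $i$, and combining it with stationarity and complementary slackness forces $a_i^\top(A^\top W A)^{-1}a_i = 1$ whenever $w_i \ne 0$; the last bullet of the lemma records the remaining case $w_i = 0$.

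Finally I would close the equivalence. For the ``only if'' direction, an optimal $w$ has $A^\top W A \succ 0$ and hence lies where $f$ is differentiable; since the constraints are affine, KKT is necessary with no further qualification, giving the last two bullets, and to recover $\sum_i w_i = d$ I would take the inner product of the stationarity equations with $w$ and use $\sum_i \mu_i w_i = 0$, obtaining $0 = \sum_i w_i - \sum_i w_i a_i^\top(A^\top W A)^{-1}a_i$, where the second sum equals $\tr((A^\top W A)^{-1} A^\top W A) = \tr(I_d) = d$ --- precisely the computation already carried out in Eq.~\eqref{eq:wi_G_ai}. For the ``if'' direction, given $w \in \R^n_{\ge 0}$ satisfying the three conditions, I would set $\mu_i := 1 - a_i^\top(A^\top W A)^{-1}a_i$; the last two conditions make $\mu_i \ge 0$ and $\mu_i w_i = 0$, so $w$ is a KKT point of a convex program with affine constraints, hence a global minimizer. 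The one step that is not purely mechanical --- and the place I would be most careful about --- is this regularity remark: an optimal point must avoid the boundary of the PSD cone so that $\log\det$ is differentiable there, which should be spelled out rather than hidden behind a black-box constraint qualification (and similarly I would note that the third bullet only needs the inequality $\le 1$ coming from dual feasibility).
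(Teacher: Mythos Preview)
The paper does not prove this lemma at all; it simply cites Proposition~2.5 of \cite{todd2016minimum}. Your KKT argument is the standard derivation and is correct as far as it goes --- the gradient computation, the convexity verification, the constraint-qualification remark (affine constraints), and the recovery of $\sum_i w_i = d$ from complementary slackness combined with the trace identity $\sum_i w_i\,a_i^\top(A^\top W A)^{-1}a_i = \tr(I_d) = d$ are all right.

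You also correctly flag the one genuine wrinkle: KKT alone yields only $a_i^\top(A^\top W A)^{-1}a_i \le 1$ in the $w_i = 0$ case, not the strict inequality the lemma states. In fact that strict inequality can fail when $A$ has redundant rows (e.g.\ take $a_1 = a_2$; an optimal $w$ can have $w_2 = 0$ while the corresponding quantity equals $1$), so either the statement should read $\le 1$ or an implicit non-degeneracy assumption is in force. For every downstream use in this paper the non-strict inequality suffices, so your caveat is exactly the right resolution.
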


Instead of finding an exact solution, we are interested in approximate solutions to program~\eqref{eq:dual_opt} which is defined as follows:

\begin{definition}[$(1+\epsilon)$-approximate solution, variant of Definition~2.2 in \cite{cohen2019near}]\label{def:approx_sol}
For any $\epsilon > 0$, we say that $w \in \R^n_{\geq 0}$ is a $(1+\epsilon)$-approximate solution to program~\eqref{eq:dual_opt} if $w$ satisfies
\begin{align*}
    \sum_{i=1}^n w_i = (1 \pm \epsilon)d, \\
    a_i^\top (\sum_{j=1}^n w_j a_j a_j^\top)^{-1} a_i \leq 1 + \epsilon, \forall i \in [n].
\end{align*}
\end{definition}

Next, the following lemma states that an approximate solution is a good approximation of the exact John Ellipsoid. Recall that the exact John Ellipsoid $E^*$ satisfies $E^* \subseteq P \subseteq \sqrt{d} E^*$.
\begin{lemma}[Approximate John Ellipsoid, variant of Lemma 2.3 in~\cite{cohen2019near}]\label{lem:approx_john_ell} 
Let $P := \{x \in \R^d : - \mathbf 1_n \leq Ax \leq \mathbf 1_n\}$ be a symmetric convex polytope. Let $w \in \R^n_{\geq 0}$ be a $(1+\epsilon)$-approximate solution to program~\eqref{eq:dual_opt}. Let
\begin{align*}
    E := \{ x \in \R^d : x^\top (A^\top \diag(w) A) x \leq 1\}.
\end{align*}
Then 
\begin{align*}
    \frac{1}{\sqrt{1+\epsilon}} \cdot E \subseteq P \subseteq \sqrt{(1+\epsilon)d} \cdot E.
\end{align*}
Moreover,
\begin{align*}
    \vol(\frac{1}{\sqrt{1+\epsilon}}E) \geq e^{-\epsilon d} \cdot \vol(E^*).
\end{align*}
\end{lemma}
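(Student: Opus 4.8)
The plan is to establish the two containment statements first and then derive the volume bound as a corollary. Write $M := A^\top \diag(w) A = \sum_{j=1}^n w_j a_j a_j^\top$, so that $E = \{x : x^\top M x \le 1\}$. The key observation is that the condition $a_i^\top M^{-1} a_i \le 1+\epsilon$ for all $i$ controls how $P$ sits relative to $E$, while the trace/sum condition $\sum_i w_i = (1\pm\epsilon)d$ controls the dilation factor on the other side.

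First I would prove $P \subseteq \sqrt{(1+\epsilon)d}\cdot E$. Take $x \in P$, so $|\langle a_i, x\rangle| \le 1$ for all $i$. Then
\begin{align*}
    x^\top M x = \sum_{i=1}^n w_i \langle a_i, x\rangle^2 \le \sum_{i=1}^n w_i = (1\pm\epsilon)d \le (1+\epsilon)d,
\end{align*}
which says exactly that $x/\sqrt{(1+\epsilon)d} \in E$, i.e. $P \subseteq \sqrt{(1+\epsilon)d}\cdot E$. Next I would prove $\frac{1}{\sqrt{1+\epsilon}}\cdot E \subseteq P$. Suppose $x^\top M x \le 1$; I need $|\langle a_i,x\rangle| \le \sqrt{1+\epsilon}$ for each $i$. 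Using the Cauchy--Schwarz / generalized inequality with respect to the PSD form $M$, write $\langle a_i, x\rangle = (M^{-1/2}a_i)^\top (M^{1/2}x)$, so
\begin{align*}
    \langle a_i,x\rangle^2 \le (a_i^\top M^{-1} a_i)\,(x^\top M x) \le (1+\epsilon)\cdot 1 = 1+\epsilon,
\end{align*}
using the approximate optimality condition $a_i^\top M^{-1} a_i \le 1+\epsilon$. Hence $|\langle a_i, x/\sqrt{1+\epsilon}\rangle| \le 1$, giving $\frac{1}{\sqrt{1+\epsilon}}\cdot E \subseteq P$.

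For the volume bound, I would compare $\vol\big(\frac{1}{\sqrt{1+\epsilon}}E\big)$ with $\vol(E^*)$ using the fact that $E^*$ is the \emph{maximum} volume inscribed ellipsoid. Since $\frac{1}{\sqrt{1+\epsilon}}E \subseteq P$ by the first part, and $\vol(E) \propto \det(M)^{-1/2}$, we get $\vol\big(\frac{1}{\sqrt{1+\epsilon}}E\big) = (1+\epsilon)^{-d/2}\vol(E)$; on the other hand optimality of the dual solution value should give that $-\log\det M + \sum_i w_i - d$ is within an additive $O(\epsilon d)$ of the optimum, which translates (via the primal-dual relationship $\log\det G^2 = -\log\det M$ at the optimum, Eq.~\eqref{eq:G_-2_optimal}) into $\log\det M \le \log\det M^* + O(\epsilon d)$, i.e. $\det M \le e^{O(\epsilon d)}\det(M^*)$. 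Plugging in and absorbing constants into the exponent yields $\vol\big(\frac{1}{\sqrt{1+\epsilon}}E\big) \ge e^{-\epsilon d}\vol(E^*)$ after choosing the constant appropriately (or rescaling $\epsilon$). Alternatively, and more cleanly, I would bound $\vol(E) \ge \vol(E^*)$ directly: the approximate solution nearly satisfies the optimality KKT conditions, so its objective value is nearly optimal, and one can quantify the gap purely in terms of $\epsilon$.

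The main obstacle I anticipate is the volume inequality rather than the containments, which are essentially two-line Cauchy--Schwarz arguments. The subtlety is that $w$ is only an \emph{approximate} dual solution, so I cannot directly invoke $\det M = \det M^*$; I need to track how the additive slack $\epsilon$ in both the sum condition and the leverage-score condition propagates through the dual objective $\sum_i w_i - \log\det M - d$ and relates it to $\log\det(M^*)$, and then convert that back to a ratio of volumes. The cleanest route is probably to show that the dual objective at $w$ exceeds its minimum by at most $O(\epsilon d)$ (using that the minimum is $0$ up to the $-d$ normalization, by strong duality with the primal value $\log\det(G^*)^2$), then note $\log\det M = \sum_i w_i - (\text{dual objective}) - d \le d(1+\epsilon) - \log\det M^* \cdot(\text{comparison}) $; I would need to be careful that the $\log\det$ term appearing in the approximate solution is controlled on \emph{both} sides, which follows since $M \preceq (\text{something})\cdot M^*$ is not available, but the scalar dual objective comparison is. Once $\det M \le e^{2\epsilon d}\det M^*$ (say) is in hand, the volume inequality is immediate after a constant rescaling of $\epsilon$.
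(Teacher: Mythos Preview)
Your two containment arguments are correct and essentially identical to the paper's (the paper just proves them in the opposite order). The gap is in the volume bound.

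You correctly identify that the route is a duality-gap argument, but you never actually close it. You assert that ``the dual objective at $w$ exceeds its minimum by at most $O(\epsilon d)$'' without saying why; the approximate-optimality conditions in Definition~\ref{def:approx_sol} do not give this directly, and your attempted inequality $\log\det M \le d(1+\epsilon) - \log\det M^* \cdot(\text{comparison})$ is not a well-formed bound. The missing idea is that the first containment you just proved, $\frac{1}{\sqrt{1+\epsilon}}E \subseteq P$, hands you a concrete \emph{feasible primal point} $G' := ((1+\epsilon)M)^{-1/2}$. Once you pair this $G'$ with the feasible dual point $w$ and write out the duality gap
\[
\Big(\sum_{i} w_i - \log\det M - d\Big) - \log\det\big((1+\epsilon)M\big)^{-1},
\]
the $\log\det M$ terms cancel, leaving $\sum_i w_i - d + d\log(1+\epsilon) \le \epsilon d + \epsilon d = 2\epsilon d$. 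Since the optimal primal value $\log\det(G^*)^2$ is sandwiched between the primal value at $G'$ and the dual value at $w$ by weak duality, you get $\log\det(G')^2 \ge \log\det(G^*)^2 - 2\epsilon d$, hence $\det G' \ge e^{-\epsilon d}\det G^*$ and the volume inequality follows. Without spotting this specific primal witness and the resulting cancellation, you have no mechanism to compare $\det M$ with $\det M^*$; your ``alternatively'' paragraph (near-KKT implies near-optimal objective) would require a separate quantitative argument that you do not supply.
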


\begin{proof}
First, we show that $\frac{1}{\sqrt{1+\epsilon}}\cdot E\subseteq P$. Let $x \in \frac{1}{\sqrt{1+\epsilon}}\cdot E$. Then $x^\top G^{-2} x \leq \frac{1}{1+\epsilon}$ where $G^{-2} = A^\top \diag(w) A$. For every $i \in [n]$,
\begin{align*}
    (a_i^\top x)^2 = &~ ((Ga_i)^\top G^{-1}x)^2 \\
    \leq &~ \|Ga_i\|_2^2 \|G^{-1}x\|_2^2 \\
    \leq &~ (a_i^TG^2a_i)(x^\top G^{-2} x) \\ 
    \leq &~ (1+\epsilon) \cdot \frac{1}{1+\epsilon} = 1
\end{align*}
where the first step comes from $GG^{-1} = I$ and $G$ is symmetric, the second step is because of Cauchy-Schwarz inequality, the third step follows from writing norms as inner products, and the last step uses Definition~\ref{def:approx_sol} and $x \in \frac{1}{\sqrt{1+\epsilon}}\cdot E$.

Next, we show that $P \subseteq \sqrt{(1+\epsilon)d}\cdot E$.
Let $x \in P$. Then $|a_i^\top x| \leq 1$ for all $i \in [n]$. Hence
\begin{align*}
    x^\top (A^\top \diag(w) A) x = &~ x^\top (\sum_{i=1}^n w_i a_i^\top a_i) x \\
    = &~ \sum_{i=1}^n w_i (a_i^\top x)^2 \\
    \leq &~ \sum_{i=1}^n w_i \\
    \leq &~ (1+\epsilon)d
\end{align*}
where the first step uses Fact~\ref{fact:rank1}, the second step is a simple calculation, the third step follows from that for all $i \in [n]$ we have $|a_i^\top x| \leq 1$, and the last step uses Definition~\ref{def:approx_sol}.

Finally, we show that $\vol(\frac{1}{\sqrt{1+\epsilon}}E) \geq e^{-d\epsilon} \cdot \vol(E^*)$. Since $\frac{1}{\sqrt{1+\epsilon}}\cdot E\subseteq P$, $G' := ((1+\epsilon)A^\top \diag(w) A)^{-1/2}$ is a feasible solution to program~\eqref{eq:john_ell_opt} and $w$ is a feasible solution to program~\eqref{eq:dual_opt}, we have the following duality gap:
\begin{align}\label{eq:vol_derive}
     &~ (\sum_{i=1}^n w_i - \log \det(\sum_{i=1}^n w_i a_i a_i^\top) - d) - \log \det ((1+\epsilon)\sum_{i=1}^n w_i a_i a_i^\top )^{-1} \\
    \leq &~((1+\epsilon)d - \log \det(\sum_{i=1}^n w_i a_i a_i^\top) - d) - \log \det ((1+\epsilon)\sum_{i=1}^n w_i a_i a_i^\top )^{-1} \notag \\
    = &~((1+\epsilon)d - \log \det(\sum_{i=1}^n w_i a_i a_i^\top) - d) - \log ((1+\epsilon)^{-d} \det (\sum_{i=1}^n w_i a_i a_i^\top )^{-1}) \notag \\
    = &~ \epsilon d - \log (1+\epsilon)^{-d} \notag \\
    = &~ \epsilon d + d\log (1+\epsilon) \notag \\
    = &~ 2\epsilon d \notag 
\end{align}
where the first step uses Definition~\ref{def:approx_sol}, the second step uses the property of determinant, the third and fourth steps follow from simple algebra, the last step follows from $\log(1+x) \leq x$ for all $x\geq 0$. Let $E^* = \{x\in \R^d : x^\top (G^*)^{-2} x \leq 1\}$. Then we have
\begin{align}\label{eq:vol_derive_2}
    \log \det (G')^{2} \geq &~ \log\det (G^*)^2 - 2\epsilon d \\
    \geq &~ \log (\det (G^*)^2 e^{-2\epsilon d}) \notag \\ 
    \geq &~ \log (e^{-\epsilon d}\det (G^*))^2 \notag
\end{align}
where the first step follows from Eq.~\eqref{eq:vol_derive}, the second and last steps follow from basic algebra. By Eq.~\eqref{eq:vol_derive_2}, we have 
\begin{align}\label{eq:vol_derive_3}
    \det G' \geq e^{-\epsilon d} \det G^*.
\end{align}
Since the volume of a ellipsoid is proportional to its representation matrix, we have $\vol(\frac{1}{\sqrt{1+\epsilon}} \cdot E)$ is proportional to $\log G'$ and $\vol(E^*)$ is proportional to $\log G^*$. Then by Eq.\eqref{eq:vol_derive_3}, we have
\begin{align*}
    \vol(\frac{1}{\sqrt{1+\epsilon}}\cdot E) \geq e^{-\epsilon d} \vol(E^*). \tag*{\qedhere}
\end{align*}   
\end{proof}

Note that both Definition~\ref{def:approx_sol} and Lemma~\ref{lem:approx_john_ell} are a bit different from \cite{cohen2019near, song2022faster}. This is because their algorithms normalize the approximate solution to have the sum $d$. However, normalization require summing all entries, which needs at least $\Omega(n)$ time. To avoid this expensive step, we have to directly output the approximation with normalizing it. This leads to a slightly loose upper bound with factor $\sqrt{1+\epsilon}$.
\section{Main Algorithm}\label{sec:main_alg}
In this section, we present our fast quantum algorithm for computing the John ellipsoid. We begin by introducing the key component that accelerates the computation process: the quantum fixed point iteration. Following this, we establish a telescoping lemma, which is crucial for analyzing the correctness of our algorithm. Last, we will put everything together and show our main result.

\begin{algorithm*}[!t]
\caption{Quantum Algorithm for approximating John Ellipsoid}\label{alg:main} 
    \begin{algorithmic}[1]
    \Procedure{\textsc{ApproxJE}}{$A \in \R^{n \times d}, \epsilon \in (0,0.5)$} 
    \State \Comment{Symmetric convex polytope defined as $\{ x \in \R^d: -\mathbf{1}_n \leq Ax \leq \mathbf{1}_n\}$}
    \State \Comment{Target approximation error $\epsilon \in (0,1)$}
    \State $T \gets \Theta(\epsilon^{-1} \log(n/d))$ \Comment{Number of iterations}
    \State Initialize $w^{(1)}_i \gets d/n, \forall i \in [n]$ \Comment{Initialization}
    \For{$k=1 \to T$} 
        \State \Comment{Ideally we want to compute $\hat{w}^{(k+1)} = w^{(k)} \circ f(w^{(k)})$.} 
        \State $w^{(k+1)} \gets (1 \pm \epsilon) w^{(k)} \circ f(w^{(k)})$. \Comment{By Lemma~\ref{lem:qua_fix_point_iter}.}   
    \EndFor
    \State $w \gets \frac{1}{T}\sum_{k=1}^T w^{(k)}$ \Comment{By Lemma~\ref{lem:approx-guar}.}

    \State \Return $w$
    \EndProcedure
    \end{algorithmic}
\end{algorithm*}

\subsection{Quantum Fixed Point Iteration}\label{sub:quan_fix_point}
Given a matrix $A \in \R^{n\times d}$, let $f: \R^n \to \R^n$ be a function defined as $f(w) = (f_1(w), f_2(w), \ldots, f_n(w))$ such that for every $i \in [n]$, we have
\begin{align*}
    f_i(w) := a_i^\top (\sum_{j=1}^n w_j a_j a_j^\top)^{-1} a_i = a_i^\top(A^\top \diag(w) A)^{-1} a_i.
\end{align*}

Then we derive the following fixed point optimality condition of program~\eqref{eq:dual_opt}.
\begin{lemma}[Fixed point optimality of program~\eqref{eq:dual_opt}, implicitly in Section~2 of \cite{cohen2019near}]\label{lem:fix_point_opt}
A vector $w \in \R^n_{\geq 0}$ is an optimal solution to program~\eqref{eq:dual_opt} if and only if the following holds:
\begin{align*}
    \sum_{i=1}^n w_i = d, \\
    w_i=w_i f_i(w), \forall i \in [n].
\end{align*}
\end{lemma}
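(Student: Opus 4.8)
The plan is to derive this fixed-point reformulation directly from the optimality condition already established in Lemma~\ref{lem:opt_cond}, which is the cleanest route since the two conditions differ only in notation and the handling of the $w_i = 0$ case. First I would recall that by Lemma~\ref{lem:opt_cond}, $w \in \R^n_{\geq 0}$ is optimal if and only if $\sum_{i=1}^n w_i = d$ together with $f_i(w) = 1$ whenever $w_i \neq 0$ and $f_i(w) < 1$ whenever $w_i = 0$, where I have rewritten $a_i^\top(\sum_j w_j a_j a_j^\top)^{-1} a_i$ as $f_i(w)$ using the definition of $f$. Note that $\sum_i w_i = d > 0$ forces $A^\top \diag(w) A \succ 0$ (using that $A$ is full rank and $w$ has at least one positive entry giving enough mass; more carefully, optimality guarantees the matrix is invertible, which is implicit in the statement of Lemma~\ref{lem:opt_cond}), so $f_i(w)$ is well-defined.

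The key observation is that the equation $w_i = w_i f_i(w)$ for all $i \in [n]$ is exactly equivalent to the conjunction of the two pointwise conditions in Lemma~\ref{lem:opt_cond}, \emph{except} that it permits $w_i = 0$ together with $f_i(w) \geq 1$. So the forward direction is immediate: if $w$ is optimal then $\sum_i w_i = d$ and for each $i$ either $w_i \neq 0$, in which case $f_i(w) = 1$ so $w_i f_i(w) = w_i$, or $w_i = 0$, in which case $w_i f_i(w) = 0 = w_i$ trivially. For the reverse direction, suppose $\sum_i w_i = d$ and $w_i = w_i f_i(w)$ for all $i$. For indices with $w_i \neq 0$ we may divide to get $f_i(w) = 1$, matching the second condition of Lemma~\ref{lem:opt_cond}. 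The remaining obligation is to show $f_i(w) < 1$ for every $i$ with $w_i = 0$; this does not follow from the fixed-point equation alone and is the main obstacle.

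To close that gap I would invoke the sum constraint $\sum_i w_i = d$ together with Proposition~\ref{prop:lev_fork} applied to the matrix $B$ with rows $\sqrt{w_i}\, a_i$: since the leverage scores of $B$ sum to $\rank(B) \le d$ and in fact equal $d$ when $A^\top \diag(w) A$ is invertible, one has $\sum_{i=1}^n w_i f_i(w) = \sum_{i : w_i \ne 0} w_i a_i^\top (B^\top B)^{-1} a_i = \sum_i \sigma_i(B) = d$, exactly as computed in Eq.~\eqref{eq:wi_G_ai} in the proof of Lemma~\ref{lem:dual_classics}. But $w_i = w_i f_i(w)$ gives $\sum_i w_i f_i(w) = \sum_i w_i = d$ as well, so this is automatically consistent and does not yet pin down the zero-weight coordinates. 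Therefore I would instead appeal to convexity: program~\eqref{eq:dual_opt} is convex, its objective is differentiable on $\{w : w > 0\}$ and goes to $+\infty$ on the boundary where $A^\top\diag(w)A$ becomes singular, and the first-order stationarity condition $\partial_{w_i}\big(\sum_j w_j - \log\det(\sum_j w_j a_j a_j^\top) - d\big) = 1 - f_i(w) = 0$ holds at interior critical points; the fixed-point equation $w_i = w_i f_i(w)$ is precisely the KKT stationarity condition $w_i(1 - f_i(w)) = 0$ for the nonnegativity-constrained problem, and combined with dual feasibility $1 - f_i(w) \ge 0$ (which must be separately verified, or equivalently taken as part of the KKT system) it characterizes the global minimum by convexity.

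Thus the honest structure of the proof is: (i) rewrite Lemma~\ref{lem:opt_cond} in terms of $f_i$; (ii) show the fixed-point system plus $\sum_i w_i = d$ implies $f_i(w) = 1$ on the support; (iii) argue that at any such $w$ the condition $f_i(w) \le 1$ off the support holds automatically, which I expect to be the delicate step — the cleanest justification is that $w_i = w_i f_i(w)$ is the stationarity part of the KKT conditions for the convex program~\eqref{eq:dual_opt}, and since \cite{cohen2019near} show (and Lemma~\ref{lem:opt_cond} records) that the complete KKT system including the off-support inequality characterizes optimality, while the off-support inequality is implied once we know $w$ arises as a limit of the fixed-point iteration with positive iterates (or, alternatively, we simply restrict attention to the strictly feasible regime where $f_i$ is continuous and the inequality is preserved). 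I would present the clean equivalence for the case all relevant quantities are well-defined and cite \cite{cohen2019near, todd2016minimum} for the subtlety about zero coordinates, since the statement is explicitly attributed as "implicitly in Section~2 of \cite{cohen2019near}."
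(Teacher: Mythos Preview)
Your approach is the same as the paper's: both directions go through Lemma~\ref{lem:opt_cond}, and your forward direction is identical to the paper's. For the reverse direction the paper also first deduces $f_i(w)=1$ on the support from $w_i(1-f_i(w))=0$; where you carefully flag the off-support inequality $f_i(w)<1$ as ``the main obstacle,'' the paper dispatches it in one line by writing $f_i(w)=\sigma_i(\sqrt{\diag(w)}A)$ and invoking Proposition~\ref{prop:lev_fork} to get $f_i(w)\in(0,1)$. That shortcut is in fact imprecise --- by Definition~\ref{def:lev_score} one has $\sigma_i(\sqrt{\diag(w)}A)=w_i f_i(w)$, not $f_i(w)$, so when $w_i=0$ the leverage-score bound gives no information about $f_i(w)$ --- which is exactly the subtlety you isolate. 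Your instinct to route the off-support case through the KKT dual-feasibility condition (and ultimately to cite \cite{cohen2019near,todd2016minimum} for it) is therefore well-placed; the paper's own argument should really be read as doing the same, since the off-support inequality is part of the characterization in Lemma~\ref{lem:opt_cond} rather than a consequence of the fixed-point equation alone.
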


\begin{proof}
Suppose that $w \in \R^n_{\geq 0}$ is an optimal solution to program~\eqref{eq:dual_opt}, by Lemma~\ref{lem:opt_cond}, we have
\begin{align*}
    \sum_{i=1}^n w_i = d, \\
    f_i(w) = 1, \text{if } w_i \neq 0, \\
    f_i(w) < 1, \text{if } w_i = 0.
\end{align*}
For any $i \in [n]$ with $w_i \neq 0$, we have $f_i(w) = 1$, and hence $w_i=w_i f_i(w)$. For any $i \in [n]$ with $w_i = 0$, it clearly holds that $w_i=w_i f_i(w)$.

Conversely, suppose that the following holds:
\begin{align*}
    \sum_{i=1}^n w_i = d, \\
    w_i=w_i f_i(w), \forall i \in [n].
\end{align*}
For any $i \in [n]$, we have $w_i(1-f_i(w)) = 0$ by the second equation above. By Definition~\ref{def:lev_score}, $f_i(w) = \sigma_i(\sqrt{\diag(w)}A)$, and by Proposition~\ref{prop:lev_fork}, $f_i(w)\in (0,1)$. Thus, if $w_i \neq 0$, then $w_i(1-f_i(w)) = 0$ implies $f_i(w)= 1$. If $w_i = 0$, then $w_i(1-f_i(w)) = 0$ implies $f_i(w) < 1$. Therefore, the optimality of $w$ follows from Lemma~\ref{lem:opt_cond}.
\end{proof}

Inspired by Lemma~\ref{lem:fix_point_opt}, it is natural to consider the fixed point iteration scheme $w^{(k+1)} = w^{(k)} \circ f(w^{(k)})$. We will use quantum tool to speedup this procedure. However, we cannot directly compute $\sqrt{\diag(w^{(k)})}A$ and apply Lemma~\ref{lem:qua_lev_score} to compute its leverage score. This is because computing $\sqrt{\diag(w^{(k)})}A$ takes $O(nd)$ time, which is unaffordable for our purpose. Hence, we can only implicitly keep track of its entries.

Next, we have the following lemma which states that there is a quantum algorithm which can efficiently approximate the fixed point iteration, which is the key ingredient of our algorithm for computing the John ellipsoid.

\begin{lemma}[Quantum fixed point iteration]\label{lem:qua_fix_point_iter}
Suppose that we have the query access to a matrix $A \in \R^{n \times d}$ and a vector $w^{(k)} \in \R^n$, for every $\epsilon \in (0,1)$, there is a quantum algorithm that outputs a vector $w^{(k+1)}$ which is a $(1+\epsilon)$-approximation to original $
\hat{w}^{(k+1)} := w^{(k)} \circ f(w^{(k)})$ such that
\begin{itemize}
    \item For every $i \in [n]$, we have $w^{(k+1)}_i = (1\pm \epsilon)w^{(k)}_i f(w^{(k)}_i)$.
    \item It makes $\Tilde{O}(\epsilon^{-1}\sqrt{nd})$ queries to $A$ and $w^{(k)}$.
    \item It takes $\Tilde{O}(\epsilon^{-1}\sqrt{n}d^{1.5} + \epsilon^{-2}d^\omega)$ time.
    \item The success probability is $1 - 1/\poly(n)$.
\end{itemize}
\end{lemma}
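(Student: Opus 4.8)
\medskip

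The plan is to observe that, by the third characterization in Definition~\ref{def:lev_score} together with the chain of equalities in the proof of Lemma~\ref{lem:dual_classics}, the quantity $f_i(w^{(k)}) = a_i^\top (A^\top \diag(w^{(k)}) A)^{-1} a_i$ equals $(w^{(k)}_i)^{-1}\sigma_i(B^{(k)})$, where $B^{(k)}$ is the matrix whose $i$-th row is $b_i^\top = \sqrt{w^{(k)}_i}\,a_i^\top$; hence $\hat w^{(k+1)}_i = w^{(k)}_i f_i(w^{(k)}) = \sigma_i(B^{(k)})$. So the whole task reduces to computing a $(1+\epsilon)$-multiplicative approximation to the leverage scores of $B^{(k)}$, for which Lemma~\ref{lem:qua_lev_score} gives exactly the claimed query count $\tilde O(\epsilon^{-1}\sqrt{nd})$, running time $\tilde O(\epsilon^{-1}\sqrt n d^{1.5} + \epsilon^{-2}d^\omega)$, and success probability $1 - 1/\poly(n)$. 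The output vector $w^{(k+1)} := \tilde\sigma(B^{(k)})$ then satisfies $w^{(k+1)}_i = (1\pm\epsilon)\sigma_i(B^{(k)}) = (1\pm\epsilon)w^{(k)}_i f_i(w^{(k)})$, which is precisely the first bullet.

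\medskip

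The one genuine subtlety — and the step I expect to be the main obstacle — is that we are \emph{not} allowed to materialize $B^{(k)} = \sqrt{\diag(w^{(k)})}\,A$ explicitly, since writing down its $nd$ entries costs $\Omega(nd)$ time, which defeats the purpose. I would resolve this by checking that Lemma~\ref{lem:qua_lev_score} only needs \emph{query access} to its input matrix, not an explicit copy: a row query to $B^{(k)}$ can be simulated on the fly by one row query to $A$ (returning $a_i^\top$) and one query to the vector $w^{(k)}$ (returning $w^{(k)}_i$), followed by an $O(d)$-time rescaling $a_i^\top \mapsto \sqrt{w^{(k)}_i}\,a_i^\top$. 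This row-query simulation composes with the quantum subroutine with only $O(d)$ multiplicative time overhead per query and no overhead in the query count, so the stated bounds are preserved (the $\tilde O$ already absorbs the $O(d)$ factor into the $d^{1.5}$ term). I would also note that $w^{(k)}$ has nonnegative entries throughout the run of Algorithm~\ref{alg:main} — true at initialization $w^{(1)}_i = d/n$ and preserved inductively since $f_i \ge 0$ and the $(1\pm\epsilon)$ perturbation keeps things positive — so $\sqrt{w^{(k)}_i}$ is well-defined and $B^{(k)}$ is a genuine real matrix of full column rank (inheriting full rank from $A$ when all $w^{(k)}_i > 0$), which is what Lemma~\ref{lem:qua_lev_score} requires.

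\medskip

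Putting it together, the proof is short: (i) reduce $\hat w^{(k+1)}$ to $\sigma(B^{(k)})$ via the leverage-score identity; (ii) argue $B^{(k)}$ admits $O(d)$-time-per-query access built from queries to $A$ and $w^{(k)}$; (iii) invoke Lemma~\ref{lem:qua_lev_score} on $B^{(k)}$ and read off the four bullet points, absorbing the per-query rescaling cost into the $\tilde O(\cdot)$ for the time bound. No new quantum primitive is needed beyond what Lemma~\ref{lem:qua_lev_score} already provides; the content is entirely in the algebraic identity $f_i(w) = \sigma_i(\sqrt{\diag(w)}A)/w_i$ and in the query-simulation bookkeeping.
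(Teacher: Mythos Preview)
Your proposal is correct and follows essentially the same approach as the paper: simulate row-query access to $B^{(k)} = \sqrt{\diag(w^{(k)})}A$ from queries to $A$ and $w^{(k)}$, then invoke Lemma~\ref{lem:qua_lev_score} on $B^{(k)}$ and use the identity relating $\hat w^{(k+1)}_i$ to $\sigma_i(B^{(k)})$. If anything, your algebra is cleaner --- you correctly identify $\hat w^{(k+1)}_i = \sigma_i(B^{(k)})$ directly, whereas the paper writes $f_i(w) = \sigma_i(\sqrt{\diag(w^{(k)})}A)$ (which is off by a factor of $w_i$) and then compensates by multiplying $\tilde\sigma_i$ by $w^{(k)}_i$; your extra remarks on positivity of $w^{(k)}$ and the $O(d)$ per-query simulation cost are details the paper leaves implicit.
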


\begin{proof}
Since we are given query access to $A$ and $w^{(k)}$, this directly implies that we can simulate the query access to $\sqrt{\diag(w^{(k)})}A$ because the $i$-th row of $\sqrt{\diag(w^{(k)})}A$ is $\sqrt{w^{(k)}_i}a_i^\top$. By Lemma~\ref{lem:qua_lev_score}, for any $\epsilon \in (0,1)$, there exists a quantum algorithm providing query access to
\begin{align*}
    \Tilde{\sigma}_i = (1 \pm \epsilon) \sigma_i(\sqrt{\diag(w^{(k)})}A).
\end{align*}
By simple algebraic operation, $f_i(w) = \sigma_i(\sqrt{\diag(w^{(k)})}A)$. Let 
\begin{align*}
    w^{(k+1)}_i := w^{(k)}_i \Tilde{\sigma}_i = (1\pm \epsilon)w^{(k)}_i f(w^{(k)}_i)
\end{align*} 
which is the desired output. The number of queries, running time and success probability follows from Lemma~\ref{lem:qua_lev_score}.
\end{proof}

\subsection{Telescoping Lemma}\label{sub:tele_lem}

The proof of our main result relies on the following lemma shown in~\cite{cohen2019near}, which states that for every $i \in [n]$, $\log f_i$ is a convex function.

\begin{lemma}[Convexity, Lemma~3.4 in \cite{cohen2019near}]\label{lem:convex}
For $i \in [n]$, we define $\phi_i : \R^n \to \R$ as follows:
\begin{align*}
    \phi_i(w) := \log f_i(w) = \log(a_i^\top (\sum_{i=1}^m w_i a_i a_i^\top)^{-1} a_i).
\end{align*}
Then $\phi_i$ is convex.
\end{lemma}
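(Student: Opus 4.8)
The plan is to show that $\phi_i(w) = \log f_i(w) = \log\bigl(a_i^\top (\sum_{j=1}^m w_j a_j a_j^\top)^{-1} a_i\bigr)$ is convex by exhibiting it as a composition and integral of functions whose convexity is standard. First I would fix $i$ and, to reduce clutter, write $M(w) := \sum_{j} w_j a_j a_j^\top = A^\top \diag(w) A$, which is an affine (indeed linear) function of $w$, and positive definite on the relevant domain $w \in \R^n_{>0}$ (using that $A$ has full rank). So $\phi_i(w) = \log\bigl(a_i^\top M(w)^{-1} a_i\bigr)$, a log of a quadratic form in the inverse of a positive definite matrix depending affinely on $w$.

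The key analytic fact I would invoke is that for a fixed vector $a$, the map $X \mapsto \log\bigl(a^\top X^{-1} a\bigr)$ is convex on the positive definite cone $\mathbb{S}_{++}^d$. Granting this, convexity of $\phi_i$ follows immediately: $\phi_i = \bigl(X \mapsto \log(a_i^\top X^{-1} a_i)\bigr) \circ \bigl(w \mapsto M(w)\bigr)$ is the composition of a convex function with an affine map, hence convex. So the whole problem reduces to proving the matrix-level claim, which is where the real work is. To prove that $X \mapsto \log(a^\top X^{-1} a)$ is convex on $\mathbb{S}_{++}^d$, I would verify convexity along an arbitrary line segment: take $X_0, X_1 \succ 0$, set $X_t = (1-t) X_0 + t X_1$ for $t \in [0,1]$, and show $g(t) := \log(a^\top X_t^{-1} a)$ satisfies $g(t) \le (1-t) g(0) + t g(1)$. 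The cleanest route is to compute $g''(t) \ge 0$. Writing $h(t) := a^\top X_t^{-1} a > 0$ and $\dot X := X_1 - X_0$ (constant), repeated differentiation of $X_t^{-1}$ via $\frac{d}{dt} X_t^{-1} = - X_t^{-1} \dot X X_t^{-1}$ gives $h'(t) = -a^\top X_t^{-1} \dot X X_t^{-1} a$ and $h''(t) = 2\, a^\top X_t^{-1} \dot X X_t^{-1} \dot X X_t^{-1} a$. Then $g'' = (h h'' - (h')^2)/h^2$, so it suffices to show $h(t) h''(t) \ge h'(t)^2$. Substituting $u := X_t^{-1/2} a$ and $B := X_t^{-1/2} \dot X X_t^{-1/2}$ (symmetric), this inequality becomes $(u^\top u)(2\, u^\top B^2 u) \ge (u^\top B u)^2$, which is immediate from Cauchy--Schwarz since $(u^\top B u)^2 \le (u^\top u)(u^\top B^2 u) \le (u^\top u)(2\, u^\top B^2 u)$. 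This finishes the line-restricted convexity and hence the matrix-level claim.

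An alternative, arguably slicker, derivation of the same matrix-level fact uses the variational identity from the leverage-score perspective already present in the paper: by the third characterization in Definition~\ref{def:lev_score} applied to $\sqrt{\diag(w)}A$ (or directly), one has
\begin{align*}
    a_i^\top M(w)^{-1} a_i = \max_{x \in \R^d} \frac{(a_i^\top x)^2}{x^\top M(w) x} = \max_{x \in \R^d} \frac{(a_i^\top x)^2}{\sum_{j=1}^n w_j (a_j^\top x)^2}.
\end{align*}
Taking logarithms, $\phi_i(w) = \max_{x \neq 0}\Bigl( 2\log|a_i^\top x| - \log\bigl(\sum_j w_j (a_j^\top x)^2\bigr)\Bigr)$. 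For each fixed $x$, the inner expression is $\text{const} - \log\bigl(\sum_j w_j c_j\bigr)$ with $c_j = (a_j^\top x)^2 \ge 0$, and $w \mapsto -\log(\text{linear in } w)$ is convex; since a pointwise supremum of convex functions is convex, $\phi_i$ is convex. I would likely present this variational argument as the main proof since it is short and dovetails with the leverage-score framing, and mention the direct second-derivative computation as the fallback.

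The main obstacle is purely bookkeeping: making sure the domain is the open positive-definite region so that $M(w)^{-1}$ and all the logarithms are well-defined and smooth (full rank of $A$ plus $w$ in the positive orthant handles this; convexity on the open domain extends to the closure by continuity where finite), and being careful that the supremum in the variational formula is attained / the function is the correct one (it is, since the Rayleigh-type quotient is scale-invariant in $x$ and the maximum over the unit sphere is attained). No step requires more than elementary convex analysis once the variational identity is in hand.
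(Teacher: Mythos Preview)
The paper does not supply its own proof of this lemma; it is stated with a citation to Lemma~3.4 of \cite{cohen2019near} and used as a black box. Your proposal is correct on both fronts: the second-derivative computation along a line (reducing $hh'' \ge (h')^2$ to Cauchy--Schwarz after the substitution $u = X_t^{-1/2}a$, $B = X_t^{-1/2}\dot X X_t^{-1/2}$) is valid, and the variational argument via the third characterization in Definition~\ref{def:lev_score}, writing $\phi_i$ as a pointwise supremum over $x$ of functions of the form $\text{const} - \log(\text{affine in } w)$, is the standard and slickest route. Either would serve as a self-contained proof here; the variational one is what is typically given in the cited reference.
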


Since $\phi_i$ is a convex function for every $i \in [n]$, \cite{cohen2019near} applies Jensen's inequality to show the following telescoping lemma, which will be used to analyze the correctness of our algorithm.

\begin{lemma}[Telescoping, Lemma~C.4 of~\cite{cohen2019near}]\label{lem:tele} 
Let $T$ be the number of iterations executed in Algorithm~\ref{alg:main}. Let $w$ be the output of Algorithm~1. For any $i \in [n]$, it holds that
\begin{align*}
    \phi_i(w) \leq \frac{1}{T}\log \frac{n}{d} + \frac{1}{T}\sum_{k=1}^T\log \frac{\hat{w}^{(k)}_i}{w^{(k)}_i} 
\end{align*}
\end{lemma}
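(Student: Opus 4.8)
The plan is to follow the standard telescoping argument from fixed-point iteration analysis, exploiting the convexity of $\phi_i$ established in Lemma~\ref{lem:convex}. First I would recall the definition of the output $w = \frac{1}{T}\sum_{k=1}^T w^{(k)}$ and apply Jensen's inequality: since $\phi_i$ is convex, $\phi_i(w) = \phi_i\left(\frac{1}{T}\sum_{k=1}^T w^{(k)}\right) \leq \frac{1}{T}\sum_{k=1}^T \phi_i(w^{(k)})$. This reduces the problem to bounding $\sum_{k=1}^T \phi_i(w^{(k)}) = \sum_{k=1}^T \log f_i(w^{(k)})$.

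The key step is to rewrite each term $\log f_i(w^{(k)})$ using the relation between consecutive iterates. Since $\hat{w}^{(k+1)}_i = w^{(k)}_i f_i(w^{(k)})$, we have $f_i(w^{(k)}) = \hat{w}^{(k+1)}_i / w^{(k)}_i$, so $\log f_i(w^{(k)}) = \log \hat{w}^{(k+1)}_i - \log w^{(k)}_i$. I would then write $\log \hat{w}^{(k+1)}_i - \log w^{(k)}_i = (\log \hat{w}^{(k+1)}_i - \log w^{(k+1)}_i) + (\log w^{(k+1)}_i - \log w^{(k)}_i)$. Summing over $k = 1$ to $T$, the second group telescopes to $\log w^{(T+1)}_i - \log w^{(1)}_i$, and the first group contributes $\sum_{k=1}^T \log(\hat{w}^{(k+1)}_i / w^{(k+1)}_i)$, which matches (after reindexing) the error term in the statement.

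To finish, I would bound the telescoped boundary terms. We have $w^{(1)}_i = d/n$, so $-\log w^{(1)}_i = \log(n/d)$, giving the $\frac{1}{T}\log\frac{n}{d}$ term after dividing by $T$. For $\log w^{(T+1)}_i$, I would need to argue $w^{(T+1)}_i \leq 1$ (or at least $\log w^{(T+1)}_i \leq 0$, perhaps up to lower-order terms absorbed into the analysis), which should follow since $f_i \leq 1+\epsilon$ roughly and the weights stay controlled—essentially because $w^{(k)}_i f_i(w^{(k)})$ with $f_i$ being a leverage-score-like quantity bounded by something close to $1$ keeps $w^{(k)}_i$ from growing, and the initial value is small. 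One must also handle the reindexing carefully so that the sum $\sum_{k=1}^T \log(\hat{w}^{(k+1)}_i/w^{(k+1)}_i)$ aligns with $\sum_{k=1}^T \log(\hat{w}^{(k)}_i/w^{(k)}_i)$ up to a single boundary term.

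The main obstacle I anticipate is the careful bookkeeping of the boundary terms in the telescoping sum—specifically, showing that $\log w^{(T+1)}_i$ does not contribute a positive term (equivalently, that the iterates remain bounded above by $1$ or close to it), and correctly matching the index ranges between $\hat{w}^{(k+1)}$ and $\hat{w}^{(k)}$ so the final inequality is exactly as stated. This requires invariants on the iterates $w^{(k)}$ maintained throughout Algorithm~\ref{alg:main}, which presumably come from the multiplicative approximation guarantee in Lemma~\ref{lem:qua_fix_point_iter} together with basic properties of $f_i$; pinning down the precise form of these invariants (and whether the $(1\pm\epsilon)$ slack from the quantum subroutine needs to be tracked here or is deferred to a later lemma) is the delicate part.
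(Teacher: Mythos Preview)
The paper does not supply its own proof of this lemma; it is quoted as Lemma~C.4 of \cite{cohen2019near}, with only the remark that one ``applies Jensen's inequality.'' Your plan is exactly that standard argument and is correct.

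On the obstacle you flag: the boundary term is cleaner than you fear. Observe that $\hat{w}^{(k+1)}_i = w^{(k)}_i f_i(w^{(k)})$ is \emph{exactly} the leverage score $\sigma_i\bigl(\sqrt{\diag(w^{(k)})}A\bigr)$ (Definition~\ref{def:lev_score}), and hence $\hat{w}^{(k+1)}_i \leq 1$ unconditionally by Proposition~\ref{prop:lev_fork}. In your decomposition the telescoped boundary term $\log w^{(T+1)}_i$ combines with the leftover $k=T+1$ summand $\log(\hat{w}^{(T+1)}_i/w^{(T+1)}_i)$ from your error sum to yield $\log \hat{w}^{(T+1)}_i \leq 0$, which can be dropped. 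The reindexing from $\sum_{k=2}^{T+1}$ to $\sum_{k=1}^{T}$ uses the convention $\hat{w}^{(1)} = w^{(1)}$ (no approximation occurs at initialization), so the $k=1$ term vanishes. No further invariants on the iterates are needed, and the $(1\pm\epsilon)$ slack from the quantum subroutine is indeed deferred entirely to Lemma~\ref{lem:tele_quant}.
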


\begin{lemma}[Telescoping by Quantum Fixed Point Iteration]\label{lem:tele_quant} 
Let $T$ be the number of iterations executed in Algorithm~\ref{alg:main}. Let $w$ be the output of Algorithm~1. For any $\epsilon \in (0, 1)$, with probability $1 - 1/\poly(n)$, for any $i \in [n]$, it holds that
\begin{align*}
    \phi_i(w) \leq \frac{1}{T} \log (\frac{n}{d}) + \log(1+\epsilon).
\end{align*}
\end{lemma}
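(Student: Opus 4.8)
The plan is to combine the classical telescoping bound of Lemma~\ref{lem:tele} with the multiplicative error guarantee of the quantum fixed point iteration in Lemma~\ref{lem:qua_fix_point_iter}. Recall that Lemma~\ref{lem:tele} gives, for every $i \in [n]$,
\begin{align*}
    \phi_i(w) \leq \frac{1}{T}\log\frac{n}{d} + \frac{1}{T}\sum_{k=1}^T \log\frac{\hat{w}^{(k)}_i}{w^{(k)}_i},
\end{align*}
so the task reduces to controlling the ratio $\hat{w}^{(k)}_i/w^{(k)}_i$ at every iteration. Here the subtlety is that in the quantum algorithm we never produce the exact update $\hat{w}^{(k+1)} = w^{(k)}\circ f(w^{(k)})$; we only produce $w^{(k+1)}$ with $w^{(k+1)}_i = (1\pm\epsilon)\,w^{(k)}_i f_i(w^{(k)}) = (1\pm\epsilon)\,\hat{w}^{(k+1)}_i$, where $\hat{w}^{(k+1)}$ is the ideal update \emph{applied to the actual iterate} $w^{(k)}$. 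So the relevant quantity in the telescoping sum is exactly $\hat{w}^{(k)}_i/w^{(k)}_i$, which by the per-iteration guarantee of Lemma~\ref{lem:qua_fix_point_iter} lies in $[1/(1+\epsilon),\,1+\epsilon]$ (after, if necessary, rescaling $\epsilon$ by a constant to turn a $(1\pm\epsilon)$ two-sided bound into a clean $1+\epsilon$ upper bound on the ratio).

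First I would set up the probability bookkeeping: Algorithm~\ref{alg:main} runs $T = \Theta(\epsilon^{-1}\log(n/d))$ iterations, and each invocation of the quantum fixed point iteration succeeds with probability $1 - 1/\poly(n)$; by a union bound over all $T = \poly(\log n)\cdot\poly(\epsilon^{-1})$ iterations (and after adjusting the polynomial in the failure probability, which is harmless since $T$ is at most polynomial in the relevant parameters), with probability $1 - 1/\poly(n)$ every iteration's multiplicative guarantee holds simultaneously. Condition on this event for the rest of the argument. Second, on this event, for each $k \in [T]$ and each $i \in [n]$ we have $w^{(k)}_i = (1\pm\epsilon)\hat{w}^{(k)}_i$, hence $\hat{w}^{(k)}_i/w^{(k)}_i \leq \frac{1}{1-\epsilon} \leq 1 + 2\epsilon$ for $\epsilon < 1/2$, so $\log(\hat{w}^{(k)}_i/w^{(k)}_i) \leq \log(1+2\epsilon)$. (One may instead phrase Lemma~\ref{lem:qua_fix_point_iter} so that the ratio is directly at most $1+\epsilon$; the constant is immaterial and I would state the lemma with whatever normalization makes the final bound read as $\log(1+\epsilon)$.) Third, plug this into Lemma~\ref{lem:tele}:
\begin{align*}
    \phi_i(w) \leq \frac{1}{T}\log\frac{n}{d} + \frac{1}{T}\sum_{k=1}^T \log(1+\epsilon) = \frac{1}{T}\log\frac{n}{d} + \log(1+\epsilon),
\end{align*}
which is exactly the claimed inequality.

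The main obstacle, such as it is, is conceptual rather than computational: one must be careful that the telescoping identity of Lemma~\ref{lem:tele} is stated in terms of $\hat{w}^{(k)}$ defined as the ideal update of the \emph{actual} (perturbed) iterate $w^{(k-1)}$, not the ideal update of an ideal iterate — i.e. the errors do not need to be tracked cumulatively through the recursion, because Lemma~\ref{lem:tele}'s proof (via Jensen on the convex functions $\phi_i$ from Lemma~\ref{lem:convex}) only ever compares consecutive iterates. Once that is understood, the per-step error is absorbed cleanly and there is no compounding. A secondary point to handle with care is the union bound: since $T$ depends on $\epsilon$ and $\log(n/d)$, one should note that $T \le \poly(n)$ so that a $1-1/\poly(n)$ per-call success probability survives the union bound at the cost of a slightly smaller polynomial, which still gives $1 - 1/\poly(n)$ overall. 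With these two observations in place the proof is a two-line substitution.
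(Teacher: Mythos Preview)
Your proposal is correct and follows essentially the same route as the paper: invoke Lemma~\ref{lem:tele}, apply a union bound over the $T$ iterations using Lemma~\ref{lem:qua_fix_point_iter} to get $w^{(k)} = (1\pm\epsilon)\hat{w}^{(k)}$ for all $k$, bound $\hat{w}^{(k)}_i/w^{(k)}_i \leq 1/(1-\epsilon) \leq 1+2\epsilon$, and substitute. The paper's proof in fact ends with $\log(1+2\epsilon)$ rather than $\log(1+\epsilon)$, so your remark about absorbing the constant by rescaling $\epsilon$ is exactly how the stated bound is obtained.
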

\begin{proof}

By Lemma~\ref{lem:tele_quant}, for any $i \in [n]$, we have
\begin{align*}
    \phi_i(w) \leq \frac{1}{T}\log \frac{n}{d} + \frac{1}{T}\sum_{k=1}^T\log \frac{\hat{w}^{(k)}_i}{w^{(k)}_i} .
\end{align*}

We only need to bound the second term on the right-hand side. For sufficiently large $n$, by union bound and Lemma~\ref{lem:qua_fix_point_iter}, we have 
\begin{align}\label{eq:uni-bound}
    \Pr[w^{(k)} = (1 \pm \epsilon) \hat{w}^{(k)}, \forall k \in [T]] \geq 1/\poly(n).
\end{align}

Hence, with probability $1 - 1/\poly(n)$,
\begin{align*}
    \frac{1}{T}\sum_{k=1}^T\log \frac{\hat{w}^{(k)}_i}{w^{(k)}_i} \leq  &~ \frac{1}{T}\sum_{k=1}^T\log \frac{\hat{w}^{(k)}_i}{(1-\epsilon)\hat{w}^{(k)}_i} \\
    = &~ \frac{1}{T}\sum_{k=1}^T \log \frac{1}{1-\epsilon} \\
    = &~ \log \frac{1}{1-\epsilon} \\
    \leq &~ \log (1+2\epsilon).
\end{align*}
where the first step simply follows from Eq.~\eqref{eq:uni-bound} in the previous proofs, the second and third step come from basic algebra, and the last step follows from the fact that the inequality $1/(1-\epsilon) \leq 1+2\epsilon$ holds for any $\epsilon \in [0, 0.5]$.
\end{proof}

\subsection{Approximation Guarantee and Main Results}\label{sub:perf_guar}

Now we can put everything together and show our main result (Theorem~\ref{thm:main}). 
First, we demonstrate that Algorithm~\ref{alg:main} outputs a $(1+\epsilon)$-approximate solution to program~\eqref{eq:dual_opt} when the number of iterations $T$ is appropriately chosen, i.e., $T = \Theta(\epsilon^{-1}\log(n/d))$.

\begin{lemma}[Approximation guarantee]\label{lem:approx-guar}
For any $\epsilon \in (0,0.5)$, with probability $1 - 1/\poly(n)$, Algorithm~\ref{alg:main} produces an $(1+\epsilon)$-approximate solution.
\end{lemma}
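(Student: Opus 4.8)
The plan is to verify the two defining properties of a $(1+\epsilon)$-approximate solution (Definition~\ref{def:approx_sol}) for the averaged vector $w = \frac{1}{T}\sum_{k=1}^T w^{(k)}$ output by Algorithm~\ref{alg:main}, conditioning throughout on the high-probability event from Eq.~\eqref{eq:uni-bound} that every iterate satisfies $w^{(k)} = (1\pm\epsilon)\hat{w}^{(k)}$.

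\textbf{Upper bound on the quadratic form.} The key is Lemma~\ref{lem:tele_quant}: for every $i \in [n]$, $\phi_i(w) = \log f_i(w) \leq \frac{1}{T}\log(n/d) + \log(1+\epsilon)$ with probability $1 - 1/\poly(n)$. Choosing $T = \Theta(\epsilon^{-1}\log(n/d))$ with an appropriate constant makes $\frac{1}{T}\log(n/d) \leq \epsilon$ (or some small constant multiple thereof), so $\phi_i(w) \leq \epsilon + \log(1+\epsilon) \leq 2\epsilon$, hence $f_i(w) = a_i^\top (A^\top\diag(w)A)^{-1} a_i \leq e^{2\epsilon}$. This is not literally $1+\epsilon$, but since $e^{2\epsilon} \le 1 + O(\epsilon)$ for $\epsilon \in (0,0.5)$, one rescales $\epsilon$ by a constant factor at the start; I would state the lemma with this understanding, so the second condition $f_i(w) \leq 1+\epsilon$ holds after constant rescaling.

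\textbf{Bound on the sum $\sum_i w_i$.} For the first condition $\sum_{i=1}^n w_i = (1\pm\epsilon)d$, I would track the sum across iterations. Since $\hat{w}^{(k+1)}_i = w^{(k)}_i f_i(w^{(k)})$ and $f_i(w^{(k)}) = \sigma_i(\sqrt{\diag(w^{(k)})}A)$, Proposition~\ref{prop:lev_fork} gives $\sum_i \hat{w}^{(k+1)}_i = \sum_i w^{(k)}_i \sigma_i(\sqrt{\diag(w^{(k)})}A) = d$ exactly. Combined with $w^{(k+1)} = (1\pm\epsilon)\hat{w}^{(k+1)}$, this yields $\sum_i w^{(k+1)}_i = (1\pm\epsilon)d$ for every $k \geq 2$ (and for $k=1$ the initialization gives $\sum_i w^{(1)}_i = d$ exactly). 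Averaging over $k = 1,\dots,T$ preserves the bound: $\sum_i w_i = \frac{1}{T}\sum_k \sum_i w^{(k)}_i = (1\pm\epsilon)d$. Finally, nonnegativity $w \in \R^n_{\geq 0}$ is immediate since leverage scores are nonnegative and the initialization is positive.

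\textbf{Main obstacle.} The delicate point is the constant-factor slack between $e^{2\epsilon}$ and $1+\epsilon$: the telescoping argument inherently produces multiplicative errors that compound into an exponential, so one must either absorb this into the definition of $\epsilon$ (rescaling at the cost of constants in $T$ and the runtime, which only affect the hidden constants in $\tilde O(\cdot)$) or state Definition~\ref{def:approx_sol} with an $e^{O(\epsilon)}$ bound. I would handle it by the rescaling route and remark on it explicitly. A secondary subtlety is making sure the union bound over all $T = \tilde O(\epsilon^{-1})$ iterations still leaves success probability $1 - 1/\poly(n)$, which follows since each invocation of Lemma~\ref{lem:qua_fix_point_iter} succeeds with probability $1 - 1/\poly(n)$ and $T$ is only polylogarithmic in $n$ — we simply boost the per-iteration success probability by a $\polylog$ factor absorbed into $\tilde O$.
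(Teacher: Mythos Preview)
Your proposal is correct and follows essentially the same route as the paper: both bound $\sum_i w_i$ by using that $\sum_i \hat w^{(k)}_i = \sum_i \sigma_i(\sqrt{\diag(w^{(k-1)})}A) = d$ together with the $(1\pm\epsilon)$ approximation from Lemma~\ref{lem:qua_fix_point_iter}, and both bound $f_i(w)$ via the telescoping Lemma~\ref{lem:tele_quant} with $T = \Theta(\epsilon^{-1}\log(n/d))$, absorbing a constant-factor slack in $\epsilon$. One small notational slip: you wrote $\sum_i w^{(k)}_i \sigma_i(\sqrt{\diag(w^{(k)})}A)$, but in fact $w^{(k)}_i f_i(w^{(k)}) = \sigma_i(\sqrt{\diag(w^{(k)})}A)$ already, so the sum is just $\sum_i \sigma_i = d$ without the extra $w^{(k)}_i$ factor --- your conclusion is unaffected.
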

\begin{proof}
Let $w$ be the output of Algorithm~\ref{alg:main}.
By Eq.~\eqref{eq:uni-bound}, we have 
\begin{align}
\label{eq:w_approx}
    w = \frac{1}{T}\sum_{k=1}^T w^{(k)} \leq \frac{1}{T}\sum_{k=1}^T (1+\epsilon)\hat{w}^{(k)}.
\end{align}
with probability $1 - 1/\poly(n)$.

Hence,
\begin{align}
\label{eq:sum_w_approx}
    \sum_{i=1}^n w_i = &~ \sum_{i=1}^n\frac{1}{T}\sum_{k=1}^T w^{(k)}_i \notag \\
    \leq &~ \sum_{i=1}^n \frac{1}{T} \sum_{k=1}^T (1+\epsilon)\hat{w}^{(k)}_i \notag \\
    = &~ (1+\epsilon) \sum_{k=1}^T \frac{1}{T}\sum_{i=1}^n \hat{w}^{(k)}_i \notag \\
    = &~ (1+\epsilon) \sum_{k=1}^T \frac{1}{T} \notag \\
    = &~ 1+\epsilon.
\end{align}
where the first step comes from $w = \frac{1}{T}\sum_{k=1}^T w^{(k)}$, the second step uses Eq.~\eqref{eq:w_approx}, the third step follows from rearranging, the fourth step is because of Lemma~\ref{prop:lev_fork}, and the last step simply follows from basic calculation.

Since $T = 2\epsilon^{-1}\log(n/d)$, By Lemma~\ref{lem:tele}, for $\epsilon \in (0,1)$, we have
\begin{align}
\label{eq:phi_i_w_approx}
    \phi_i(w) \leq &~ \frac{1}{T}\log(\frac{n}{d}) + \log(1+\epsilon) \notag \\
    \leq &~ \frac{\epsilon}{2} + \log(1+\epsilon) \notag \\
    \leq &~ \log(1+\epsilon) + \log(1+\epsilon) \notag \\
    = &~ \log(1+\epsilon)^2 \notag\\
    \leq  &~ \log (1+3\epsilon)
\end{align}
where the first step comes from Lemma~\ref{lem:tele}, the second step follows from $T = 2\epsilon^{-1}\log(n/d)$, the third step comes from $\epsilon/2 \leq \log(1+\epsilon)$ for all $\epsilon \in (0,1)$, the fourth step comes from basic algebra, the fifth step uses simple algebra, and the last step comes the fact that the inequality $(1+\epsilon)^2 \leq 1+3\epsilon$ holds for all $\epsilon \in (0, 0.25)$.

By Lemma~\ref{lem:convex} and
Eq.~\eqref{eq:phi_i_w_approx}, for any $\epsilon \in (0, 0.5)$, we have 
\begin{align}\label{eq:lev_approx}
    a_i^\top(\sum_{j=1}^n w_j a_j a_j^\top) a_i \leq 1+\epsilon.
\end{align}

Finally, Definition~\ref{def:approx_sol}, Eq.~\eqref{eq:w_approx} and Eq.~\eqref{eq:lev_approx} imply that $w$ is an $(1+\epsilon)$-approximate solution to program~\eqref{eq:dual_opt}.
\end{proof}

Putting everything together, we can show our main result.

\begin{theorem}[Main result]\label{thm:main}
    Given $A \in \R^{n \times d}$, let $P := \{x \in \R^d : -\mathbf{1}_n \leq Ax \leq \mathbf{1}_n\}$ be a symmetric convex polytope.
    For all $\epsilon \in (0,0.5)$, Algorithm~\ref{alg:main} outputs $w \in \R^n$ such that the ellipsoid $E := \{x \in \R^d: x^\top(A^\top\diag(w)A)x \leq 1\}$ satisfies 
    \begin{itemize}
        \item $ \frac{1}{\sqrt{1+\epsilon}} \cdot E \subseteq P \subseteq \sqrt{(1+\epsilon)d} \cdot E. $
        \item $\vol(\frac{1}{\sqrt{1+\epsilon}}E) \geq e^{-\epsilon d} \cdot \vol(E^*)$ where $E^*$ is the exact John ellipsoid of $P$.
        \item It runs in $\wt{O}((\epsilon^{-2}\sqrt{n}d^{1.5}+\epsilon^{-3}d^\omega)$ time.
        \item It makes $\Tilde{O}(\epsilon^{-2}\sqrt{nd})$ queries to $A$.
        \item The success probability is $1 - 1/\poly(n)$.
    \end{itemize}
\end{theorem}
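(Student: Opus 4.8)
The plan is to assemble the theorem from the correctness and complexity pieces already established, since essentially all of the analytic content lives in Lemma~\ref{lem:approx-guar} (the output of Algorithm~\ref{alg:main} is a $(1+\epsilon)$-approximate dual solution) and Lemma~\ref{lem:approx_john_ell} (any such approximate dual solution yields an ellipsoid with the claimed sandwiching and volume guarantees). So the first step is simply: invoke Lemma~\ref{lem:approx-guar} to obtain, with probability $1-1/\poly(n)$, a vector $w$ that is a $(1+\epsilon)$-approximate solution to program~\eqref{eq:dual_opt}; then feed exactly this $w$ into Lemma~\ref{lem:approx_john_ell}, whose hypotheses are met verbatim, to get the first two bullet points $\frac{1}{\sqrt{1+\epsilon}}\cdot E \subseteq P \subseteq \sqrt{(1+\epsilon)d}\cdot E$ and $\vol(\frac{1}{\sqrt{1+\epsilon}}E)\ge e^{-\epsilon d}\vol(E^*)$.

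For the running time and query complexity, the plan is to bound the per-iteration cost and multiply by $T$. Algorithm~\ref{alg:main} runs $T=\Theta(\epsilon^{-1}\log(n/d))$ iterations, and each iteration invokes the quantum fixed-point iteration of Lemma~\ref{lem:qua_fix_point_iter} with a constant multiple of $\epsilon$ as its error parameter; by that lemma a single call costs $\wt{O}(\epsilon^{-1}\sqrt{n}d^{1.5}+\epsilon^{-2}d^\omega)$ time and $\wt{O}(\epsilon^{-1}\sqrt{nd})$ row queries to $A$ (crucially, the implicit matrix $\sqrt{\diag(w^{(k)})}A$ is never formed — only query access to it is simulated, which is handled inside that lemma). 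Multiplying by $T$ and absorbing the $\log(n/d)$ factor into $\wt{O}(\cdot)$ gives total time $\wt{O}(\epsilon^{-2}\sqrt{n}d^{1.5}+\epsilon^{-3}d^\omega)$ and total query count $\wt{O}(\epsilon^{-2}\sqrt{nd})$, matching bullets three and four; the remaining operations (initialization and the final averaging $w\gets\frac1T\sum_k w^{(k)}$) are dominated by these. For the success probability, a union bound over the $T$ calls to Lemma~\ref{lem:qua_fix_point_iter}, each failing with probability $1/\poly(n)$, together with the failure probability of Lemma~\ref{lem:approx-guar}, keeps the overall failure probability at $1/\poly(n)$, as already recorded in Eq.~\eqref{eq:uni-bound}.

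The one point that needs care — the closest thing to a real obstacle — is that the quantum subroutine only returns a relative-error approximation of each fixed-point step, so one must confirm these per-step errors do not compound over the $T$ iterations. This is exactly what the telescoping analysis buys us: Lemma~\ref{lem:tele_quant} shows $\phi_i(w)\le \frac1T\log(n/d)+\log(1+\epsilon)$ regardless of how the $\epsilon$-perturbations accumulate, and the choice $T=\Theta(\epsilon^{-1}\log(n/d))$ then forces $\phi_i(w)\le\log(1+O(\epsilon))$; combined with $\sum_i w_i\le(1+\epsilon)d$, which follows because each $\hat w^{(k)}$ has coordinate sum exactly $d$ by Proposition~\ref{prop:lev_fork} and the $(1\pm\epsilon)$ perturbation only inflates it by $1+\epsilon$, this is precisely Definition~\ref{def:approx_sol}. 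Hence a fixed constant-factor error tolerance per quantum iteration suffices, which is what keeps the $\epsilon$-dependence of the final running time at $\epsilon^{-3}$ rather than something worse; after rescaling $\epsilon$ by an absolute constant if needed, the theorem follows.
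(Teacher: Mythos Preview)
Your proposal is correct and follows essentially the same route as the paper: invoke Lemma~\ref{lem:approx-guar} and Lemma~\ref{lem:approx_john_ell} for the first two bullets, and read off the per-iteration cost from Lemma~\ref{lem:qua_fix_point_iter} and multiply by $T=\Theta(\epsilon^{-1}\log(n/d))$ for the time and query bounds. The paper's own proof is in fact a two-line pointer to exactly these lemmas; your additional paragraph recapitulating why the per-step approximation errors do not compound is just an unpacking of what is already inside the proof of Lemma~\ref{lem:approx-guar}, not a different argument.
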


\begin{proof}
    The correctness of main result directly follows from Lemma~\ref{lem:approx-guar} and Lemma~\ref{lem:approx_john_ell}.
    The time complexity and number of queries follow from Lemma~\ref{lem:qua_fix_point_iter}.
\end{proof}

\section{Tensor Generalization of John Ellipsoid}\label{sec:tensor}
In this section, we consider a tensor generalization of the John ellipsoid, where the polytope is constructed by the Kronecker product. This problem can be solved in a similar fashion we have developed before, but instead we need the quantum speedup for spectral approximation of the Kronecker product~\cite{gao2024quantumspeedupspectralapproximation}. However, in this section, we show that this tensor version of John ellipsoid computation problem can be decomposed into two ordinary John ellipsoid computation problems and call Algorithm~\ref{alg:main} as a subroutine to solve it. 

\subsection{Kronecker Product}
We state the definition of Kronecker product and some basic properties of it.
\begin{definition}
    For two matrices $A \in R^{n_1 \times d_1}, B \in R^{n_2 \times d_2}$, we use $A \otimes B$ to denote the Kronecker product of $A$ and $B$ where 
    \begin{align*}
        (A \otimes B)_{i_1 + (i_2-1)n_1, j_1+(j_2-1)d_1} = A_{i_1,j_1}B_{i_2,j_2}.
    \end{align*}
\end{definition}
\begin{fact} [Properties of Kronecker product] \label{fact:kronecker}
Let $A, B, C, D \in \R^{n \times d}$. Let $w_1, w_2 \in \R^n$. Then, we have the following properties
\begin{itemize}
    \item $A \otimes (B + C) = A \otimes B + A \otimes C$
    \item $(B + C) \otimes A = B \otimes A + C \otimes A$
    \item $(A \otimes B) \otimes C = A \otimes (B \otimes C)$
    \item $(A \otimes B) (C \otimes D) = (AC) \otimes (BD)$
    \item $(A \otimes B)^\top = A^\top \otimes B^\top$
    \item $(A \otimes B)^{-1} = A^{-1} \otimes B^{-1}$
    \item $\diag (w_1 \otimes w_2) = \diag (w_1) \otimes \diag (w_2)$ 
\end{itemize}
\end{fact}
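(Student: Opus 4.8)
The plan is to verify each of the seven identities by comparing entries, using the indexing convention from the definition of $A \otimes B$. It is convenient to introduce the row- and column-index maps $\phi(i_1,i_2) := i_1 + (i_2-1)n_1$ and $\psi(j_1,j_2) := j_1 + (j_2-1)d_1$, so that $(A\otimes B)_{\phi(i_1,i_2),\psi(j_1,j_2)} = A_{i_1,j_1}B_{i_2,j_2}$; these are bijections $[n_1]\times[n_2]\to[n_1n_2]$ and $[d_1]\times[d_2]\to[d_1d_2]$, which is the only structural fact about them we need. (In the places where inverses appear, the matrices are implicitly square and invertible, consistent with how the fact is used later.)

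First I would dispatch the easy identities: bilinearity (items~1 and~2), the transpose rule (item~5), and the $\diag$ rule (item~7). In each case, fix a pair of multi-indices, expand both sides via the definition, and observe that the claim collapses to a scalar fact: distributivity of scalar multiplication over addition for items~1 and~2; the symmetry $(A^\top\otimes B^\top)_{\phi(j_1,j_2),\psi(i_1,i_2)} = A^\top_{j_1,i_1}B^\top_{j_2,i_2} = A_{i_1,j_1}B_{i_2,j_2} = (A\otimes B)_{\phi(i_1,i_2),\psi(j_1,j_2)}$ for item~5; and for item~7 the observation that $\diag(w_1\otimes w_2)$ is supported on the diagonal with value $(w_1\otimes w_2)_{\phi(i_1,i_2)} = (w_1)_{i_1}(w_2)_{i_2}$, which matches the diagonal of $\diag(w_1)\otimes\diag(w_2)$. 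Each of these is at most one line of index chasing. Associativity (item~3) is the same entrywise comparison carried out with a threefold index map, again purely mechanical.

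The one step carrying actual content is the mixed-product property (item~4). For conformable $A,B,C,D$, I would compute the $(\phi(i_1,i_2),\psi(k_1,k_2))$ entry of $(A\otimes B)(C\otimes D)$ as $\sum_{\ell}(A\otimes B)_{\phi(i_1,i_2),\ell}(C\otimes D)_{\ell,\psi(k_1,k_2)}$, and then re-index the contraction by the bijection $\ell = \psi(j_1,j_2)$ with $(j_1,j_2)$ ranging over the appropriate product set. This rewrites the sum as $\sum_{j_1}\sum_{j_2} A_{i_1,j_1}B_{i_2,j_2}C_{j_1,k_1}D_{j_2,k_2}$, which factors as $\left(\sum_{j_1} A_{i_1,j_1}C_{j_1,k_1}\right)\left(\sum_{j_2} B_{i_2,j_2}D_{j_2,k_2}\right) = (AC)_{i_1,k_1}(BD)_{i_2,k_2}$, exactly the $(\phi(i_1,i_2),\psi(k_1,k_2))$ entry of $(AC)\otimes(BD)$. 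The main obstacle is nothing deep — it is just keeping the index bookkeeping straight when re-indexing the sum over the ``middle'' coordinate — but it is the only place where care is required.

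Finally, the inverse identity (item~6) is a corollary of item~4: applying the mixed-product rule gives $(A\otimes B)(A^{-1}\otimes B^{-1}) = (AA^{-1})\otimes(BB^{-1}) = I\otimes I = I$, where the last equality is again an entrywise check ($(I\otimes I)_{\phi(i_1,i_2),\psi(j_1,j_2)} = \mathbf{1}[i_1=j_1]\mathbf{1}[i_2=j_2] = \mathbf{1}[\phi(i_1,i_2)=\psi(j_1,j_2)]$), and the product in the opposite order is the identity by the same computation; hence $A^{-1}\otimes B^{-1}$ is the two-sided inverse of $A\otimes B$.
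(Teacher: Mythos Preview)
Your proof is correct. The paper itself does not supply a proof of this fact: it is stated as a standard collection of Kronecker-product identities and simply invoked where needed (notably in the proof of Lemma~\ref{lem:w_decompose}). Your entrywise verification via the index bijections $\phi,\psi$, with the mixed-product rule as the one substantive computation and the inverse rule derived from it, is the standard textbook argument and is entirely sound.
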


\subsection{Problem Formulation}
\begin{definition}[Symmetric convex polytope defined by Kronecker product] \label{def:sym_covx_polt_tensor_final}
We define the symmetric convex polytope as a convex set $P \subseteq \R^{d^2}$ that takes the form of
\begin{align*}
    P := &~ \{ x \in \R^{d^2} : |\langle a_{1,i_1} \otimes a_{2,i_2}, x \rangle| \leq 1, \forall i,j \in [n]\}.
\end{align*}
where $a_{1,i_1}^\top$ and $a_{2,i_2}^\top$ are $i_1$-th row of $A_1 \in \R^{n\times d}$ and $i_2$-th row of $A_2 \in \R^{n \times d}$, respectively.
\end{definition}

We assume $A_1$ and $A_2$ are full rank.
We also define $\mathsf{A} := A_1 \otimes A_2$. Then the $i_1 + n(i_2-1)$-th row in $\mathsf{A}$, denoted by $\mathsf{a}_{i_1+(i_2-1)n}^\top$, is equal to $ a_{1,i_1}^\top \otimes a_{2,i_2}^\top$. Note that $\rank(\mathsf{A}) = \rank(A_1) \cdot \rank(A_2)$, so $\mathsf{A}$ is also full rank. Similarly as before, we define the origin-centered ellipsoid as follows.

\begin{definition}[Origin-centered ellipsoid]\label{def:orig_cen_ell_tensor_final}
An origin-centered ellipsoid is a set $E \subseteq \R^{d^2}$ takes the form of
\begin{align*}
    E := \{ x \in \R^{d^2} : x^\top G^{-2} x \leq 1\}
\end{align*}
where $G \in \R^{d^2}$ is a positive semidefinite matrix.
\end{definition}

\begin{definition} [Primal program, tensor version of Definition~\ref{def:primal_classic}]
\label{def:primal_tensor_final}
Let $A_1,A_2$ be defined in Definition~\ref{def:sym_covx_polt_tensor_final} and let $\mathsf{A} = A_1 \otimes A_2$. Let $G$ be defined in Definition~\ref{def:orig_cen_ell_tensor_final}. The optimization problem of John Ellipsoid computation could be formulated as
\begin{align} \label{eq:je_primal_tensor_final_2}
    \max_{G} &~ \log \det G^2, \\
    \mathrm{s.t.} &~ \| G (a_{1,i_1} \otimes a_{2, i_2})\|_2 \leq 1, \forall i_1, i_2 \in [n], \notag \\
    &~ G \succeq 0 \notag,
\end{align}
or equivalently,
\begin{align} \label{eq:je_primal_tensor_final}
    \max_{G} &~ \log \det G^2, \\
    \mathrm{s.t.} &~ \| G \mathsf{a}_{i} \|_2 \leq 1, \forall i \in [n^2], \notag \\
    &~ G \succeq 0 \notag.
\end{align}
\end{definition}

\begin{lemma}[Dual program, tensor version of Lemma~\ref{lem:dual_classics}]
\label{lem:dual_tensor_final}
The dual program of program~\eqref{eq:je_primal_tensor_final} is
\begin{align}
\label{eq:je_dual_tensor}
\min_\mathsf{w} & ~ \sum_{i=1}^{n^2} \mathsf{w}_i - \log \det (\sum_{i=1}^{n^2} \mathsf{w}_{i} \mathsf{a}_{i} \mathsf{a}_{i}^\top) - d^2, \\
\mathrm{s.t.} & ~ \mathsf{w}_i \geq 0, \forall i \in [n^2] \notag.
\end{align}
\end{lemma}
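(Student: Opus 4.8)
The plan is to mirror the classical duality derivation in the proof of Lemma~\ref{lem:dual_classics}, working with the Kronecker-structured data $\mathsf{A} = A_1 \otimes A_2 \in \R^{n^2 \times d^2}$ in place of the generic matrix $A$. Since program~\eqref{eq:je_primal_tensor_final} has exactly the same form as program~\eqref{eq:john_ell_opt} — maximize $\log\det G^2$ subject to $\|G\mathsf{a}_i\|_2 \le 1$ for $i \in [n^2]$ and $G \succeq 0$ — the entire Lagrangian computation carries over verbatim with $n \mapsto n^2$, $d \mapsto d^2$, and $a_i \mapsto \mathsf{a}_i$. The only ingredients used in the classical proof are: the form of the Lagrangian, the stationarity condition in $G$, the identity $\sum_i \mathsf{w}_i \|G\mathsf{a}_i\|_2^2 = d^2$ at the stationary point, and logarithm arithmetic. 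The first, second, and fourth are purely formal and transfer immediately.

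First I would write the Lagrangian $\mathcal{L}(G,\mathsf{w}) = \log\det G^2 + \sum_{i=1}^{n^2} \mathsf{w}_i(1 - \|G\mathsf{a}_i\|_2^2)$, set $\nabla_G \mathcal{L} = 0$, and obtain $G^{-2} = \sum_{i=1}^{n^2} \mathsf{w}_i \mathsf{a}_i \mathsf{a}_i^\top$, exactly as in Eq.~\eqref{eq:G_-2_optimal}. Next I would verify the analogue of Eq.~\eqref{eq:wi_G_ai}, namely $\sum_{i=1}^{n^2} \mathsf{w}_i \|G\mathsf{a}_i\|_2^2 = d^2$: setting $\mathsf{b}_i = \sqrt{\mathsf{w}_i}\,\mathsf{a}_i$ and $\mathsf{B}$ the matrix with rows $\mathsf{b}_i^\top$, this sum equals $\sum_i \mathsf{b}_i^\top (\mathsf{B}^\top \mathsf{B})^{-1} \mathsf{b}_i = \sum_i \sigma_i(\mathsf{B}) = \rank(\mathsf{B})$ by Proposition~\ref{prop:lev_fork}. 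Since $\mathsf{A}$ is full rank (as noted in the excerpt, $\rank(\mathsf{A}) = \rank(A_1)\rank(A_2) = d^2$) and $\mathsf{B}$ has the same column space whenever $\mathsf{w} > 0$, this rank is $d^2$. Finally, substituting $G^{-2} = \sum_i \mathsf{w}_i \mathsf{a}_i \mathsf{a}_i^\top$ back into $\mathcal{L}$ and using $\log\det G^2 = -\log\det G^{-2} = -\log\det(\sum_i \mathsf{w}_i \mathsf{a}_i \mathsf{a}_i^\top)$ together with $\sum_i \mathsf{w}_i \|G\mathsf{a}_i\|_2^2 = d^2$ gives $\mathcal{L} = \sum_{i=1}^{n^2} \mathsf{w}_i - \log\det(\sum_{i=1}^{n^2}\mathsf{w}_i \mathsf{a}_i \mathsf{a}_i^\top) - d^2$, which is precisely the dual objective in~\eqref{eq:je_dual_tensor}; the constraint $\mathsf{w}_i \ge 0$ is the standard sign constraint on the Lagrange multipliers for the inequality constraints.

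The only place where one should be careful — and the closest thing to an obstacle — is confirming that the stationary-point identity evaluates to $d^2$ rather than to $\rank(\mathsf{A})$ in general; this is exactly where full-rankness of $A_1$ and $A_2$ enters, via multiplicativity of rank under Kronecker products (Fact~\ref{fact:kronecker}). One could also optionally remark that the Kronecker structure is not actually needed for this lemma — any full-rank matrix in $\R^{n^2 \times d^2}$ would do — but it is cleaner to state it in the form we will later exploit. No convexity or strong-duality subtleties arise here beyond what was already handled in the classical case, so the proof is a short transcription.
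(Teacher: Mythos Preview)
Your proposal is correct and matches the paper's implicit approach: the paper does not give a separate proof of Lemma~\ref{lem:dual_tensor_final}, labeling it simply as the ``tensor version of Lemma~\ref{lem:dual_classics}'' and thereby signaling that the argument is the verbatim transcription you describe, with $n \mapsto n^2$, $d \mapsto d^2$, and $a_i \mapsto \mathsf{a}_i$. Your added care about why the leverage-score sum equals $d^2$ (via $\rank(\mathsf{A}) = \rank(A_1)\rank(A_2)$) is a nice explicit touch that the paper leaves implicit.
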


We remark that our results can be generalized to $\mathsf A$ is defined by the Kronecker product of $m$ matrices, i.e., $\mathsf{A} = A_1 \otimes A_2 \otimes \cdots \otimes A_m$. As long as $m \leq O(1)$, our algorithm will eventually achieve the same running time. For simplicity, we only present the case of $\mathsf{A} = A_1 \otimes A_2$.

\subsection{Optimality Conditions and Approximate Solution}
\label{sec:optimality_tensor}

\begin{lemma}[Optimality condition of program~\eqref{eq:je_dual_tensor}, tensor version of Lemma~\ref{lem:opt_cond}] \label{lem:opt_cond_tensor} 
A vector $w \in \R^{n^2}_{\geq 0}$ is an optimal solution to program~\eqref{eq:je_dual_tensor} if and only if the following holds:
\begin{align*}
    \sum_{i=1}^{n^2} \mathsf{w}_i = d^2, \\
    \mathsf{a}_{i}^\top (\sum_{j=1}^{n^2}  \mathsf{w}_{j} \mathsf{a}_{j} \mathsf{a}_{j}^\top)^{-1}  \mathsf{a}_{i}= 1, \mathrm{if}~ \mathsf{w}_{i} \neq 0, \\
   \mathsf{a}_{i}^\top (\sum_{j=1}^{n^2}  \mathsf{w}_{j} \mathsf{a}_{j} \mathsf{a}_{j}^\top)^{-1}  \mathsf{a}_{i}< 1, \mathrm{if}~ \mathsf{w}_{i} = 0. 
\end{align*}
\end{lemma}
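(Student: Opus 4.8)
The plan is to obtain the tensor optimality condition (Lemma~\ref{lem:opt_cond_tensor}) as a direct corollary of the scalar case (Lemma~\ref{lem:opt_cond}) applied to the matrix $\mathsf{A} = A_1 \otimes A_2 \in \R^{n^2 \times d^2}$, rather than redoing the KKT derivation from scratch. The key observation is that program~\eqref{eq:je_dual_tensor} is precisely program~\eqref{eq:dual_opt} with $A$ replaced by $\mathsf{A}$, $n$ replaced by $n^2$, and $d$ replaced by $\rank(\mathsf{A})$. Since we already noted $\rank(\mathsf{A}) = \rank(A_1)\cdot\rank(A_2) = d^2$ (using that $A_1, A_2$ are full rank), the number $d$ in the generic statement becomes $d^2$ here. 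Thus the first step is just to check that the hypotheses of Lemma~\ref{lem:opt_cond} are met: $\mathsf{A}$ is full rank, so $\sum_{j=1}^{n^2}\mathsf{w}_j \mathsf{a}_j\mathsf{a}_j^\top = \mathsf{A}^\top \diag(\mathsf{w})\mathsf{A}$ is invertible whenever $\mathsf{w}$ has enough nonzero entries, which is guaranteed at any feasible point of interest, and the objective is the same convex function.

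Concretely I would proceed as follows. First, invoke Lemma~\ref{lem:dual_classics} (or rather note that Lemma~\ref{lem:dual_tensor_final} is its instantiation) to confirm that \eqref{eq:je_dual_tensor} is the dual of the tensor primal \eqref{eq:je_primal_tensor_final}, which is itself \eqref{eq:john_ell_opt} with $A \leftarrow \mathsf{A}$. Second, apply Lemma~\ref{lem:opt_cond} verbatim to $\mathsf{A}$: a vector $\mathsf{w} \in \R^{n^2}_{\geq 0}$ is optimal for \eqref{eq:je_dual_tensor} iff $\sum_{i=1}^{n^2}\mathsf{w}_i = \rank(\mathsf{A})$, and for each $i \in [n^2]$ the leverage-score-type quantity $\mathsf{a}_i^\top(\sum_{j=1}^{n^2}\mathsf{w}_j\mathsf{a}_j\mathsf{a}_j^\top)^{-1}\mathsf{a}_i$ equals $1$ when $\mathsf{w}_i \neq 0$ and is $< 1$ when $\mathsf{w}_i = 0$. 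Third, substitute $\rank(\mathsf{A}) = d^2$ to get exactly the three displayed equations in the statement. If one prefers a self-contained argument, the same three lines of the proof of Lemma~\ref{lem:opt_cond} in~\cite{todd2016minimum} go through unchanged with $n, d$ replaced by $n^2, d^2$, since nothing there used the product structure.

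I do not anticipate a genuine obstacle: the only point that requires a sentence of justification is that the dimension parameter in the generic optimality condition is the rank of the constraint matrix, and that $\rank(A_1 \otimes A_2) = d^2$, which follows from Fact~\ref{fact:kronecker} (the mixed-product property gives $(A_1\otimes A_2)^\top(A_1\otimes A_2) = (A_1^\top A_1)\otimes(A_2^\top A_2)$, whose eigenvalues are products of eigenvalues of $A_1^\top A_1$ and $A_2^\top A_2$, all positive). The mild subtlety — ensuring the inverse $(\sum_j \mathsf{w}_j \mathsf{a}_j\mathsf{a}_j^\top)^{-1}$ is well-defined — is handled exactly as in the scalar case: at an optimal $\mathsf{w}$ the matrix is positive definite because the objective blows up to $+\infty$ as $\det \to 0$, so optimal solutions keep $\mathsf{A}^\top\diag(\mathsf{w})\mathsf{A} \succ 0$. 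Hence the proof is essentially a two-line reduction, and I would write it that way.

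\begin{proof}
Let $\mathsf{A} = A_1 \otimes A_2 \in \R^{n^2 \times d^2}$. By Fact~\ref{fact:kronecker}, $\mathsf{A}^\top \mathsf{A} = (A_1^\top A_1) \otimes (A_2^\top A_2)$, whose eigenvalues are the pairwise products of the (positive) eigenvalues of $A_1^\top A_1$ and $A_2^\top A_2$; hence $\rank(\mathsf{A}) = d^2$ and $\mathsf{A}$ is full rank. Program~\eqref{eq:je_dual_tensor} is exactly program~\eqref{eq:dual_opt} with $A$ replaced by $\mathsf{A}$, $n$ by $n^2$, and the dimension parameter by $\rank(\mathsf{A}) = d^2$. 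Applying Lemma~\ref{lem:opt_cond} to $\mathsf{A}$ therefore shows that $\mathsf{w} \in \R^{n^2}_{\geq 0}$ is optimal for \eqref{eq:je_dual_tensor} if and only if $\sum_{i=1}^{n^2}\mathsf{w}_i = d^2$ and, for every $i \in [n^2]$, $\mathsf{a}_i^\top(\sum_{j=1}^{n^2}\mathsf{w}_j\mathsf{a}_j\mathsf{a}_j^\top)^{-1}\mathsf{a}_i = 1$ when $\mathsf{w}_i \neq 0$ and $\mathsf{a}_i^\top(\sum_{j=1}^{n^2}\mathsf{w}_j\mathsf{a}_j\mathsf{a}_j^\top)^{-1}\mathsf{a}_i < 1$ when $\mathsf{w}_i = 0$. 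This is precisely the claimed condition.
\end{proof}
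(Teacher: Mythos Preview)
Your proposal is correct and matches the paper's approach: the paper does not give an explicit proof of Lemma~\ref{lem:opt_cond_tensor} at all, simply labeling it as the ``tensor version of Lemma~\ref{lem:opt_cond}'' and relying on the substitution $A \leftarrow \mathsf{A}$, $n \leftarrow n^2$, $d \leftarrow d^2$ (the paper already records $\rank(\mathsf{A}) = \rank(A_1)\cdot\rank(A_2) = d^2$ just before Definition~\ref{def:orig_cen_ell_tensor_final}). Your write-up makes this reduction explicit, which is fine.
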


\begin{definition}[$(1+\epsilon)$-approximate solution, tensor version of Definition~\ref{def:approx_sol}]\label{def:approx_sol_tensor}
For any $\epsilon > 0$, we say that $\mathsf{w} \in \R^{n^2}_{\geq 0}$ is a $(1+\epsilon)$-approximate solution to program~\eqref{eq:dual_opt} if $w$ satisfies
\begin{align*}
    \sum_{i=1}^{n^2} \mathsf{w}_i = (1 \pm \epsilon)d^2, \\
    \mathsf{a}_{i}^\top (\sum_{j=1}^{n^2}  \mathsf{w}_{j} \mathsf{a}_{j} \mathsf{a}_{j}^\top)^{-1}  \mathsf{a}_{i}< 1 + \epsilon, \forall i \in [n^2], 
\end{align*}
\end{definition}

Next, the following lemma states that an approximate solution is a good approximation of the exact John Ellipsoid. For the tensor version, the exact John Ellipsoid $E^*$ satisfies $E^* \subseteq P \subseteq d \cdot E^*$.
\begin{lemma}[Approximate John Ellipsoid, tensor version of Lemma~\ref{lem:approx_john_ell}]\label{lem:approx_john_ell_tensor} 
Let $P := \{x \in \R^d : - \mathbf 1_n \leq Ax \leq \mathbf 1_n\}$ be a symmetric convex polytope. Let $\mathsf{w} \in \R^{n^2}_{\geq 0}$ be a $(1+\epsilon)$-approximate solution to program~\eqref{eq:je_dual_tensor}. Let
\begin{align*}
    E := \{ x \in \R^{d^2} : x^\top (\mathsf{A}^\top \diag(\mathsf{w}) \mathsf{A}) x \leq 1\}.
\end{align*}
Then 
\begin{align*}
    \frac{1}{\sqrt{1+\epsilon}} \cdot E \subseteq P \subseteq \sqrt{(1+\epsilon)} \cdot d \cdot E.
\end{align*}
Moreover,
\begin{align*}
    \vol(\frac{1}{\sqrt{1+\epsilon}}E) \geq e^{-\epsilon d^2} \cdot \vol(E^*).
\end{align*}
\end{lemma}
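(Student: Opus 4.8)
The plan is to repeat the proof of Lemma~\ref{lem:approx_john_ell} essentially verbatim, now viewing $\mathsf{A} = A_1 \otimes A_2$ as an ordinary full-rank matrix in $\R^{n^2 \times d^2}$ (full rank because $\rank(\mathsf{A}) = \rank(A_1)\rank(A_2) = d^2$); the Kronecker structure plays no role beyond fixing the ambient dimension to $d^2$ and the number of constraints to $n^2$. Throughout, set $G^{-2} := \mathsf{A}^\top \diag(\mathsf{w}) \mathsf{A} = \sum_{i=1}^{n^2} \mathsf{w}_i \mathsf{a}_i \mathsf{a}_i^\top$, so that $E = \{x \in \R^{d^2} : x^\top G^{-2} x \leq 1\}$ and $\frac{1}{\sqrt{1+\epsilon}} \cdot E = \{x : x^\top (1+\epsilon) G^{-2} x \leq 1\}$.

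First I would establish $\frac{1}{\sqrt{1+\epsilon}} \cdot E \subseteq P$. For $x$ in this set, $x^\top G^{-2} x \leq \frac{1}{1+\epsilon}$, and for each $i \in [n^2]$ Cauchy--Schwarz gives $(\mathsf{a}_i^\top x)^2 = ((G\mathsf{a}_i)^\top (G^{-1} x))^2 \leq (\mathsf{a}_i^\top G^2 \mathsf{a}_i)(x^\top G^{-2} x) \leq (1+\epsilon) \cdot \frac{1}{1+\epsilon} = 1$, where the last step uses the second bullet of Definition~\ref{def:approx_sol_tensor}, i.e. $\mathsf{a}_i^\top G^2 \mathsf{a}_i \leq 1 + \epsilon$. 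For the reverse inclusion $P \subseteq \sqrt{1+\epsilon} \cdot d \cdot E$: if $x \in P$ then $|\mathsf{a}_i^\top x| \leq 1$ for all $i \in [n^2]$, so by Fact~\ref{fact:rank1} applied to $\mathsf{A}$ and by the first bullet of Definition~\ref{def:approx_sol_tensor}, $x^\top G^{-2} x = \sum_{i=1}^{n^2} \mathsf{w}_i (\mathsf{a}_i^\top x)^2 \leq \sum_{i=1}^{n^2} \mathsf{w}_i \leq (1+\epsilon) d^2$, hence $x \in \sqrt{(1+\epsilon)d^2} \cdot E = \sqrt{1+\epsilon} \cdot d \cdot E$.

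For the volume bound I would reuse the duality-gap argument. Since $\frac{1}{\sqrt{1+\epsilon}} \cdot E \subseteq P$, the matrix $G' := ((1+\epsilon)\mathsf{A}^\top \diag(\mathsf{w}) \mathsf{A})^{-1/2}$ is primal-feasible for program~\eqref{eq:je_primal_tensor_final} and $\mathsf{w}$ is dual-feasible for program~\eqref{eq:je_dual_tensor}. Writing $E^* = \{x : x^\top (G^*)^{-2} x \leq 1\}$, comparing the primal value $\log\det (G')^2$ against the dual value at $\mathsf{w}$, and then bounding $\sum_i \mathsf{w}_i \leq (1+\epsilon)d^2$ and $\log(1+\epsilon) \leq \epsilon$ exactly as in Eq.~\eqref{eq:vol_derive} and Eq.~\eqref{eq:vol_derive_3} (with every $d$ replaced by $d^2$), gives $\log\det(G^*)^2 - \log\det(G')^2 \leq 2\epsilon d^2$ and therefore $\det G' \geq e^{-\epsilon d^2} \det G^*$. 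Since $\frac{1}{\sqrt{1+\epsilon}} \cdot E$ is precisely the origin-centered ellipsoid with shape matrix $G'$ and $\vol(\cdot)$ of such an ellipsoid is proportional to the determinant of its shape matrix, this yields $\vol(\frac{1}{\sqrt{1+\epsilon}} E) \geq e^{-\epsilon d^2} \vol(E^*)$.

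There is no substantive obstacle here; the content is bookkeeping. The one point worth flagging is the uniform shift $n \mapsto n^2$ and $d \mapsto d^2$ (number of constraints, ambient dimension, and the leverage-score sum $\sum_i \sigma_i(\mathsf A) = d^2$), together with the fact that for the exact John ellipsoid of a symmetric body in $\R^{d^2}$ one has $E^* \subseteq P \subseteq \sqrt{d^2} \cdot E^* = d \cdot E^*$, which is why the statement reads $\sqrt{1+\epsilon}\cdot d$ rather than $\sqrt{(1+\epsilon)d}$; this distinction does not enter the proof of the two inclusions and is only relevant when interpreting the volume guarantee against the true John ellipsoid.
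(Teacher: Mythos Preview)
Your proposal is correct and matches the paper's intended approach: the paper states Lemma~\ref{lem:approx_john_ell_tensor} without proof, labeling it the ``tensor version of Lemma~\ref{lem:approx_john_ell}'', so the expected argument is precisely the one you give---rerun the proof of Lemma~\ref{lem:approx_john_ell} with $\mathsf{A}\in\R^{n^2\times d^2}$ in place of $A$, which replaces $n$ by $n^2$ and $d$ by $d^2$ throughout (in particular turning $\sqrt{(1+\epsilon)d}$ into $\sqrt{(1+\epsilon)d^2}=\sqrt{1+\epsilon}\cdot d$ and the volume exponent $-\epsilon d$ into $-\epsilon d^2$). Your flagging of the dimension bookkeeping and the reason the outer radius becomes $\sqrt{1+\epsilon}\cdot d$ is exactly right.
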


\subsection{Decomposition of Tensor Version}
\begin{algorithm*}[!t]
\caption{Quantum Algorithm for approximating John Ellipsoid (Tensor Version)}\label{alg:main_tensor} 
    \begin{algorithmic}[1]
    \Procedure{\textsc{ApproxTensorJE}}{$A_1, A_2 \in \R^{n \times d}, \epsilon \in (0,0.5)$} 
    \State \Comment{Symmetric convex polytope defined as $\{ x \in \R^{d^2}: -\mathbf{1}_{n^2} \leq (A_1 \otimes A_2) x \leq \mathbf{1}_{n^2}\}$}
    \State \Comment{Target approximation error $\epsilon \in (0,1)$}
    \State $w_1 \gets \textsc{ApproxJE}(A_1, \epsilon$) \Comment{Algorithm~\ref{alg:main}}
    \State $w_2 \gets \textsc{ApproxJE}(A_2, \epsilon$)
    \State $\mathsf{w} \gets w_1 \otimes w_2$ \Comment{Lemma~\ref{lem:approx-guar_tensor}}
    \State \Return $\mathsf{w}$ 
    \EndProcedure
    \end{algorithmic}
\end{algorithm*}

\begin{lemma} \label{lem:tensor_optimality}
    There exists unique $\mathsf{w} \in \R^{n^2}$ satisfying the following conditions
\begin{align*}
    \sum_{i=1}^{n^2} \mathsf{w}_i = d^2, \\
    \mathsf{a}_{i}^\top (\sum_{j=1}^{n^2}  \mathsf{w}_{j} \mathsf{a}_{j} \mathsf{a}_{j}^\top)^{-1}  \mathsf{a}_{i}= 1, \mathrm{if}~ \mathsf{w}_{i} \neq 0, \\
   \mathsf{a}_{i}^\top (\sum_{j=1}^{n^2}  \mathsf{w}_{j} \mathsf{a}_{j} \mathsf{a}_{j}^\top)^{-1}  \mathsf{a}_{i}< 1, \mathrm{if}~ \mathsf{w}_{i} = 0. 
\end{align*}
\end{lemma}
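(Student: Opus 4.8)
The plan is to reduce the existence-and-uniqueness of the fixed point $\mathsf{w}$ for the Kronecker-product matrix $\mathsf{A} = A_1 \otimes A_2$ to the corresponding statement for the two factor matrices $A_1$ and $A_2$, which is guaranteed by Fritz John's theorem (equivalently, by the standard existence/uniqueness result for $\ell_\infty$ Lewis weights applied to a full-rank matrix, as used implicitly via Lemma~\ref{lem:opt_cond} and Lemma~\ref{lem:fix_point_opt}). First I would invoke the classical theory for $A_1$ and $A_2$ separately: since each is full rank, there exist unique vectors $w_1 \in \R^n_{\geq 0}$ and $w_2 \in \R^n_{\geq 0}$ satisfying $\sum_i (w_1)_i = d$, $(w_1)_i = (w_1)_i \cdot a_{1,i}^\top(A_1^\top \diag(w_1) A_1)^{-1} a_{1,i}$ for all $i$ (and analogously for $w_2$). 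Then I would set $\mathsf{w} := w_1 \otimes w_2$ and verify it satisfies the three displayed conditions.

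The verification of \emph{existence} is the routine algebraic core. Using Fact~\ref{fact:kronecker}, observe that $\mathsf{A}^\top \diag(\mathsf{w}) \mathsf{A} = (A_1 \otimes A_2)^\top (\diag(w_1) \otimes \diag(w_2)) (A_1 \otimes A_2) = (A_1^\top \diag(w_1) A_1) \otimes (A_2^\top \diag(w_2) A_2)$, hence its inverse factors as a Kronecker product of the two factor inverses. For the index $i \leftrightarrow (i_1, i_2)$ with $\mathsf{a}_i = a_{1,i_1} \otimes a_{2,i_2}$, the mixed-product property gives
\begin{align*}
\mathsf{a}_i^\top \Big(\sum_{j=1}^{n^2} \mathsf{w}_j \mathsf{a}_j \mathsf{a}_j^\top\Big)^{-1} \mathsf{a}_i = \big(a_{1,i_1}^\top (A_1^\top \diag(w_1) A_1)^{-1} a_{1,i_1}\big) \cdot \big(a_{2,i_2}^\top (A_2^\top \diag(w_2) A_2)^{-1} a_{2,i_2}\big),
\end{align*}
i.e. $f_i(\mathsf{w}) = f_{i_1}(w_1) \cdot f_{i_2}(w_2)$. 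Since $\mathsf{w}_i = (w_1)_{i_1} (w_2)_{i_2}$, the condition $\mathsf{w}_i \neq 0$ holds iff both $(w_1)_{i_1} \neq 0$ and $(w_2)_{i_2} \neq 0$, in which case $f_{i_1}(w_1) = f_{i_2}(w_2) = 1$ so the product is $1$; if either factor weight is zero then the corresponding leverage score is $< 1$ while the other is $\leq 1$, so the product is $< 1$. Finally $\sum_i \mathsf{w}_i = (\sum_{i_1}(w_1)_{i_1})(\sum_{i_2}(w_2)_{i_2}) = d \cdot d = d^2$. This establishes existence.

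For \emph{uniqueness} I would argue that the three conditions are exactly the optimality conditions (Lemma~\ref{lem:opt_cond_tensor}) for the convex program~\eqref{eq:je_dual_tensor}, which is a strictly convex problem in the relevant sense: the objective $\sum_i \mathsf{w}_i - \log\det(\sum_i \mathsf{w}_i \mathsf{a}_i \mathsf{a}_i^\top) - d^2$ over $\mathsf{w} \geq 0$ with $\mathsf{A}$ full rank has a unique minimizer, since $-\log\det$ is strictly convex on the positive-definite cone and $\mathsf{A}$ full rank forces $\sum_i \mathsf{w}_i \mathsf{a}_i \mathsf{a}_i^\top \succ 0$ for any $\mathsf{w}$ in the feasible region with positive sum; equivalently one can cite Fritz John's theorem directly, since the John ellipsoid of the polytope defined by $\mathsf{A}$ is unique and the conditions characterize the dual optimum. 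Thus any $\mathsf{w}$ satisfying the displayed conditions must equal the unique optimum, proving uniqueness. The main obstacle is not any single step but making the bookkeeping between the flattened index $i \in [n^2]$ and the pair $(i_1, i_2)$ completely precise throughout, and citing cleanly the classical existence/uniqueness fact for the factor matrices (which the paper has been using only implicitly); everything else is a direct application of Fact~\ref{fact:kronecker} and Proposition~\ref{prop:lev_fork}.
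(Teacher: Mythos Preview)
Your existence argument is correct and in fact anticipates the paper's Lemma~\ref{lem:w_decompose}: you build $\mathsf{w} = w_1 \otimes w_2$ from the factor Lewis weights and verify the three conditions via the mixed-product rule. The paper takes the opposite logical order: it proves Lemma~\ref{lem:tensor_optimality} first by a one-line citation (Lemma~65 of~\cite{ls20a}, uniqueness of $\ell_\infty$ Lewis weights for a full-rank matrix applied directly to $\mathsf{A}$), and only afterwards, in Lemma~\ref{lem:w_decompose}, establishes the Kronecker factorization by comparing against that unique $\mathsf{w}$. Your route is more self-contained for existence and has the pleasant side effect of delivering the decomposition simultaneously.

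The uniqueness argument, however, has a genuine gap. Strict convexity of $-\log\det$ on the positive-definite cone does \emph{not} make $\mathsf{w} \mapsto -\log\det(\sum_i \mathsf{w}_i \mathsf{a}_i \mathsf{a}_i^\top)$ strictly convex in $\mathsf{w}$: the linear map $\mathsf{w} \mapsto \mathsf{A}^\top \diag(\mathsf{w}) \mathsf{A}$ from $\R^{n^2}$ into $\R^{d^2 \times d^2}$ has a nontrivial kernel whenever $n^2$ exceeds $\binom{d^2+1}{2}$, so two distinct $\mathsf{w}$'s can produce the same Gram matrix and the same objective value. Your fallback to Fritz John's theorem has the same defect: John gives uniqueness of the \emph{ellipsoid} (equivalently of $G^{-2} = \mathsf{A}^\top \diag(\mathsf{w}) \mathsf{A}$), not of the weight vector $\mathsf{w}$ itself; the KKT conditions in Lemma~\ref{lem:opt_cond_tensor} only force $\mathsf{w}_i = 0$ on indices with leverage score strictly below $1$, leaving the remaining coordinates underdetermined by your argument. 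Uniqueness of the weights is a separate, nontrivial fact about $\ell_\infty$ Lewis weights, and the paper handles it exactly by citing the external result in~\cite{ls20a} rather than proving it from convexity. You should either invoke that same citation or supply an independent argument that the support conditions together with $\sum_i \mathsf{w}_i \mathsf{a}_i \mathsf{a}_i^\top$ being fixed actually pin down $\mathsf{w}$.
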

\begin{proof}
    By Lemma~65 of~\cite{ls20a}, the $\ell_\infty$ Lewis weight is unique and it uniquely defines the John ellipsoid of a given symmetric convex polytope. Then there must exist unique $\mathsf{w} \in \R^{d^2}$ satisfying the optimality condition in Lemma~\ref{lem:opt_cond_tensor}.
\end{proof}

\begin{lemma}\label{lem:w_1_optimality}
    There exists unique $w_1 \in \R^{n}$ satisfying the following conditions
\begin{align*}
    \sum_{i_1=1}^n w_{1,i_1} = d, \\
    a_{1, i_1}^\top (\sum_{j_1=1}^n w_{1,j_1} a_{1,j_1} a_{1,j_1}^\top)^{-1} a_{1, i_1} = 1, \mathrm{if}~ w_{1,{i_1}} \neq 0, \\
    a_{1, i_1}^\top (\sum_{j_1=1}^n w_{1,j_1} a_{1,j_1} a_{1,j_1}^\top)^{-1} a_{1, i_1} < 1, \mathrm{if}~ w_{1,{i_1}} = 0.
\end{align*}
\end{lemma}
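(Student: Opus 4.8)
The plan is to observe that Lemma~\ref{lem:w_1_optimality} is nothing more than a restatement of the (classical, non-tensor) fixed point optimality condition for the matrix $A_1$, so it follows immediately from the uniqueness of the $\ell_\infty$ Lewis weight applied to $A_1$ rather than to $\mathsf A$. Concretely, the three displayed conditions are exactly the optimality conditions of program~\eqref{eq:dual_opt} (Lemma~\ref{lem:opt_cond}) instantiated with $n$ rows $a_{1,1},\dots,a_{1,n}$ coming from $A_1 \in \R^{n\times d}$.

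First I would invoke Lemma~65 of~\cite{ls20a} (exactly as used in the proof of Lemma~\ref{lem:tensor_optimality}): the $\ell_\infty$ Lewis weight of a symmetric convex polytope is unique, and it is characterized by the fixed point / optimality equations. Applying this to the polytope $P_1 := \{x \in \R^d : -\mathbf 1_n \le A_1 x \le \mathbf 1_n\}$ defined by $A_1$, there is a unique vector $w_1 \in \R^n_{\ge 0}$ satisfying the optimality condition of Lemma~\ref{lem:opt_cond} for the matrix $A_1$. Second, I would note that since $A_1$ is assumed full rank, $\sum_{j_1} w_{1,j_1} a_{1,j_1} a_{1,j_1}^\top = A_1^\top \diag(w_1) A_1$ is invertible (its support includes enough rows because $\sum w_{1,i_1}=d$ and the weighted leverage scores sum to $d$), so the expressions $a_{1,i_1}^\top (\sum_{j_1} w_{1,j_1} a_{1,j_1} a_{1,j_1}^\top)^{-1} a_{1,i_1}$ are well-defined; these are precisely $\sigma_{i_1}(\sqrt{\diag(w_1)}A_1)$, which lie in $(0,1]$ by Proposition~\ref{prop:lev_fork}. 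Matching this against Lemma~\ref{lem:opt_cond} gives that the three displayed equations are equivalent to $w_1$ being optimal for program~\eqref{eq:dual_opt} with data $A_1$, and hence existence and uniqueness transfer verbatim.

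There is essentially no real obstacle here; the only thing to be slightly careful about is the well-definedness of the inverse (hence the full-rank assumption on $A_1$) and the fact that the equality/strict-inequality dichotomy in the statement coincides exactly with the dichotomy in Lemma~\ref{lem:opt_cond} — both come from the complementary slackness $w_{1,i_1}(1 - f_{i_1}(w_1)) = 0$ together with $f_{i_1}(w_1) \in (0,1]$. So the proof is a two-line reduction: "Apply Lemma~65 of~\cite{ls20a} to $A_1$; by Lemma~\ref{lem:opt_cond} (equivalently Lemma~\ref{lem:fix_point_opt}) the resulting unique $\ell_\infty$ Lewis weight $w_1$ of $A_1$ satisfies the stated conditions, and conversely any such $w_1$ is the $\ell_\infty$ Lewis weight of $A_1$, hence unique." This lemma, together with the analogous Lemma for $w_2$ (obtained by symmetry, replacing $A_1$ with $A_2$), sets up the decomposition $\mathsf w = w_1 \otimes w_2$ used in Algorithm~\ref{alg:main_tensor}: one then checks, using the Kronecker identities in Fact~\ref{fact:kronecker} (in particular $(A_1\otimes A_2)^\top \diag(w_1\otimes w_2)(A_1\otimes A_2) = (A_1^\top \diag(w_1) A_1)\otimes(A_2^\top\diag(w_2)A_2)$ and $(M\otimes N)^{-1} = M^{-1}\otimes N^{-1}$), that $w_1\otimes w_2$ satisfies the optimality conditions of Lemma~\ref{lem:tensor_optimality}, and by the uniqueness just established it must equal the tensor $\ell_\infty$ Lewis weight.
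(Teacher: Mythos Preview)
Your proposal is correct and matches the paper's approach: the paper states Lemma~\ref{lem:w_1_optimality} without proof, treating it as the obvious specialization to $A_1$ of the same argument given for Lemma~\ref{lem:tensor_optimality} (namely, invoking Lemma~65 of~\cite{ls20a} on the uniqueness of the $\ell_\infty$ Lewis weight together with the optimality characterization of Lemma~\ref{lem:opt_cond}). Your additional remarks about well-definedness of the inverse and the complementary-slackness dichotomy are fine elaborations but not needed beyond what the paper already assumes.
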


\begin{lemma} \label{lem:w_2_optimality}
    There exists unique $w_2 \in \R^{n}$ satisfying the following conditions
\begin{align*}
    \sum_{i_2=1}^n w_{2,i_2} = d, \\
    a_{1, i_2}^\top (\sum_{j_2=1}^n w_{2,j_2} a_{2,j_2} a_{2,j_2}^\top)^{-1} a_{i_2} = 1, \mathrm{if}~ w_{2, i_2} \neq 0, \\
    a_{1, i_2}^\top (\sum_{j_2=1}^n w_{2,j_2} a_{2,j_2} a_{2,j_2}^\top)^{-1} a_{1, i_2} < 1, \mathrm{if}~ w_{2,i_2} = 0.
\end{align*}
\end{lemma}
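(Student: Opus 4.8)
The plan is to observe that the three displayed conditions are nothing more than the optimality conditions of the dual program~\eqref{eq:dual_opt} instantiated with the matrix $A = A_2$, exactly as characterized in Lemma~\ref{lem:opt_cond}. Writing $w = w_2$ and replacing each row $a_i$ by $a_{2,i}$, the requirements $\sum_{i_2} w_{2,i_2} = d$, together with $a_{2,i_2}^\top (\sum_{j_2} w_{2,j_2} a_{2,j_2} a_{2,j_2}^\top)^{-1} a_{2,i_2} = 1$ whenever $w_{2,i_2} \neq 0$ and $< 1$ whenever $w_{2,i_2} = 0$, are verbatim the conclusion of Lemma~\ref{lem:opt_cond} applied to $A_2$. (Here I read the stray $a_{1,i_2}$ in the statement as $a_{2,i_2}$, which is forced by the context.) Hence the set of vectors $w_2$ satisfying the lemma coincides with the set of optimal solutions of the convex program~\eqref{eq:dual_opt} for $A_2$.

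For existence, I would note that the John ellipsoid of the symmetric polytope $\{x \in \R^d : -\mathbf{1}_n \leq A_2 x \leq \mathbf{1}_n\}$ exists by Fritz John's theorem (using that $A_2$ is full rank), and by the primal/dual equivalence set up in Section~\ref{sub:primal_dual} its induced weight vector is a feasible point of program~\eqref{eq:dual_opt} meeting the optimality conditions; equivalently, the objective of~\eqref{eq:dual_opt} is bounded below and coercive on the feasible cone, so a minimizer exists.

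For uniqueness, I would invoke Lemma~65 of~\cite{ls20a}, which states that the $\ell_\infty$ Lewis weight of a full-rank matrix is unique and uniquely determines the John ellipsoid of the associated symmetric polytope. Applying this to $A_2$ shows that at most one $w_2 \in \R^n_{\geq 0}$ can satisfy the optimality conditions of Lemma~\ref{lem:opt_cond}. This is precisely the argument already used in the proof of Lemma~\ref{lem:tensor_optimality}, only with $\mathsf{A} = A_1 \otimes A_2$ replaced by $A_2$; indeed Lemmas~\ref{lem:w_1_optimality} and~\ref{lem:w_2_optimality} are the identical statement for $A_1$ and $A_2$ respectively, so one proof text covers both.

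The only real subtlety is bookkeeping: aligning the index conventions and confirming that the hypothesis of Lemma~65 of~\cite{ls20a} (full rank of $A_2$) is in force, which we have assumed throughout. There is no genuine analytic obstacle, since everything reduces to the already-cited uniqueness of $\ell_\infty$ Lewis weights together with the equivalence between program~\eqref{eq:dual_opt} and the John ellipsoid problem.
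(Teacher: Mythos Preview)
Your proposal is correct and matches the paper's (implicit) approach: the paper states Lemma~\ref{lem:w_2_optimality} without proof, leaving it as the obvious specialization of the argument in Lemma~\ref{lem:tensor_optimality} to $A_2$, which is precisely what you spell out. Your observation about the $a_{1,i_2}$ typo is also correct.
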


\begin{lemma}\label{lem:w_decompose}
    Let $\mathsf{w}$ be defined in Lemma~\ref{lem:tensor_optimality}. Let $w_1$, $w_2$ be defined in Lemma~\ref{lem:w_1_optimality}, \ref{lem:w_2_optimality}, respectively. Then we have $\mathsf{w} = w_1 \otimes w_2$.
\end{lemma}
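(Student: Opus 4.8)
The plan is to verify that $w_1 \otimes w_2$ satisfies the three optimality conditions of Lemma~\ref{lem:tensor_optimality}, and then invoke uniqueness (Lemma~\ref{lem:tensor_optimality}) to conclude $\mathsf{w} = w_1 \otimes w_2$. The main computational engine is Fact~\ref{fact:kronecker}, in particular the mixed-product identity $(A \otimes B)(C \otimes D) = (AC) \otimes (BD)$ and its corollary for inverses and for $\diag(\cdot)$.

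First I would set up the Gram matrix. Writing $M_1 := \sum_{j_1=1}^n w_{1,j_1} a_{1,j_1} a_{1,j_1}^\top = A_1^\top \diag(w_1) A_1$ and $M_2 := A_2^\top \diag(w_2) A_2$, a rank-1 decomposition (Fact~\ref{fact:rank1}) together with the Kronecker identities gives
\begin{align*}
    \sum_{i=1}^{n^2} (w_1 \otimes w_2)_i \, \mathsf{a}_i \mathsf{a}_i^\top
    = \mathsf{A}^\top \diag(w_1 \otimes w_2) \mathsf{A}
    = (A_1 \otimes A_2)^\top (\diag(w_1) \otimes \diag(w_2)) (A_1 \otimes A_2)
    = M_1 \otimes M_2.
\end{align*}
Since $A_1, A_2$ are full rank and $w_1, w_2$ are the $\ell_\infty$ Lewis weights (hence strictly positive on the relevant support, in particular $M_1, M_2 \succ 0$), $M_1 \otimes M_2$ is invertible with $(M_1 \otimes M_2)^{-1} = M_1^{-1} \otimes M_2^{-1}$. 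Then for the index $i = i_1 + (i_2-1)n$, using $\mathsf{a}_i = a_{1,i_1} \otimes a_{2,i_2}$ and the mixed-product rule twice,
\begin{align*}
    \mathsf{a}_i^\top (M_1 \otimes M_2)^{-1} \mathsf{a}_i
    = (a_{1,i_1} \otimes a_{2,i_2})^\top (M_1^{-1} \otimes M_2^{-1}) (a_{1,i_1} \otimes a_{2,i_2})
    = \big(a_{1,i_1}^\top M_1^{-1} a_{1,i_1}\big) \cdot \big(a_{2,i_2}^\top M_2^{-1} a_{2,i_2}\big).
\end{align*}
Call the two scalar factors $f_{1,i_1}$ and $f_{2,i_2}$; by Lemmas~\ref{lem:w_1_optimality} and \ref{lem:w_2_optimality} each equals $1$ on its support and is $<1$ off its support. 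Since $(w_1 \otimes w_2)_i = w_{1,i_1} w_{2,i_2}$, this index is in the support of $w_1 \otimes w_2$ iff both $i_1 \in \supp(w_1)$ and $i_2 \in \supp(w_2)$; in that case the product $f_{1,i_1} f_{2,i_2} = 1 \cdot 1 = 1$, and otherwise at least one factor is $<1$ while the other is in $(0,1]$, so the product is $<1$. This is exactly the second and third conditions. For the first condition, $\sum_{i=1}^{n^2}(w_1 \otimes w_2)_i = \big(\sum_{i_1} w_{1,i_1}\big)\big(\sum_{i_2} w_{2,i_2}\big) = d \cdot d = d^2$.

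The argument is essentially routine bookkeeping once the Kronecker identities are in place; the one point that needs care is the off-support case of the product condition — specifically, confirming that the surviving factor stays bounded by $1$ (it does, since each $f_{1,i_1}, f_{2,i_2} \in (0,1]$ by Proposition~\ref{prop:lev_fork} applied to $\sqrt{\diag(w_\ell)}A_\ell$), so a value $<1$ multiplied by something in $(0,1]$ is still $<1$. Having checked all three conditions, $w_1 \otimes w_2$ satisfies the system characterizing $\mathsf{w}$, and uniqueness from Lemma~\ref{lem:tensor_optimality} forces $\mathsf{w} = w_1 \otimes w_2$.
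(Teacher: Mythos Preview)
Your proof is correct and follows essentially the same approach as the paper: verify that $w_1 \otimes w_2$ satisfies the three optimality conditions of Lemma~\ref{lem:tensor_optimality} via the Kronecker mixed-product identities (Fact~\ref{fact:kronecker}), then invoke uniqueness. One minor quibble: your appeal to Proposition~\ref{prop:lev_fork} for the bound $f_{\ell,i_\ell} \in (0,1]$ in the off-support case is not quite the right citation (when $w_{\ell,i_\ell}=0$ the quantity $a_{\ell,i_\ell}^\top M_\ell^{-1} a_{\ell,i_\ell}$ is not a leverage score of $\sqrt{\diag(w_\ell)}A_\ell$), but the bound $\leq 1$ is already supplied directly by the optimality conditions in Lemmas~\ref{lem:w_1_optimality}--\ref{lem:w_2_optimality}, so the argument goes through unchanged.
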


\begin{proof}

Firstly, we multiply the first conditions in Lemma~\ref{lem:w_1_optimality} and Lemma~\ref{lem:w_2_optimality}. Then we have
\begin{align*}
\sum_{i_1 = 1}^n \sum_{i_2 = 1}^n w_{1,i_1} w_{2, i_2} = d^2.
\end{align*}

Now by Fact~\ref{fact:rank1}, we have
\begin{align*}
&~ (a_{1,i_1}^\top (A_1^\top \diag (w_1) A_1)^{-1} a_{1,i_1}) \cdot (a_{2,i_2}^\top (A_2^\top \diag (w_2) A_2)^{-1} a_{2,i_2}) \\
= &~ (a_{1,i_1}^\top (\sum_{j_1=1}^n w_{1,j_1} a_{1,j_1} a_{1,j_1}^\top)^{-1} a_{1,i_1})
(a_{2,i_2}^\top (\sum_{j_2=1}^n w_{2,j_2} a_{2,j_2} a_{2,j_2}^\top)^{-1} a_{2, i_2}) 
\end{align*}
and
\begin{align*}
    &~ \mathsf{a}_{i_1+(i_2-1)n}^\top ( \mathsf{A} \diag(w_1 \otimes w_2) \mathsf{A} )^{-1} \mathsf{a}_{i_1+(i_2-1)n}
    \\ = &~
    \mathsf{a}_{i_1+(i_2-1)n}^\top (\sum_{j_1=1}^{n} \sum_{j_2=1}^n w_{j_1}w_{j_2} \mathsf{a}_{j_1+(j_2-1)n} \mathsf{a}_{j_1+(j_2-1)n}^\top)^{-1}  \mathsf{a}_{i_1+(i_2-1)n}.
\end{align*}

Next, for each pair of $i_1, i_2$, we multiply the corresponding second or third conditions. Then we have
\begin{align*}
&~ (a_{1,i_1}^\top (A_1^\top \diag (w_1) A_1)^{-1} a_{1,i_1}) \cdot (a_{2,i_2}^\top (A_2^\top \diag (w_2) A_2)^{-1} a_{2,i_2}) \\
= & ~ (a_{1,i_1}^\top (A_1^\top \diag (w_1) A_1)^{-1} a_{1,i_1}) \otimes (a_{2,i_2}^\top (A_2^\top \diag (w_2) A_2)^{-1} a_{2,i_2}) \\
= & ~ (a_{1,i_1}^\top \otimes a_{2, i_2}^\top) ((A_1^\top \diag (w_1) A_1 )^{-1} \otimes (A_2^\top \diag (w_2) A_2)^{-1}) (a_{1,i_1} \otimes a_{2, i_2}) \\
= & ~ (a_{1,i_1} \otimes a_{2, i_2})^\top ((A_1^\top \diag (w_1) A_1 ) \otimes (A_2^\top \diag (w_2) A_2))^{-1} (a_{1,i_1} \otimes a_{2, i_2}) \\
= & ~ (a_{1,i_1} \otimes a_{2, i_2})^\top ( (A_1 \otimes A_2)^\top (\diag(w_1) \otimes \diag(w_2)) (A_1 \otimes A_2) )^{-1} (a_{1,i_1} \otimes a_{2, i_2}) \\
= & ~ (a_{1,i_1} \otimes a_{2, i_2})^\top ( (A_1 \otimes A_2)^\top \diag(w_1 \otimes w_2) (A_1 \otimes A_2) )^{-1} (a_{1,i_1} \otimes a_{2, i_2}) \\
= & ~ \mathsf{a}_{i_1+(i_2-1)n}^\top ( \mathsf{A} \diag(w_1 \otimes w_2) \mathsf{A} )^{-1} \mathsf{a}_{i_1+(i_2-1)n}
\end{align*}
where the first step uses the fact that Kronecker product is equivalent to general product in scalar case, the second step uses Fact~\ref{fact:kronecker}, the third step follows from basic algebra, the fourth and fifth steps use Fact~\ref{fact:kronecker}, the last step is due to the definition of $\mathsf{a_{i,j}}$.

Hence we have
\begin{align*}
  &~  (a_{1, i_1}^\top (\sum_{j_1=1}^n w_{1,j_1} a_{1,j_1} a_{1,j_1}^\top)^{-1} a_{1, i_1})
(a_{2, i_2}^\top (\sum_{j_2=1}^n w_{2,j_2} a_{2,j_2} a_{2,j_2}^\top)^{-1} a_{2,i_2}) \\ 
= &~ \mathsf{a}_{i_1+(i_2-1)n}^\top (\sum_{j_1=1}^{n} \sum_{j_2=1}^n w_{1, j_1}w_{2, j_2} \mathsf{a}_{j_1+(j_2-1)n}\mathsf{a}_{j_1+(j_2-1)n}^\top)^{-1}  \mathsf{a}_{i_1+(i_2-1)n}.
\end{align*}
Note that the above equation is less than $1$ if either $w_{1,i_1} = 0$ or $w_{2,i_2} = 0$, and the above equation is equal to $1$ if both $w_{1,i_1}, w_{2,i_2}$ are not zero. Thus we have
\begin{align*}
    \mathsf{a}_{i_1+(i_2-1)n}^\top (\sum_{j_1=1}^{n} \sum_{j_2=1}^n w_{1, j_1}w_{2, j_2} \mathsf{a}_{j_1+(j_2-1)n}\mathsf{a}_{j_1+(j_2-1)n}^\top)^{-1}  \mathsf{a}_{i_1+(i_2-1)n} = 1,\mathrm{if}~ w_{1,i_1}w_{2,i_2} \neq 0, \\
    \mathsf{a}_{i_1+(i_2-1)n}^\top (\sum_{j_1=1}^{n} \sum_{j_2=1}^n w_{1, j_1}w_{2, j_2} \mathsf{a}_{j_1+(j_2-1)n}\mathsf{a}_{j_1+(j_2-1)n}^\top)^{-1}  \mathsf{a}_{i_1+(i_2-1)n} < 1,\mathrm{if}~ w_{1,i_1}w_{2,i_2} = 0.
\end{align*}
Thus we have the following conditions for $w_1, w_2$:
\begin{align}
\label{eq:tmp1}
    \sum_{i_1 = 1}^n \sum_{i_2 = 1}^n w_{1,i_1} w_{2, i_2} = d^2. \notag \\
    \mathsf{a}_{i_1+(i_2-1)n}^\top (\sum_{j_1=1}^{n} \sum_{j_2=1}^n w_{1, j_1}w_{2, j_2} \mathsf{a}_{j_1+(j_2-1)n}\mathsf{a}_{j_1+(j_2-1)n}^\top)^{-1}  \mathsf{a}_{i_1+(i_2-1)n} = 1,\mathrm{if}~ w_{1,i_1}w_{2,i_2} \neq 0, \\
    \mathsf{a}_{i_1+(i_2-1)n}^\top (\sum_{j_1=1}^{n} \sum_{j_2=1}^n w_{1, j_1}w_{2, j_2} \mathsf{a}_{j_1+(j_2-1)n}\mathsf{a}_{j_1+(j_2-1)n}^\top)^{-1}  \mathsf{a}_{i_1+(i_2-1)n} < 1,\mathrm{if}~ w_{1,i_1}w_{2,i_2} = 0 \notag.
\end{align}
By Lemma~\ref{lem:tensor_optimality}, there exists $\mathsf w \in \R^{n^2}$ which satisfies
\begin{align}
\label{eq:tmp2}
    \sum_{i=1}^{n^2} \mathsf{w}_i = d^2, \notag \\
    \mathsf{a}_{i}^\top (\sum_{j=1}^{n^2}  \mathsf{w}_{j} \mathsf{a}_{j} \mathsf{a}_{j}^\top)^{-1}  \mathsf{a}_{i}= 1, \mathrm{if}~ \mathsf{w}_{i} \neq 0, \\
   \mathsf{a}_{i}^\top (\sum_{j=1}^{n^2}  \mathsf{w}_{j} \mathsf{a}_{j} \mathsf{a}_{j}^\top)^{-1}  \mathsf{a}_{i}< 1, \mathrm{if}~ \mathsf{w}_{i} = 0 \notag. 
\end{align}
Comparing Eq.~\eqref{eq:tmp1} and Eq.~\eqref{eq:tmp2} and reindexing, we must have $\mathsf{w} = w_1 \otimes w_2$.
\end{proof}

The key message of Lemma~\ref{lem:w_decompose} is that solving the tensor version of the John ellipsoid problem can be decomposed to solving two ordinary John ellipsoid problems. Hence the techniques we developed can be used to speed up the tensor version of the John ellipsoid problem. 

\subsection{Algorithm and Approximation Guarantee}
We show that Algorithm~\ref{alg:main_tensor} can output a good approximation of the tensor version of the John ellipsoid problem.

\begin{lemma}[Approximation guarantee, tensor version of Lemma~\ref{lem:approx-guar}]\label{lem:approx-guar_tensor}
For any $\epsilon \in (0,0.5)$, with probability $1 - 1/\poly(n)$, Algorithm~\ref{alg:main_tensor} outputs a $(1+\epsilon)^2$-approximate solution.
\end{lemma}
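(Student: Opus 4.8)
The plan is to reduce the tensor version directly to the single-matrix guarantee already proved in Lemma~\ref{lem:approx-guar}, using the decomposition $\mathsf{w} = w_1 \otimes w_2$ established in Lemma~\ref{lem:w_decompose}. First I would invoke Lemma~\ref{lem:approx-guar} twice: once for the call $\textsc{ApproxJE}(A_1,\epsilon)$ and once for $\textsc{ApproxJE}(A_2,\epsilon)$. With probability $1-1/\poly(n)$ (after a union bound over the two calls), the outputs $w_1, w_2 \in \R^n_{\geq 0}$ are $(1+\epsilon)$-approximate solutions to program~\eqref{eq:dual_opt} for $A_1$ and $A_2$ respectively, i.e.
\begin{align*}
    \sum_{i_1=1}^n w_{1,i_1} = (1\pm\epsilon)d, \quad a_{1,i_1}^\top (A_1^\top \diag(w_1) A_1)^{-1} a_{1,i_1} \leq 1+\epsilon, \ \forall i_1 \in [n],
\end{align*}
and similarly for $w_2$.

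Next I would check that $\mathsf{w} = w_1 \otimes w_2$ satisfies Definition~\ref{def:approx_sol_tensor} with parameter $(1+\epsilon)^2 - 1$. For the sum condition, $\sum_{i=1}^{n^2}\mathsf{w}_i = (\sum_{i_1}w_{1,i_1})(\sum_{i_2}w_{2,i_2}) = (1\pm\epsilon)^2 d^2$, which is $(1\pm(2\epsilon+\epsilon^2))d^2 = (1\pm((1+\epsilon)^2-1))d^2$. For the leverage-score-type condition, I would run exactly the algebraic identity used in the proof of Lemma~\ref{lem:w_decompose}: using Fact~\ref{fact:kronecker} repeatedly,
\begin{align*}
    \mathsf{a}_i^\top (\mathsf{A}^\top \diag(w_1\otimes w_2)\mathsf{A})^{-1}\mathsf{a}_i
    = \big(a_{1,i_1}^\top (A_1^\top \diag(w_1)A_1)^{-1}a_{1,i_1}\big)\cdot\big(a_{2,i_2}^\top (A_2^\top \diag(w_2)A_2)^{-1}a_{2,i_2}\big),
\end{align*}
so the product of two quantities each at most $1+\epsilon$ is at most $(1+\epsilon)^2$. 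Hence $\mathsf{w}$ is a $(1+\epsilon)^2$-approximate solution in the sense of Definition~\ref{def:approx_sol_tensor} (reading "$(1+\eta)$-approximate" with $\eta = (1+\epsilon)^2-1$), which is exactly the claim. Finally I would note the success probability is $1-1/\poly(n)$ by the union bound in the first step.

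The main obstacle is essentially bookkeeping rather than conceptual: one must be careful about what "$(1+\epsilon)^2$-approximate" means, since Definition~\ref{def:approx_sol_tensor} as literally stated uses the parameter $\epsilon$ in both the sum bound $(1\pm\epsilon)d^2$ and the threshold $1+\epsilon$, so the statement should be read as substituting $\epsilon \mapsto (1+\epsilon)^2-1$ throughout. The other point requiring a little care is that the Kronecker identity for the quadratic forms needs $A_1^\top\diag(w_1)A_1$ and $A_2^\top\diag(w_2)A_2$ to be invertible; this holds because $A_1, A_2$ are full rank and the outputs $w_1, w_2$ are strictly positive (the initialization $w^{(1)}_i = d/n > 0$ in Algorithm~\ref{alg:main} and the multiplicative updates keep every coordinate strictly positive, and the final average preserves positivity), so $\mathsf{A}^\top\diag(w_1\otimes w_2)\mathsf{A}$ is also invertible. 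Everything else is a direct application of Fact~\ref{fact:kronecker} exactly as in the proof of Lemma~\ref{lem:w_decompose}.
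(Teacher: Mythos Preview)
Your proposal is correct and follows essentially the same approach as the paper: invoke Lemma~\ref{lem:approx-guar} for each of $A_1, A_2$, then multiply the two sum conditions and the two leverage-score conditions using the Kronecker identity from the proof of Lemma~\ref{lem:w_decompose} to obtain the $(1+\epsilon)^2$ bounds for $\mathsf{w} = w_1 \otimes w_2$. Your treatment is in fact slightly more careful than the paper's (you make the union bound and the invertibility of $A_\ell^\top \diag(w_\ell) A_\ell$ explicit), but the argument is the same.
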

\begin{proof}
By Lemma~\ref{lem:approx-guar}, $w_1$ and $w_2$ are $(1+\epsilon)$ approximations of the John ellipsoid problems defined by $A_1$ and $A_2$, respectively.
Hence we have
\begin{align*}
    \sum_{i_1=1}^n w_{1,i_1} = (1 \pm \epsilon)d, \\
    a_{1, i_1}^\top (\sum_{j_1=1}^n w_{1,j_1} a_{1,j_1} a_{1,j_1}^\top)^{-1} a_{1, i_1} \leq 1 + \epsilon, \forall i_1 \in [n], 
\end{align*}
and
\begin{align*}
    \sum_{i_2=1}^n w_{2,i_2} = (1 \pm \epsilon)d, \\
    a_{2, i_2}^\top (\sum_{j_2=1}^n w_{2,j_2} a_{2,j_2} a_{2,j_2}^\top)^{-1} a_{2, i_2} \leq 1 + \epsilon, \forall i_2 \in [n].
\end{align*}
Let $\mathsf{w} = w_1 \otimes w_2$. By the similar argument in the proof of Lemma~\ref{lem:w_decompose}, multiplying each pair from the above equations give
\begin{align*}
    \sum_{i=1}^{n^2}\mathsf{w}_{i}= (1 \pm \epsilon)^2d^2, \\
    \mathsf{a}_{i}^\top (\sum_{j=1}^{n^2} w_{j}\mathsf{a}_{j} \mathsf{a}_{j}^\top)^{-1}  \mathsf{a}_{i} \leq (1 + \epsilon)^2, \forall {i} \in [n^2].
\end{align*}
By Definition~\ref{def:approx_sol_tensor}, $\mathsf{w}$ is a $(1+\epsilon)^2$ approximate solution.
\end{proof}

\begin{theorem}[Main result, tensor version of Theorem~\ref{thm:main}]\label{thm:main_tensor}
Given $A_1, A_2 \in \R^{n \times d}$, let $\mathsf{A} = A_1 \otimes A_2$ and let $P := \{x \in \R^{d^2} : -\mathbf{1}_{n^2} \leq \mathsf{A} x \leq \mathbf{1}_{n^2}\}$ be a symmetric convex polytope.
    For all $\epsilon \in (0,0.5)$, Algorithm~\ref{alg:main_tensor} outputs $\mathsf{w} \in \R^{n^2}$ such that the ellipsoid $E := \{x \in \R^{d^2}: x^\top(\mathsf{A}^\top\diag(\mathsf{w})\mathsf{A})x \leq 1\}$ satisfies 
    \begin{itemize}
        \item $ \frac{1}{1+\epsilon} \cdot E \subseteq P \subseteq (1+\epsilon)d \cdot E. $
        \item $\vol(\frac{1}{1+\epsilon}E) \geq e^{-\epsilon^2 d^2} \cdot \vol(E^*)$ where $E^*$ is the exact John ellipsoid of $P$.
        \item It runs in $\wt{O}((\epsilon^{-2}\sqrt{n}d^{1.5}+\epsilon^{-3}d^\omega)$ time.
        \item It makes $\Tilde{O}(\epsilon^{-2}\sqrt{nd})$ queries to $A_1$ and $A_2$.
        \item The success probability is $1 - 1/\poly(n)$.
    \end{itemize}
\end{theorem}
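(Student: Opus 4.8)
The plan is to derive Theorem~\ref{thm:main_tensor} by combining two things already established: the approximation guarantee for Algorithm~\ref{alg:main_tensor} (Lemma~\ref{lem:approx-guar_tensor}, which in turn rests on the exact Kronecker factorization of the optimal weight, Lemma~\ref{lem:w_decompose}) and the geometric/volume consequences of an approximate dual solution in the tensor setting (Lemma~\ref{lem:approx_john_ell_tensor}); and, for the complexity claims, the one-matrix guarantee of Theorem~\ref{thm:main} applied to the two subcalls. Concretely, the five bullets split into a \emph{correctness} part (the two inclusions and the volume lower bound), handled by the first pair of lemmas, and an \emph{efficiency} part (running time, query count, success probability), handled by Theorem~\ref{thm:main}.

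For correctness I would first invoke Lemma~\ref{lem:approx-guar_tensor}: with probability $1-1/\poly(n)$ the output $\mathsf w = w_1 \otimes w_2$ of Algorithm~\ref{alg:main_tensor} is a $(1+\epsilon)^2$-approximate solution to the tensor dual program~\eqref{eq:je_dual_tensor}. Plugging this into Lemma~\ref{lem:approx_john_ell_tensor} with effective error $\epsilon'$ given by $1+\epsilon' = (1+\epsilon)^2$ immediately yields $\tfrac{1}{\sqrt{1+\epsilon'}}\,E \subseteq P \subseteq \sqrt{1+\epsilon'}\cdot d\cdot E$; since $\sqrt{1+\epsilon'} = 1+\epsilon$, this is exactly the claimed sandwich $\tfrac{1}{1+\epsilon}\,E \subseteq P \subseteq (1+\epsilon)\,d\cdot E$. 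For the volume statement I would re-run the weak-duality computation from the proof of Lemma~\ref{lem:approx_john_ell_tensor}: the feasible primal point $G' = ((1+\epsilon)^2\,\mathsf A^\top \diag(\mathsf w)\mathsf A)^{-1/2}$ together with the feasible dual point $\mathsf w$ has a duality gap that is controlled (via $\log(1+x)\le x$ and $\sum_i\mathsf w_i \le (1+\epsilon)^2 d^2$) by an $O(\epsilon\, d^2)$-type quantity; translating this back through $\log\det$ and the proportionality of $\vol$ to the determinant of the shape matrix gives the stated lower bound $\vol(\tfrac{1}{1+\epsilon}E)\ge e^{-\epsilon^2 d^2}\vol(E^*)$.

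For efficiency, observe that Algorithm~\ref{alg:main_tensor} makes exactly two calls to \textsc{ApproxJE}, on $A_1\in\R^{n\times d}$ and $A_2\in\R^{n\times d}$ with the same error $\epsilon$. By Theorem~\ref{thm:main}, each call runs in $\wt O(\epsilon^{-2}\sqrt n\, d^{1.5} + \epsilon^{-3}d^\omega)$ time, uses $\wt O(\epsilon^{-2}\sqrt{nd})$ row queries (to $A_1$, resp.\ $A_2$), and succeeds with probability $1-1/\poly(n)$; adding two copies of these bounds changes only constants, and a union bound over the two calls keeps the total failure probability $1/\poly(n)$. The one point requiring care is that $\mathsf w = w_1\otimes w_2$ lives in $\R^{n^2}$, so it must not be written out explicitly — that alone would cost $\Theta(n^2)$ and destroy sublinearity in $n$. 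Instead the algorithm returns the pair $(w_1,w_2)$ and treats $\mathsf w$ implicitly, with any coordinate $\mathsf w_{i_1+(i_2-1)n}=w_{1,i_1}w_{2,i_2}$ available in $O(1)$ time on demand; this is also the representation in which the ellipsoid $E$ is used, since $\mathsf A^\top\diag(\mathsf w)\mathsf A = (A_1^\top\diag(w_1)A_1)\otimes(A_2^\top\diag(w_2)A_2)$. The same argument with $m=O(1)$ calls covers the general $m$-fold Kronecker product mentioned after Lemma~\ref{lem:dual_tensor_final}.

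The main obstacle is not a new idea but careful bookkeeping of the approximation factor: the decomposition naturally produces a $(1+\epsilon)^2$-approximate dual solution rather than a $(1+\epsilon)$-approximate one, and one has to check that this still collapses to precisely the $\tfrac{1}{1+\epsilon}$ / $(1+\epsilon)d$ inclusions (it does, because the factor enters under a square root) and to the stated volume bound (which follows from the weak-duality gap estimate). Everything else is routine, provided one is disciplined about never materializing the length-$n^2$ vector $\mathsf w$; Lemma~\ref{lem:w_decompose} is exactly what makes the implicit Kronecker-factored output legitimate and keeps the whole procedure sublinear in $n$.
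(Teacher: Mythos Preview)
Your proposal is correct and follows essentially the same approach as the paper: correctness via Lemma~\ref{lem:approx-guar_tensor} combined with Lemma~\ref{lem:approx_john_ell_tensor}, and efficiency by bounding the two \textsc{ApproxJE} subcalls. The paper cites Lemma~\ref{lem:qua_fix_point_iter} directly for the complexity while you go through Theorem~\ref{thm:main}, but these are equivalent; your extra care about the $(1+\epsilon)^2$-to-$(1+\epsilon)$ collapse under the square root and about keeping $\mathsf w = w_1\otimes w_2$ implicit (rather than materializing an $n^2$-vector) are points the paper glosses over, so your write-up is if anything more complete.
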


\begin{proof}
    The correctness of main result directly follows from Lemma~\ref{lem:approx-guar_tensor} and Lemma~\ref{lem:approx_john_ell_tensor}.
    The time complexity and number of queries follow from Lemma~\ref{lem:qua_fix_point_iter}.
\end{proof}  
\section{Conclusion}\label{sec:conc}
In conclusion, we have developed the first quantum algorithm for approximating the John Ellipsoid of a symmetric convex polytope that achieves quadratic speedup and leads to a sublinear running time. We use quantum sketching techniques to speedup the classical state-of-the-art $O(\nnz(A)+d^\omega)$ time~\cite{song2022faster} and achieve $O(\sqrt{n}d^{1.5} + d^\omega)$ time. Quantum computing has emerged as a powerful tool for designing exceptionally fast optimization algorithms in high-dimensional settings. Future work could explore the extension of these techniques to other geometric optimization problems and further enhancing the efficiency and practicality of quantum algorithms.

\newpage

\ifdefined\isarxiv
\bibliographystyle{alpha}
\bibliography{ref}
\else
\bibliography{ref}
\bibliographystyle{alpha}

\fi

\newpage
\onecolumn




\end{document}